\title{
Evaluating the Efficiency of Regulation in Matching Markets with Distributional Disparities
\thanks{We thank Alfred Galichon, Fuhito Kojima, and Yeon-Koo Che for helpful comments. 
We also acknowledge the participants of NYU econometrics seminar, JEA annual meeting, APIOC, and IIOC.}}
\author{Kei Ikegami\thanks{New York University, \texttt{ki2047@nyu.edu}.} \and Atsushi Iwasaki\thanks{University of Electro-Communications, \texttt{atsushi.iwasaki@uec.ac.jp}} \and Akira Matsushita\thanks{Kyoto University, \texttt{amatsushita@i.kyoto-u.ac.jp}} \and Kyohei Okumura\thanks{Northwestern University, \texttt{kyohei.okumura@gmail.com}}}
\begin{document}

\begin{titlepage}

\newgeometry{top=0.3in,bottom=0.7in,right=1in,left=1in}
\maketitle
\thispagestyle{empty}
\vspace{-20pt}

\begin{abstract}
\noindent
Cap-based regulations are widely used to address distributional disparities in matching markets, but their efficiency relative to alternative instruments such as subsidies remains poorly understood. This paper develops a framework for evaluating policy interventions by incorporating regional constraints into a transferable utility matching model. We show that a policymaker with aggregate-level match data can implement a taxation policy that maximizes social welfare and outperforms any cap-based policy. Using newly collected data from the Japan Residency Matching Program, we estimate participant preferences and simulate counterfactual match outcomes under both cap-based and subsidy-based policies. The results reveal that the status quo cap-based regulation generates substantial efficiency losses, whereas small, targeted subsidies can achieve similar distributional goals with significantly higher social welfare.


\vspace{12pt}
    \noindent JEL Classification Codes: C25; C78; H20; I14.\\ 
    \noindent Keywords: Matching with Distributional Constraints; Discrete Choice; Assignment Problems; Two-sided Matching; Subsidies; Tax; Redistribution. 
\end{abstract}
\end{titlepage}

\restoregeometry

\section{Introduction}
\label{sec:intro}

Outcomes in real-world matching markets often diverge from socially desirable allocations, particularly in terms of distributional equity. To address such imbalances, policymakers often restrict the number of matches for certain categories. Examples of such regulations include race-based affirmative action in U.S. college admissions aimed at promoting diversity \citep{Ellison2021-fr}, gender quotas in electoral systems designed to enhance female representation \citep{Besley2017-ay}, and the residency market in Japan analyzed in \cite{kamada:aer:2015}, in which the government currently caps the number of matches in high-demand urban areas to redirect applicants to underserved rural areas.

While such cap-based regulations are widely used, monetary interventions such as taxes and subsidies offer a natural, yet comparatively underexplored, alternative. The potential advantage of monetary interventions lies in their ability to account for the intensity of preferences. A cap is a blunt instrument that may prevent high-surplus matches, whereas a targeted subsidy could, in principle, influence marginal participants without generating large welfare losses. It remains a central and unresolved question for both theory and policy whether this theoretical potential translates into significant efficiency gains, and under what conditions.

Answering these questions requires moving beyond the non-transferable utility (NTU) framework that dominates the literature on matching with distributional constraints. The NTU framework is ill-suited for policy analysis for two primary reasons. First, its reliance on ordinal preferences cannot readily accommodate interventions involving taxes or subsidies, or quantify their welfare consequences. Second, it typically treats transfers, such as wages, as exogenous, failing to capture how they endogenously adjust in response to policy interventions. These limitations underscore the need for a modeling approach toward more efficient and credible policy intervention.

In this paper, we develop a transferable utility matching model that incorporates \emph{regional constraints}, defined as upper and lower bounds on the number of matches in each region. Within this framework, a system of taxes and subsidies can implement the matching that maximizes participants' welfare among all those satisfying the constraints. This welfare-maximizing outcome establishes a clear welfare benchmark against which other policies can be evaluated. We then provide a path to empirical implementation by embedding our model in the aggregate matching framework of \citet{galichon_cupids_2021}. This enables us to identify model primitives from past match data, compute optimal subsidy policies, and conduct counterfactual simulations comparing their performance to existing cap-based regulations. Applying this framework to newly collected data from the Japan Residency Matching Program, our counterfactual simulations reveal that the current cap-based policy generates substantial efficiency losses, whereas small, targeted subsidies can achieve similar distributional goals with significantly higher social welfare.

Distributional concerns have played a central role in Japan’s residency matching market, prompting various proposals from both practitioners and researchers. In 2005, Japan introduced a mandatory two-year residency program and adopted a centralized matching system known as the Japan Residency Matching Program (JRMP). However, geographic disparities in physician distribution have posed persistent challenges to equitable access to healthcare since the program’s inception. In response, the JRMP began capping the number of residency positions in urban areas in 2008 to encourage matches in underserved regions. 

The data reveal important limitations of the current cap-based policy and point to potential sources of its inefficiency. Despite progressively tighter caps on urban positions over the past decade, geographic imbalances in residency placements persist. Moreover, the introduction of regional caps has been associated with a rising rate of unmatched applicants. These patterns suggest that the cap-based policy currently employed by the JRMP may be inefficient. One potential source of this inefficiency is the failure to account for endogenous transfer formation. The data show substantial variation in salaries across residency programs in Japan. In particular, less popular programs, especially those located in rural areas, appear to offer higher compensation to attract medical students. This observation motivates our modeling approach, which endogenizes transfers between matched pairs.

In our model, each position belongs to a region, and caps and floors on the number of matches in each region are exogenously determined. A policymaker aims to satisfy these constraints, but they may not be satisfied in an equilibrium matching formed by agents without intervention. If the policymaker has complete knowledge of market participants’ preferences, she can induce the matching that maximizes social surplus among all matchings that satisfy the regional constraints via a taxation policy. The taxes and subsidies correspond to the Lagrange multipliers associated with the regional constraints in the social surplus maximization problem. 
This result implies that the taxation policy strictly outperforms any cap-based policy, serving as a useful benchmark for evaluating the efficiency of alternative policies, as it achieves the constrained-optimal.

In practice, policymakers rarely have full knowledge of individual preferences and must infer them from past matching outcomes. To address this, we embed our framework in the model introduced by \citet{galichon_cupids_2021}, which enables identification of the primitives of the transferable utility model using only aggregate-level matching data. This allows us to compute the optimal taxation policy from observed data (Theorem \ref{theorem:EAE_opt}) and to conduct counterfactual simulations to compare welfare across alternative interventions, including cap-based regulations (Proposition \ref{prop:ae}). These simulations also allow us to quantify the relative inefficiencies of different policy tools.

For the empirical analysis, we start by defining the transfer between matched pairs and then consider a measurement model for this transfer. To this end, we model the baseline utilities of both sides for each pair of agents. The transfer from the hospital to the matched resident represents the gap between this baseline utility and the equilibrium utility level. By aggregating this transfer across the pairs within each hospital, we construct an aggregate-level transfer. We then introduce a measurement model for this aggregate-level transfer, which is linear in observed monthly salary. Our estimation proceeds in two steps: first, we estimate the aggregate-level surplus split for each pair of a medical school and a hospital, following \cite{galichon_cupids_2021}; second, based on these first-step estimates, we recover the parameters in the baseline utilities and the measurement model.

The estimation results partly align with those in the existing literature while also presenting a departure from its assumptions. On the doctor’s side, our estimates indicate that factors such as the distance between the hospital and the doctor’s alma mater, as well as the hospital size, are significant. Furthermore, we find that the number of previous matches is an important determinant of doctors’ preferences. These findings are consistent with the existing literature, which suggests that hospitals are horizontally differentiated from the perspective of doctors. On the hospital’s side, we observe that hospitals exhibit horizontal preferences similar to those of doctors: doctors from distant regions are less preferred, in addition to considerations of quality. Such horizontal preference structure on the hospital side is not allowed in the existing literature, which hinders the identification of preferences from matching data.

Our empirical estimates inform a series of counterfactual simulations designed to quantify the inefficiency of existing regulations and evaluate policy alternatives. First, a comparison between the status quo and an unregulated benchmark reveals that the current urban caps, while effective at redirecting some residents, generate a substantial aggregate welfare loss. Second, to distinguish the cost of the distributional goal itself from the cost of the policy instrument, we simulate an optimal subsidy policy designed to achieve the same resident distribution as the current cap-based system. We find that a relatively small, targeted subsidy can meet this goal at a significantly higher level of social welfare. This result indicates that the vast majority of inefficiency stems not from the distributional constraint itself, but from the choice of regulatory instrument, the caps.
Our findings therefore suggest that policymakers can achieve distributional goals more efficiently by replacing blunt quantity restrictions with carefully designed monetary interventions.
Finally, we test the limits of non-monetary interventions by simulating the Flexible Deferred Acceptance mechanism, a sophisticated cap-setting algorithm proposed by \cite{kamada:aer:2015}. Our results show that even this advanced mechanism generates substantial welfare losses compared to an optimal subsidy policy. This finding reinforces our conclusion that quantity restrictions are an inherently inefficient instrument for achieving distributional goals.

\paragraph{Related literature}

Matching with constraints, initiated by \citet{kamada:aer:2015}, has attracted considerable attention across various fields due to its applicability to real-world settings \citep{Abdulkadiroglu:AER:2003,ehlers::2012,kojima2012school,hafalir2013effective, Fragiadakis2017-jn}.
A key departure from their approach is our adoption of the TU framework in a matching market with distributional constraints.

Recent studies by \citet{Kojima2020-wt} and \citet{Jalota2025-rh} investigate broad classes of distributional constraints, primarily focusing on the existence of stable outcomes.
If a firm in their model is regarded as a region, the two models appear similar, but they differ in certain aspects. For instance, in our model, a distinction should be made between matching a doctor to a single hospital in a region and matching the same doctor to a different hospital in the same region; however, such a distinction cannot be described in their models.\footnote{This is analogous to the difference between the models for affirmative action such as \cite{Abdulkadiroglu:AER:2003} and the model of \cite{kamada:aer:2015}.} Our analysis focuses on a specific class of constraints, regional constraints, and studies the problem of a policymaker who designs a taxation policy to ensure that these constraints are satisfied. We also extend the model to accommodate empirical analysis using aggregate-level matching data.


Next, we highlight our contribution from the viewpoint of the empirical analysis of matching markets.
The structural analysis of matching markets is widely accepted in many fields of economics, including labor economics and industrial organization, which also adopts a type of matching model to describe trade networks \citep{fox_estimating_2018,fox_unobserved_2018}. 
Notable methodological contributions are developed in \cite{galichon_cupids_2021}: the nonparametric identification result of the social surplus function in a transferable utility matching market, which is robust to distributional assumptions on unobserved heterogeneity, along with the corresponding estimators\footnote{The methodology in \cite{galichon_cupids_2021} is also distinct compared to other methodologies dependent on distributional assumptions, such as the minimum score estimator proposed by \cite{Fox2010,fox_2017,fox_measuring_2013} and the maximum likelihood estimator of \cite{choo_who_2006}.}.
This study builds on the general framework of \cite{galichon_cupids_2021}, proposing an extended model that accommodates regional constraints. Our framework is thus applicable without assuming a specific distributional form for unobserved heterogeneity terms. Furthermore, diverging from \cite{galichon_cupids_2021}, we propose a formal estimation strategy that exploits a measurement of the transfer to quantify the parameter values in a monetary unit.

Lastly, we mention \cite{agarwal_empirical_2015} as the closest empirical research to this study.
\cite{agarwal_empirical_2015} takes a non-transferable utility matching model to analyze the doctor-hospital matching market in the U.S. National Residency Matching Program (NRMP), where a centralized mechanism determines the matching, and salaries are determined almost exogenously. While \cite{agarwal_empirical_2015} addresses the endogeneity of salaries using a control function approach, our study fully models the salary determination process. This difference reflects the variation in market environments: in Japan, the concentration in urban areas presents more severe issues, and salaries are used as a tool to attract more candidates. 
Furthermore, treating transfers endogenously allows us to recover preferences under counterfactual regulations and simulate the resulting matching outcomes. This feature is particularly useful for policymakers when evaluating potential regulatory interventions, which a non-transferable utility matching model, which treats transfers as exogenous, cannot offer.

\subsection{Institutional Background of the JRMP}
\label{sec:JRMP_institutional_detail}

In 2004, Japan introduced a mandatory two-year residency training program. Under the new program, residents and hospitals are matched through a centralized mechanism called the Japan Residency Matching Program (JRMP). The JRMP is modeled after the National Resident Matching Program in the United States. Medical students and hospitals must register with the JRMP system and submit their respective preference rankings. Students are provided with information about hospitals, including hospital size, location, specifics of the training program, salary, and workload. Additionally, they can participate in job fairs and directly visit hospitals to obtain further details. Before submitting their preference lists, students are required to take examinations administered by each hospital they consider ranking. Once the preference lists are finalized, the JRMP employs the deferred acceptance algorithm to determine the matches. In 2023, the JRMP involved 10,202 students and 1,209 hospitals offering 10,895 positions. On average, each student listed 4.35 hospitals on their preference list. The algorithm successfully matched 87.9\% of the students; specifically, 64.3\% secured their first-choice hospitals, 16.3\% their second choice, and 9.0\% their third choice.\footnote{Unmatched students may individually contact hospitals with vacancies or reapply through the matching process in the subsequent year.}

Researchers and Japanese media have reported that the distributional disparity of physicians across regions has worsened following the introduction of the JRMP \citep{Iizuka2016-wp, Sakai2013-co, Endo2019}. One primary factor contributing to this imbalance is that the centralized matching system allowed medical students to freely express their location preferences, leading to a concentration of applicants in non-university hospitals. Under the previous system, a majority of graduates would begin their residency at the university hospital affiliated with their medical school.\footnote{Although it was not mandatory, most medical students chose to undertake residency training after graduation.} Many medical graduates were affiliated with university medical departments, which directed graduates to affiliated rural hospitals, maintaining a relatively balanced distribution. However, the JRMP effectively removed the obligatory connection between graduates and university departments, resulting in fewer graduates choosing positions in rural or underserved regions. Consequently, rural hospitals faced increasing difficulties recruiting physicians, exacerbating regional disparities in patient access to hospital services in these areas.

To mitigate these regional imbalances, the JRMP introduced regional caps starting in 2010. The process for setting regional caps involves several steps. First, the total number of residency positions nationwide is determined by multiplying the number of medical students by a fixed constant. This constant was approximately 1.22 in 2015 but is scheduled to be reduced gradually to 1.05 by 2025. After determining the nationwide cap, residency positions are allocated across prefectures using a formula that incorporates factors such as population size, medical school enrollment capacities, the current number of practicing doctors, geographic considerations, including physician density (the number of doctors per unit area), and populations in isolated or remote islands. This allocation method is intentionally designed to benefit underserved areas by disproportionately reducing the number of residency positions in urban regions. The rationale behind imposing regional caps is that by restricting positions available in urban areas and tightening overall capacity, more medical graduates will be encouraged or compelled to pursue training in rural regions, under the expectation that residents who train in these underserved areas are more likely to remain and practice there long-term, thereby addressing existing regional disparities.

\section{Data}
\label{sec:data}

Our analysis covers the four years of matching results generated by the JRMP from 2016 to 2019. To estimate our model, we need three key elements: the matching patterns between medical schools and hospitals, the characteristics of these institutions relevant to their preferences, and the salaries paid to residents during their internships. We begin by describing the data sources, present the descriptive statistics in Section~\ref{sec:desc_stats}, and discuss the empirical matching patterns observed in the market in Section~\ref{sec:empirical_pattern}.

The matching patterns between medical schools and hospitals, i.e., the numbers of matches between any given pair, are calculated based on the “Physician Registration Report,” a certificate that summarizes the personal information of licensed physicians, including residents. This data source allows us to calculate the annual number of matches between specific medical schools and hospitals using information on their respective graduating institutions and training hospitals.

We obtained the characteristics of hospitals from the JRMP website, which provides details such as hospital names, program offerings, and capacity. For the characteristics of medical schools, we used the national exam pass rate and whether the university is public, based on publicly available information from hospital websites. Additionally, to measure the expected ability of graduates from a medical school, we used the T-score of the entrance exam.\footnote{The T-scores of universities are published by cram schools. These scores are calculated using data from practice exams administered by the cram schools, which gather information on students' actual university entrance exam results. The T-scores reflect the relationship between students' performance on practice exams and their success on university entrance exams.} T-scores are widely recognized as an indicator of university entrance exam difficulty in Japan, with higher scores indicating more challenging universities. We used the most recent T-scores available for our estimation.\footnote{The data source is \url{https://www.keinet.ne.jp/university/ranking/}.}

Finally, we gathered salary data by crawling hospital websites. Due to limited data availability, we used the most recent salary information rather than data from 2016 to 2019, assuming that salary levels remained constant during this period. We also collected additional hospital-related information, such as location, number of beds, and emergency transport cases.

\subsection{Descriptive Statistics}
\label{sec:desc_stats}

Table~\ref{table:data_summary_stats} summarizes the environment and the outcomes of JRMP for the years 2017, 2018, and 2019. Since our estimation uses the matching patterns of the last year as one of the covariates, we exclude the fiscal year of 2016. 
Panel A and Panel B in Table \ref{table:data_summary_stats} summarize the results of JRMP. 
The environment of the matching market—characterized by the number of schools, students, hospitals, and available slots—remains stationary over these three years, with minimal entry or exit. The matching outcomes, such as the unmatched rate, also appear stable during this period. Based on these observations, we assume that the preference structures of schools and hospitals remain unchanged throughout the data period.

\begin{table}[t]
\centering
\small
\caption{Environments and Outcomes of JRMP}
\label{table:data_summary_stats}
\begin{threeparttable}
\begin{tabular}{@{}l p{0.15\textwidth} p{0.15\textwidth} p{0.15\textwidth}}
\toprule
 & \textbf{2017} & \textbf{2018} & \textbf{2019} \\ \midrule
\multicolumn{4}{l}{\hspace{-0.6em}\textbf{Panel A: Doctor side}} \\
Number of schools & 78 & 78 & 78 \\
Number of students & 9830 & 9916 & 9932 \\
Number of matched students & 8530 & 8369 & 8634 \\
Number of unmatched students & 1300 & 1547 & 1298 \\
Unmatch rate ($\%$)& 13.22 & 18.48 & 15.03 \\ \midrule
\multicolumn{4}{l}{\hspace{-0.6em}\textbf{Panel B: Hospital side}} \\
Number of hospitals & 1025 & 1025 & 1022 \\
Number of total seats & 11716 & 11468 & 11730\\ 
Number of matched seats & 8530 & 8369 & 8634 \\ 
Number of unmatched seats & 3186 & 3099 & 3096 \\
Unmatch rate ($\%$)& 27.19& 27.02 & 26.39 \\
Number of excess seats & 1886 & 1551 & 1798 \\
Excess rate ($\%$) & 16.01 & 13.52 & 15.33\\ 
\bottomrule
\end{tabular}
\end{threeparttable}
\end{table}

\begin{figure}[ht]
\centering
\begin{minipage}[b]{0.48\linewidth}
    \centering
    \includegraphics[width=\linewidth]{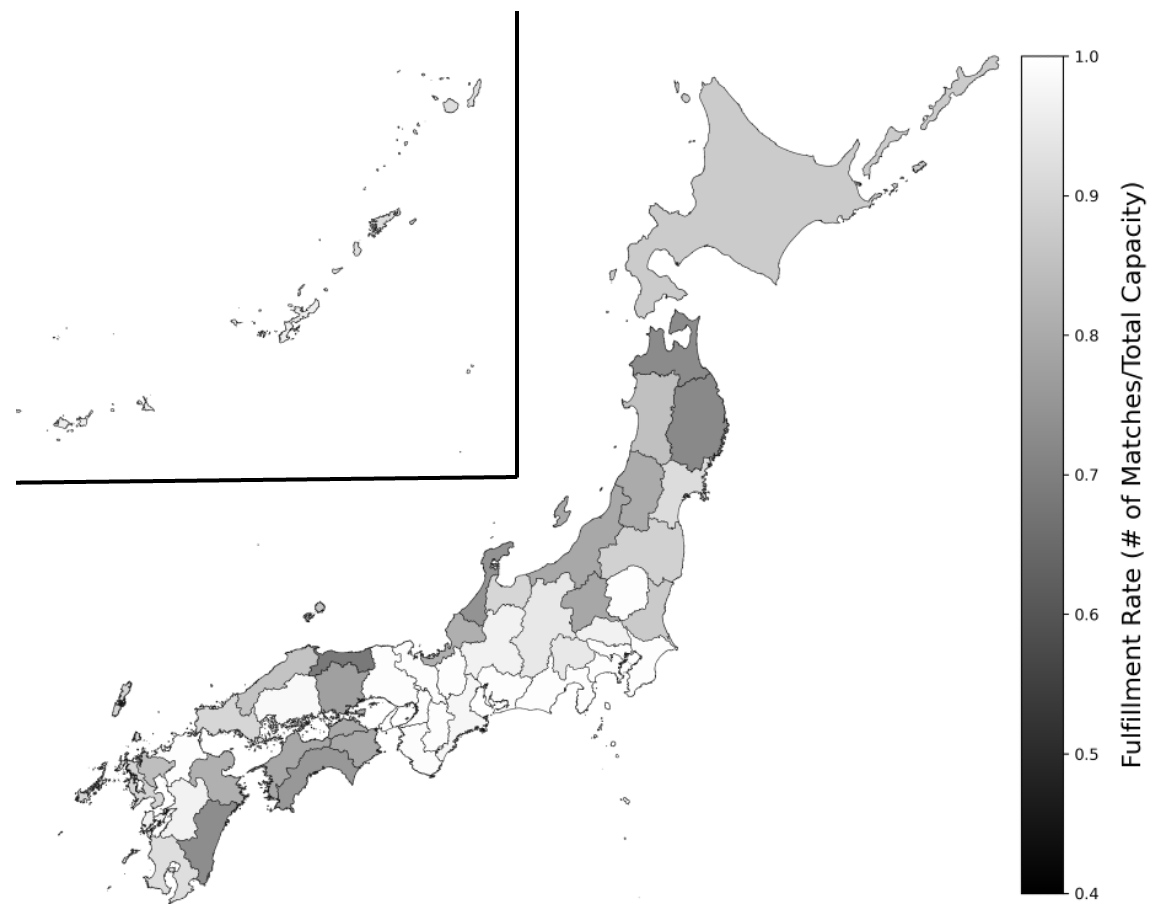}
    \subcaption{Fulfillment Rates}\label{fig:fulfillment}
\end{minipage}
\begin{minipage}[b]{0.48\linewidth}
    \centering
    \includegraphics[width=\linewidth]{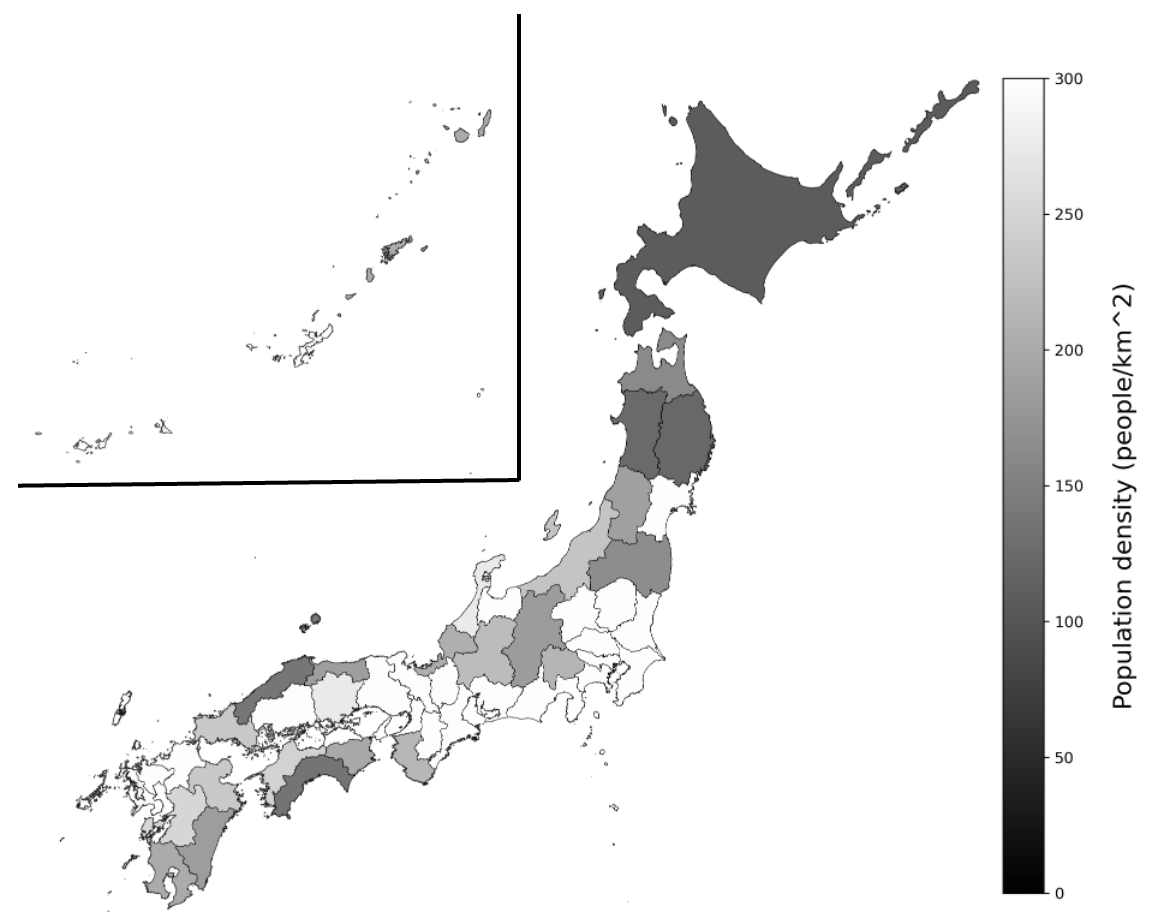}
    \subcaption{Population Density}\label{fig:density}
\end{minipage}
\caption{Fulfillment Rate and Population Density}\label{fig:match_rate}
\end{figure}

Despite the overall presence of unoccupied seats, as shown in Table \ref{table:data_summary_stats}, the fulfillment rate by prefecture, defined as the ratio of the number of matches to the total number of positions in each prefecture, exhibits substantial regional variation. Figure~\ref{fig:fulfillment} displays a choropleth map of fulfillment rates, while Figure~\ref{fig:density} illustrates population densities by prefecture in 2019. These figures suggest that rural areas, characterized by lower population density, are less popular and experience lower fulfillment rates, even with the current tight caps on urban areas.

Table \ref{table:data_summary_covariates} presents descriptive statistics for the characteristics of medical schools and hospitals.
Panel A summarizes the medical school variables. Medical schools are first categorized as either private or public. There are 51 public medical schools and 27 private ones. The T-score and graduation exam pass rate are used as proxies for student quality, with higher values indicating greater ability. Public medical schools tend to have higher average T-scores, suggesting that graduating from a public institution signals stronger student qualifications. This difference is statistically significant.

Panel B summarizes the hospital-side variables. Hospitals are classified in two ways. First, based on their affiliation with a university, they are categorized as either university or non-university hospitals. Approximately 88.3\% of hospitals have no university affiliation. University hospitals tend to have more beds on average, reflecting their typical role as regional flagship hospitals.
The second classification is based on location. Hospitals situated in one of six prefectures—Tokyo, Kanagawa, Aichi, Kyoto, Osaka, and Fukuoka—where official caps on the number of matches are imposed under the JRMP, are defined as being in urban areas. The remaining hospitals are considered rural. Although this definition includes only six of the forty-seven prefectures in total, 33.7\% of all hospitals fall into the urban category. Urban hospitals also tend to be larger in size.

We examine the total variation in salaries in the JRMP market. The latter half of Panel B reports significant variation in the salaries of medical interns across Japan. For example, in 2017, the average annual salary was approximately \$31,714, with a standard deviation of \$8,166.\footnote{The conversion from yen to dollars was based on the exchange rate as of August 30, 2024.} For comparison, Table 1 of \cite{agarwal_empirical_2015} reports that the mean salary for similar medical interns in the United States was \$47,331, with a standard deviation of \$2,953. While average salaries are higher in the U.S., the standard deviation in Japan is 2.77 times larger, indicating greater salary dispersion. 
This variation reflects salary differentials across regions: on average, rural hospitals offer significantly higher monthly wages—by \$430 (63,760 JPY)—than their urban counterparts. At the same time, affiliation with a university is associated with significantly lower pay, with university hospitals paying \$692 (101,000 JPY) less on average.

\begin{table}[t]
    \centering
    \small
    \caption{Summary Statistics of Medical‐School and Hospital Covariates}
    \label{table:data_summary_covariates}
    \begin{threeparttable}
        \begin{tabular}{@{}p{0.36\textwidth}lccccc@{}}
            \toprule
            & \textbf{Category} & \textbf{Count} & \textbf{Mean} & \textbf{Std} & \textbf{Min} & \textbf{Max} \\ 
            \midrule
            \multicolumn{7}{l}{\textbf{Panel A. Medical Schools}} \\[0.2em]
            T-score           & Private & 27  & 64.96 &  2.17 & 62  & 72   \\
                               & Public  & 51  & 66.38 &  2.66 & 63  & 74   \\[0.2em]
            \midrule
            \multicolumn{7}{l}{\textbf{Panel B. Hospitals}} \\[0.2em]
            Number of beds    
            & Rural           & 684 & 417.99 & 155.92 &   36 & 1195 \\
                               & Urban           & 348 & 472.48 & 217.84 &   38 & 1379 \\
                               & Non‐university & 911 & 411.08 & 147.73 &  36  & 1097 \\
                               & University      & 121 & 626.74 & 272.77 & 295  & 1379 \\
                               [0.2em]
            Monthly Salary (×100{,}000 JPY)   & Total & 1032 & 3.864 & 0.995 & 1.800 & 8.550\\
            & Rural           & 684 & 4.079   & 0.984   & 1.800 & 8.550 \\
                               & Urban           & 348 & 3.452   & 0.875   & 1.900 & 6.847 \\
                               & Non‐university & 911 & 3.986   & 0.978   & 2.225 & 8.550 \\
                               & University      & 121 & 2.976   & 0.566   & 1.800 & 4.720 \\
            \bottomrule
        \end{tabular}
    \begin{tablenotes}
    \footnotesize
    \item[*] The summary statistics for monthly salary are based on the three-year average from 2017 to 2019 for each of the 1,032 hospitals.
    \end{tablenotes}
    \end{threeparttable}
\end{table}

\subsection{Empirical Matching Pattern}

\label{sec:empirical_pattern}
In line with \cite{agarwal_empirical_2015}, we estimate a set of regressions to characterize the matching patterns between hospitals and students.
Due to the unavailability of individual matching data, we first calculate the weighted averages of the characteristics of matched partners within each medical school and hospital. These aggregated characteristics are then regressed on the covariates of the medical schools and hospitals.
We also regress the unmatch rate of each hospital and school on their characteristics to identify the types of students who are more likely to go unmatched in this market.
We use random effects models to account for the correlation between unobserved error terms and observed covariates in our panel matching data.\footnote{Since our main interest lies in the coefficients on time-invariant covariates, such as the urban hospital dummy and public university dummy, we do not use a fixed effects model.}
We compute cluster-robust standard errors where the cluster is set to the unit level: the hospital level for the hospital perspective and the school level for the school perspective.

Table~\ref{table:agarwal_table2} reports the regression results on matching patterns from the perspective of hospitals. 
The first and second column dependent variables represent the quality of matched students, so the positive correlations with hospital size shown in the first row are in line with expectations. 
The second row shows a notable finding about the role of urban location. It is natural to expect that urban hospitals attract higher-quality students---as urban areas are generally more preferred by students—this pattern is indeed observed for difficulty. However, urban hospitals are less likely to match with highly qualified students from public universities. This pattern suggests that public universities play a significant role in supplying physicians to rural areas.
There is no statistically significant pattern in the distance between the matched students and their universities. As for the unmatch rate, urban hospitals exhibit fewer vacancies, as expected.

\begin{table}[tbp]
\centering
\def\sym#1{\ifmmode^{#1}\else\(^{#1}\)\fi}
\caption{Matching from Hospital Viewpoint}
\label{table:agarwal_table2}
\begin{tabular}{l*{4}{c}}
\hline\hline
            &\multicolumn{1}{c}{(1)}   &\multicolumn{1}{c}{(2)}   &\multicolumn{1}{c}{(3)}   &\multicolumn{1}{c}{(4)}   \\
            &      Public   &  Difficulty   &    Distance   &Unmatch~Rate   \\
\hline
$\ln$ Beds   &      0.0576***&       6.286***&       9.081   &      -0.146***\\
            &    (0.0222)   &     (1.190)   &     (10.29)   &    (0.0227)   \\
[1em]
Urban Hospital     &      -0.113***&       3.179***&       2.837   &      -0.148***\\
            &    (0.0176)   &     (0.822)   &     (9.661)   &    (0.0150)   \\
\hline
Observations&        3072   &        3072   &        3072   &        3072   \\
Year dummy  &     $\surd$   &     $\surd$   &     $\surd$   &     $\surd$   \\
\hline\hline
\multicolumn{5}{l}{\footnotesize Standard errors in parentheses}\\
\multicolumn{5}{l}{\footnotesize * p<0.1, ** p<0.05, *** p<0.01}\\
\end{tabular}
\end{table}

One straightforward explanation for the frequent matching between public universities and rural hospitals is geographic proximity: every prefecture hosts at least one public medical school. At the same time, higher salaries appear to serve as a strong incentive to retain talented students from public universities in rural areas after graduation. 
As shown in Table~\ref{table:agarwal_table3}, which presents the regression results from the perspective of medical schools, graduates from public universities are more likely to match with rural hospitals and receive higher salaries. 
This finding supports the view that salary is a significant factor in students’ preferences to sustain an equilibrium in which rural hospitals offer higher wages to retain talented graduates. 
Table~\ref{table:agarwal_table3} also shows that graduates from more competitive universities tend to match with nearby hospitals, suggesting that distance is an important factor in residents’ choices. As expected, graduates from more prestigious universities are less likely to go unmatched.

\begin{table}[tbp]\centering
\def\sym#1{\ifmmode^{#1}\else\(^{#1}\)\fi}
\caption{Matching from School Viewpoint}
\label{table:agarwal_table3}
\begin{tabular}{l*{4}{c}}
\hline\hline
            &\multicolumn{1}{c}{(1)}   &\multicolumn{1}{c}{(2)}   &\multicolumn{1}{c}{(3)}   &\multicolumn{1}{c}{(4)}   \\
            &Urban Hospital   &  $\ln$ Wage   &    Distance   &Unmatch~Rate   \\
\hline
Public University&      -3.055***&       1.335***&       425.7   &      -1.416** \\
            &     (1.088)   &     (0.391)   &     (539.0)   &     (0.633)   \\
[1em]
Difficulty  &      0.0115   &     0.00452   &      -16.16** &     -0.0220** \\
            &    (0.0146)   &   (0.00433)   &     (6.540)   &   (0.00956)   \\
\hline
Observations&         234   &         234   &         234   &         234   \\
Year dummy  &     $\surd$   &     $\surd$   &     $\surd$   &     $\surd$   \\
\hline\hline
\multicolumn{5}{l}{\footnotesize Standard errors in parentheses}\\
\multicolumn{5}{l}{\footnotesize * p<0.1, ** p<0.05, *** p<0.01}\\
\end{tabular}
\end{table}

\section{Model}
\label{sec:model}

\paragraph{Matching market with regional constraints}
We consider a two-sided matching market.
Let $I$ denote the set of doctors (medical students) and $J$ the set of job slots owned by hospitals.
Each doctor $i \in I$ can be matched with at most one slot $j \in J$, and each slot can accommodate at most one doctor. If a doctor $i$ is unmatched, they are paired with an outside option $j_0$. Similarly, an unmatched slot $j$ is paired with an outside option $i_0$. A \emph{matching} is represented by a $0$-$1$ matrix $d = (d_{ij})_{i \in I,j \in J}$, where $d_{ij} = 1$ if and only if doctor $i$ is matched with slot $j$. A matching $d$ is \emph{feasible} if each doctor is matched to exactly one slot or the outside option, and each slot is matched to exactly one doctor or the outside option: $\sum_{j \in J} d_{ij} \leq 1$ for all $i \in I$, and $\sum_{i \in I} d_{ij} \leq 1$ for all $j \in J$.

There is a policymaker who faces an additional condition called regional constraints. There are $L$ regions, denoted by $Z = \{ z_1, z_2, \ldots, z_L \}$, with each job slot $j$ assigned to one region. Additionally, we define a special region $z_0$ that contains only the outside option $j_0$. With a slight abuse of notation, let ${z} \colon J \cup \{j_0\} \to Z \cup \{z_0\}$ be the mapping where $z(j)$ indicates the unique region to which job slot $j$ belongs. Each region $z$ has a cap and a floor, $\underline{o}_z \in \mathbb{R}_+$ and $\bar{o}_z \in \mathbb{R}_+ \cup \{\infty\}$, where $\uo_z < \bo_z$ for each $z$. We say a feasible matching $d$ satisfies \emph{regional constraints} if it respects the caps and the floors: $\sum_{i \in I} \sum_{j \in z} d_{ij} \in [\underline{o}_z, \bar{o}_z]$ for each $z \in Z$. Throughout the paper, we assume that at least one feasible matching exists that satisfies regional constraints.\footnote{Formally, we assume
$\uo_z \leq |\{j \colon j \in z\}|$ for each $z \in Z$, and $|I| \geq \sum_z \uo_z.$
}

\paragraph{Stable outcome under taxation policy}
Agents form a stable outcome \`a la \cite{shapley_1971}. Without policy intervention, the realized matching may not meet the regional constraints. The policymaker can implement a taxation policy that influences the split of the joint surplus among agents to satisfy the regional constraints. When a doctor $i$ and a slot $j$ are matched, they generate an \emph{(individual-level) net joint surplus} $\Phi_{ij} \in \mathbb{R}$. Denote by $\Phi_{i, j_0}$ and $\Phi_{i_0, j}$ the payoffs $i \in I$ and $j \in J$ receive when unmatched, respectively. The tax $w_z \in \mathbb{R}$ is imposed on each match $(i,j)$ in region $z$, with negative taxes being interpreted as subsidies. We assume $w_{z_0} = 0$, i.e., no tax is imposed on the outside option. With taxation policy $w = (w_z)_{z \in Z}$, each matched pair divides the \emph{gross joint surplus} $\Phi_{ij} - w_{z(j)}$ instead of the net joint surplus.
\footnote{In principle, the policymaker could set pair-specific taxes. We restrict our definition of a taxation policy to be uniform within each region; however, this is without loss of generality: the welfare-maximizing taxation policy is, in fact, uniform, imposing the same tax on all pairs within a given region (see Section~\ref{sec:limit_cap_based} for the related discussion and Appendix~\ref{app:taxation_possibility} for the formal proof).}
The stable outcome under a taxation policy is defined as follows:
\begin{definition}[Stable outcome]
\label{def:indiv_eqm}
Given the matching market $(I,J,Z, z,\Phi)$,\footnote{The symbol $z$ denotes the mapping from job slots to regions.} a profile $(d, (u, v))$ of feasible matching $d$ and equilibrium payoffs $(u, v)$ forms a \emph{stable outcome} under taxation policy $w$ if it satisfies:
\begin{enumerate}
\item Individual rationality: For all $i \in I$, $u_i \geq \Phi_{i,j_0}$, with equality if $i$ is unmatched. For all $j \in J, v_j \geq \Phi_{i_0,j}$, with equality if $j$ is unmatched.
\item No blocking pairs: For all $i \in I$ and $j \in J$, $u_i + v_j \geq \Phi_{ij} - w_{z(j)}$, with equality if $d_{ij}=1$.
\end{enumerate}
\end{definition}

While it is reasonable to assume that participants form a stable outcome in many frictionless and decentralized environments, such as certain labor or marriage markets, the appropriateness of this assumption depends on the specific application. In Appendix~\ref{app:JRMP_game}, we show that any stable outcome can be supported as an equilibrium of a stylized game that mirrors the JRMP's matching process.

A \emph{cap-based policy} is represented by removing a subset of slots from the market. Let \( J' \subseteq J \) denote the set of slots that remain after the policy is applied; this subset characterizes the cap-based policy. Operationally, this removal is equivalent to imposing an upper bound (a cap) on the number of slots. Let \( \Phi' \coloneqq (\Phi_{ij})_{i \in I,\, j \in J'} \) denote the submatrix of \( \Phi \) corresponding to the remaining doctors and slots. Given \( J' \), the agents form a stable outcome in the reduced market \( (I, J', Z, z, \Phi') \) under the assumption that taxation policy is absent, that is, \( w \equiv 0 \).

\paragraph{Unobserved heterogeneity}
Let $X = \{ x_1, x_2, \ldots, x_N \}$ represent the finite set of observable characteristics, or \emph{types}, of doctors. Each doctor $i \in I$ has a type $x(i) \in X$. Similarly, let $Y = \{ y_1, y_2, \ldots, y_M \}$ represent the finite set of observable characteristics of job slots, with each slot $j \in J$ having a type $y(j) \in Y$. Although agents with the same type are indistinguishable to the policymaker, there can be \emph{unobservable heterogeneity}: doctors of the same type $x$ or job slots of the same type $y$ may generate different joint surpluses when matched. For convenience, we denote $i \in x$ if $x(i) = x$ and $j \in y$ if $y(j) = y$. We define $x_0$ and $y_0$ as special types representing the outside options $i_0$ and $j_0$, respectively, and let $X_0 = X \cup \{ x_0 \}$ and $Y_0 = Y \cup \{ y_0 \}$ include these outside options. The set of all type pairs is denoted by $T = X_0 \times Y_0 \setminus \{ (x_0, y_0) \}$. We assume each job slot type $y \in Y$ belongs to a unique region, denoted by $z(y) \in Z$. Let $n_x$ be the number of doctors with type $x$, and $m_y$ be the number of job slots with type $y$.

Let $\mu_{xy}$ denote the number of matches between type-$x$ doctors and type-$y$ job slots, defined as $\mu_{xy} = \sum_{i \in x} \sum_{j \in y} d_{ij}$. An \emph{aggregate-level matching} $\mu = (\mu_{xy})_{x \in X, y \in Y}$ is said to be \emph{feasible} if it satisfies the population constraints $\sum_y \mu_{xy} = n_x$ and $\sum_x \mu_{xy} = m_y$ for each $x$ and $y$. Furthermore, we say $\mu$ satisfies regional constraints if $\sum_{y \in z}\sum_{x \in X} \mu_{xy} \in [\underline{o}_z, \bar{o}_z]$ for each $z \in Z$.

Types $y \in Y$ and regions $z \in Z$ can be interpreted in various ways. For example, in our application, a type $y$ may correspond to a hospital, and a region $z$ may correspond to a district (e.g., a prefecture). In other contexts, a type could represent a subcategory of occupation (e.g., registered nurse, physician assistant), and a region could represent a broader occupational category (e.g., healthcare).

\section{Theoretical Results}
\label{sec:theoretical_results}

\subsection{Relative Efficiency of Caps and Taxation}
\label{sec:limit_cap_based}

To build intuition for our main analysis, this subsection begins with a stylized model that abstracts from unobserved heterogeneity. In this benchmark case, we provisionally assume that the policymaker knows all joint surplus values, \( \Phi_{ij} \). While this assumption is relaxed later, it clearly illustrates the properties of the \emph{optimal taxation policy}, which is defined below, and its relationship to cap-based policies, which remain valid even in the presence of unobserved heterogeneity.

Given a matching \( d \), the \emph{social surplus} is defined as \( \sum_{i,j} d_{ij} \Phi_{ij} \), the total net joint surplus generated by the matching \( d \). Consider the following social surplus maximization problem subject to regional constraints:
\begin{optproblem}{$P_0$}
\label{opt:IE_primal}
\begin{alignat}{3}
 &\underset{d \in \{0,1\}^{I \times J}}{\text{maximize}}  & \quad & \sum_{(i,j) \in I } \Phi_{ij} d_{ij} + \sum_{i \in I} \left(1 - \sum_{j \in J} d_{ij}\right) \Phi_{i, j_0} +  \sum_{j \in I} \left(1 - \sum_{i \in I} d_{ij}\right) \Phi_{i_0, j} &  \\
 &\text{subject to} & \quad & \sum_{j \in J} d_{ij} \leq 1                                &\forall i \in I,\\
 &                  & \quad & \sum_{i \in I} d_{ij} \leq 1                               &\forall j \in J,\\
 &                  & \quad & \uo_z \leq \sum_{j \in z} \sum_{i \in I} d_{ij} \leq \bo_z  &\forall z \in Z, 
\end{alignat}
\end{optproblem}
Let \( d^* \) denote a solution to \(\mathrm{(P_0)}\); that is, \( d^* \) is the matching that maximizes social surplus subject to the regional constraints.

Given full knowledge of the joint surplus matrix \( \Phi \), there exists an \emph{optimal} taxation policy that implements \( d^* \) as a stable outcome. This implies that, within our framework, the optimal taxation policy outperforms any cap-based policy: no cap-based policy that satisfies the same regional constraints can generate a higher social welfare than the optimal taxation policy. The taxes and subsidies are characterized by the Lagrange multipliers on the binding regional constraints and can be computed via linear programming. (see Appendix~\ref{app:taxation_possibility} for details.) A key feature of this optimal taxation policy is its structure: it consists of a single tax or subsidy $w_z$ for each region $z \in Z$, and the same amount is applied uniformly to all matches within a region, despite the theoretical possibility of setting pair-specific taxes. This significantly simplifies implementation.

The following example illustrates that the relative efficiency of a cap-based policy, measured as the ratio of its social welfare to the optimal taxation benchmark, can take any value in $(0,1)$, depending on the model's parameters. This underscores the need for empirical analysis to quantify the actual welfare loss in any given application.

\begin{example}
Consider two doctors, \( i_1 \) and \( i_2 \); two regions, \( z_1 \) and \( z_2 \); and two slots, \( j_1 \in z_1 \) and \( j_2 \in z_2 \). Region \( z_2 \) is subject to a floor constraint \( \underline{o}_{z_2} = 1 \), while there are no other regional constraints. The value of outside options is normalized to zero, and the joint surplus matrix is given by  
\[
\Phi \coloneqq
    \kbordermatrix{
    & j_1 & j_2 \\
    i_1 & a & \delta \\
    i_2 & b & -\epsilon
    },
\]
where \( a, b, \delta, \epsilon > 0 \) and \( a \geq \delta + b + \epsilon \).

\paragraph{No intervention}
The stable matching pairs \( i_1 \) with \( j_1 \), leaving \( i_2 \) unmatched. The resulting social welfare is \( a \), but the regional constraint is violated.

\paragraph{Cap-based policy}
The policymaker meets the constraint by removing the urban slot, $j_1$. The reduced surplus matrix becomes
\[
\Phi' \coloneqq
    \kbordermatrix{&j_2\\
        i_1 & \delta \\
        i_2 & -\epsilon
    }
    .
\]
In this case, \( i_1 \) is matched with \( j_2 \), yielding social welfare \( \delta \).

\paragraph{Optimal subsidy policy}
The policymaker retains both slots and subsidizes matches in region \( z_2 \) by \( \epsilon \). The resulting gross surplus matrix is
\[
\kbordermatrix{
    &j_1 & j_2 \\
    i_1 & a & \delta + \epsilon \\
    i_2 & b & 0
}.
\]
The matching $i_1$ to $j_1$ and $i_2$ to $j_2$ forms a stable outcome, achieving social welfare of \( a - \epsilon \).

\paragraph{Comparison}
The ratio \( \delta / (a - \epsilon) \) compares the social welfare under the cap-based policy to that under the taxation policy. This ratio can take any value in \( (0, 1] \), while satisfying the condition $a \geq \delta + b + \epsilon$.\footnote{For example, the ratio equals $1$ if \( \delta = a - \epsilon \), \( b = \epsilon \), and $\epsilon \leq a/2$.
In contrast, suppose that $\delta = \epsilon$ and $b \leq a - 2\epsilon$. As \( \delta \) and \( \epsilon \) approach zero, the ratio converges to zero.}

\end{example}
This example also illustrates how the inefficiency of a cap-based policy arises. As a blunt instrument, the cap satisfies the rural constraint by eliminating the high-surplus urban slot, $j_1$. This forces the most productive doctor, $i_1$, into a significantly less-valued match, sacrificing the large surplus generated by the $(i_1, j_1)$ pair. A targeted subsidy, in contrast, achieves the goal more efficiently. By making the undesirable rural match, $(i_2, j_2)$, just palatable enough for the marginal doctor, $i_2$, it fulfills the constraint while leaving the most productive match in the market, $(i_1, j_1)$, undisturbed.

\subsection{Preliminary Results}

To account for unobserved heterogeneity in preferences, this subsection reviews the framework of \cite{galichon_cupids_2021}, which links the individual-level objects,
$(\Phi_{ij})_{ij}$, $(d_{ij})_{ij}$, $(u_i)_i$, and $(v_j)_j$, 
introduced in Section~\ref{sec:model} to their aggregate-level counterparts required for our analysis.

\paragraph{Discrete Choice Representation} 
For any pair $(i,j)$ with doctor $i$ of type $x$ and slot $j$ of type $y$, we assume that the individual-level joint surplus decomposes as
\[
\Phi_{ij} = \Phi_{xy} + \epsilon_{iy} + \eta_{xj},
\]
where $\Phi_{xy}$ is the \emph{aggregate-level joint surplus} and $\epsilon_{iy}$ and $\eta_{xj}$ are independent error terms. For each type $x$ and every doctor $i \in x$, the vector $(\epsilon_{iy})_{y \in Y_0}$ is drawn from a distribution $P_x \in \Delta(\R^{|Y|+1})$. Similarly, for each type $y$ and every slot $j \in y$, the vector $(\eta_{xj})_{x \in X_0}$ is drawn from a distribution $Q_y \in \Delta(\R^{|X|+1})$.

\begin{ass}[Independence]
\label{ass:indep_error}
The error terms are independent across all doctors and slots.
\end{ass}

\begin{ass}[Additive Separability]
\label{ass:additive_separability}
There exists a matrix $\left(\Phi_{xy}\right)_{(x,y) \in T}$ such that:
\begin{enumerate}
    \item For every $x \in X$, $y \in Y$, $i \in x$, and $j \in y$, 
    \[
    \Phi_{ij} = \Phi_{xy} + \epsilon_{iy} + \eta_{xj};
    \]
    \item For every $x \in X$ and $y \in Y$, 
    \[
    \Phi_{i,y_0} = \epsilon_{i,y_0} \quad \text{and} \quad \Phi_{x_0,j} = \eta_{x_0,j}.
    \]
\end{enumerate}
\end{ass}

We define the \emph{aggregate-level utilities} $U_{xy}$ and $V_{xy}$, which depend solely on types, as follows. For each $x \in X$ and $y \in Y$, let
\begin{align}
\label{eq:agg_level_util}
U_{xy} \coloneqq \min_{i \,:\, x(i)=x} \{ u_i - \epsilon_{iy} \}, \quad V_{xy} \coloneqq \min_{j \,:\, y(j)=y} \{ v_j - \eta_{jx} \},
\end{align}
with the normalization $U_{x,y_0} = V_{x_0,y} \coloneqq 0$. The following lemma shows that the matching market can be represented as a bilateral discrete choice problem.

\begin{lemma}[\cite{galichon_cupids_2021}]
\label{lemma:tu_discrete_choice}
Let $(u,v)$ be a payoff profile corresponding to a stable outcome. Under Assumption~\ref{ass:additive_separability}, for any doctor $i \in I$ and any slot $j \in J$, we have
\begin{align}
\label{eq:tu_util_equiv}
u_i &= \max_{y \in Y_0} \left\{ U_{x(i),y} + \epsilon_{iy} \right\}, \quad
v_j = \max_{x \in X_0} \left\{ V_{x,y(j)} + \eta_{xj} \right\}.
\end{align}
\end{lemma}
\begin{proof}
See Appendix~\ref{app:tu_discrete_choice}.
\end{proof}

\paragraph{Large Market Approximation} 
By Lemma~\ref{lemma:tu_discrete_choice}, the social welfare on the doctor side, defined as the sum of the equilibrium payoffs, can be expressed as
\[
\sum_{i \in I} u_i = \sum_{x \in X} n_x \left( \frac{1}{n_x} \sum_{i \in x} \max_{y \in Y_0} \left\{ U_{xy} + \epsilon_{iy} \right\} \right).
\]
When the number of doctors of type $x$, denoted by $n_x$, is large, the average
\[
\frac{1}{n_x} \sum_{i \in x} \max_{y \in Y_0} \left\{ U_{xy} + \epsilon_{iy} \right\}
\]
is well approximated by its expected value: $\E_{\epsilon_i \sim P_x} \left[ \max_{y \in Y_0} \left\{ U_{xy} + \epsilon_{iy} \right\} \right]$.
Thus, the social welfare on the doctor's side becomes
\[
G(U) \coloneqq \sum_{x \in X} n_x\, \E_{\epsilon_i \sim P_x} \left[ \max_{y \in Y_0} \left\{ U_{xy} + \epsilon_{iy} \right\} \right].
\]
Similarly, if the number of slots of type $y$, denoted by $m_y$, is large, the social welfare on the hospital side is approximated by
\[
H(V) \coloneqq \sum_{y \in Y} m_y\, \E_{\eta_j \sim Q_y} \left[ \max_{x \in X_0} \left\{ V_{xy} + \eta_{xj} \right\} \right].
\]
Under the assumption that the cumulative distribution functions of the error terms are continuously differentiable, the Williams-Daly-Zachary theorem \citep{McFadden1980-af} implies that
\[
\frac{\partial G}{\partial U_{xy}}(U) = n_x \Pr\!\left(\text{doctor of type } x \text{ chooses a slot of type } y\right).
\]
For sufficiently large $n_x$, the product $n_x\, \Pr\!\left(\text{doctor of type } x \text{ chooses a slot of type } y\right)$ provides a good approximation of $\mu_{xy}$.

\begin{ass}[Smooth Distribution]
\label{ass:smooth_cdf}
For each $x$ and $y$, the cumulative distribution functions $P_x$ and $Q_y$ are continuously differentiable.
\end{ass}

\subsection{Design and Evaluation of Policies with Unobserved Heterogeneity}

\label{sec:eae}
An aggregate-level matching market is characterized by the tuple
\[
\mM \coloneqq (X, Y, n, m, Z, z, \Phi, P, Q),
\]
where $n \coloneqq (n_x)_x$, $m \coloneqq (m_y)_y$, $\Phi = (\Phi_{xy})_{xy}$, $P = (P_x)_x$, and $Q = (Q_y)_y$. The policymaker, facing unobserved heterogeneity, seeks (i) to determine the optimal taxation policy $w$ that maximizes social welfare subject to regional constraints and (ii) to compute the matching and social welfare outcomes for various policies. For these purposes, we formulate optimization problems that depend solely on the aggregate-level surplus $(\Phi_{xy})_{x,y}$. An additional technical assumption on the error term distributions guarantees that $G$ and $H$ are strictly convex,\footnote{\label{fn:conjugacy_mu_UV}See Appendix~\ref{app:proof_thm_eae} for details.} thereby ensuring a one-to-one correspondence between an aggregate-level matching $\mu$ and the aggregate-level utilities $U$ and $V$ given taxation policy $w$.

\begin{ass}[Full Support]
\label{ass:full_supp}
For each $x$ and $y$, $\mathrm{supp}(P_x) = \R^{|Y_0|}$ and $\mathrm{supp}(Q_y) = \R^{|X_0|}$.
\end{ass}

We now define the following concave programming problem, which is a counterpart of the social surplus maximization problem $\mathrm{(P_0)}$ for the case without unobserved heterogeneity:
\begin{optproblem}{P}
\label{opt:natural_agg_eq_primal}
\begin{alignat}{3}
 &\underset{\mu \geq 0}{\text{maximize}}  & \quad & \sum_{(x, y) \in T } \mu_{xy} \Phi_{xy} + \mathcal{E}(\mu)&  \\
 &\text{subject to} & \quad & \sum_{y \in Y_0} \mu_{xy} = n_x                                 &\forall x \in X,\\
 &                  & \quad & \sum_{x \in X_0} \mu_{xy} = m_y                               &\forall y \in Y,\\
 &                  & \quad & \uo_z \leq \sum_{y \in z} \sum_{x \in X} \mu_{xy} \leq \bo_z  &\forall z \in Z, 
\end{alignat}
\end{optproblem}
where
\[
\mathcal{E}(\mu) \coloneqq -G^*(\mu) - H^*(\mu),
\]
and $G^*$ and $H^*$ denote the Legendre-Fenchel transforms of $G$ and $H$, respectively.\footnote{Since $G$ and $H$ are proper convex functions, their Legendre-Fenchel transforms are well-defined.} Its dual problem is given by
\begin{optproblem}{D}
\label{opt:natural_agg_eq_dual}
\begin{alignat}{2}
\underset{U, V, \bw_z, \uw_z}{\text{minimize}} \quad & G(U) + H(V) + \sum_{z \in Z} \bo_z \bw_z - \sum_{z \in Z} \uo_z \uw_z \ \\
 \text{subject to}\quad &U_{xy} + V_{xy} \geq \Phi_{xy} - \bw_{z(y)} + \uw_{z(y)} &\forall (x, y) \in T, \\
 &\bw_z \geq 0, \ \uw_z \geq 0 &\forall z \in Z.
\end{alignat}
\end{optproblem}

The primal problem \optprobref{opt:natural_agg_eq_primal} admits an optimal solution because its objective function is continuous and its feasible set is compact and nonempty. By strong duality,\footnote{The constraints of the primal problem satisfy the weak Slater condition, as they are affine in $\mu$.} the dual problem \optprobref{opt:natural_agg_eq_dual} attains the same optimal value. Moreover, both \optprobref{opt:natural_agg_eq_primal} and \optprobref{opt:natural_agg_eq_dual} possess unique solutions, as we show below.

The optimal value of \optprobref{opt:natural_agg_eq_primal} and \optprobref{opt:natural_agg_eq_dual} represents the social surplus generated under equilibrium matching $\mu$. \footnote{For any $w$, we can first rewrite the dual problem of $(P_0)$ using the aggregate level objects. We can show that $(D_w)$ defined in the statement of Proposition~\ref{prop:ae} coincides with the dual problem under the large market approximation (see Section~1.4 of \cite{Galichon_poisson} for more details.) Then, this statement follows from the second part of Theorem~\ref{theorem:EAE_opt}. See also Appendix~\ref{sec:comparison} for more details about the large market approximation.} The following theorem states that, given regional constraints $(\bo_z, \uo_z)_z$, the optimal taxation policy $w^*$ together with the corresponding aggregate-level matching and utilities $(\mu, U, V)$ are characterized by the solutions to \optprobref{opt:natural_agg_eq_primal} and \optprobref{opt:natural_agg_eq_dual}.

\begin{theorem}
\label{theorem:EAE_opt}
Assume that Assumptions \ref{ass:indep_error}--\ref{ass:full_supp} hold. Fix any aggregate-level matching market $\mM$. Then,
\begin{enumerate}
    \item There is a unique solution to \optprobref{opt:natural_agg_eq_primal} and \optprobref{opt:natural_agg_eq_dual}, denoted by $\mu^*$ and $(U^*, V^*, \bar{w}^*, \underline{w}^*)$, respectively.
    \item Define, for each $z$, 
    \[
    w_z^* \coloneqq \mathbbm{1}\{\bar{w}_z^* > 0\}\bar{w}_z^* - \mathbbm{1}\{\underline{w}_z^* > 0\}\underline{w}_z^*.
    \]
    Then, $w^*$ is the optimal taxation policy, and $(\mu^*, U^*, V^*)$ constitute the corresponding aggregate-level matching and utilities.
\end{enumerate}
\end{theorem}
\begin{proof}
See Appendix~\ref{app:proof_thm_eae}.
\end{proof}

We can also obtain the aggregate-level matching and utilities $(\mu, U, V)$ corresponding to any given taxation policy $w$ by solving the following pair of optimization problems.

\begin{prop}
\label{prop:ae}
Assume that Assumptions \ref{ass:indep_error}--\ref{ass:full_supp} are satisfied. For any aggregate-level matching market $\mM$ and any taxation policy $w$, the resulting aggregate-level matching and utilities, denoted by $(\mu(w), U(w), V(w))$, are characterized as the solutions to
\begin{optproblem}{$P_w$}
\label{opt:w_agg_eq_primal}
\begin{alignat}{3}
\text{maximize}_{\mu \geq 0} \quad & \sum_{(x,y) \in T} \mu_{xy} \Bigl( \Phi_{xy} - w_{z(y)} \Bigr) + \mathcal{E}(\mu) \\
\text{subject to} \quad & \sum_{y \in Y_0} \mu_{xy} = n_x, &&\quad \forall x \in X, \\
& \sum_{x \in X_0} \mu_{xy} = m_y, &&\quad \forall y \in Y,
\end{alignat}
\end{optproblem}
and
\begin{optproblem}{$D_w$}
\label{opt:w_agg_eq_dual}
\begin{alignat}{2}
\text{minimize}_{U,V} \quad & G(U) + H(V) \\
\text{subject to} \quad & U_{xy} + V_{xy} \geq \Phi_{xy} - w_{z(y)}, \quad &&\forall (x,y) \in T.
\end{alignat}
\end{optproblem}
\end{prop}

For a given taxation policy $w$, the tuple $(\mu(w), U(w), V(w))$ defined in Proposition~\ref{prop:ae} is referred to as an \emph{aggregate equilibrium (AE)} under $w$. When $w$ is optimal, the equilibrium is termed the \emph{efficient aggregate equilibrium (EAE)}.

\section{Empirical Strategy}
\label{sec:empirical_strategy}
We begin by mapping the primitives and equilibrium objects in the model described in Section \ref{sec:model} and \ref{sec:theoretical_results} to their empirical counterparts in the doctor-hospital matching market. 
We use the following notation:\footnote{Type $x$ and $y$ in previous sections are replaced by $s$ (school) and $h$ (hospital), respectively.} a doctor is denoted by \(i\), and a job slot is denoted by \(j\). Each doctor belongs to a medical school, and each slot is offered by a hospital, where \(s\) represents a school and \(h\) represents a hospital. We consider \(s\) and \(h\) as observable types of doctors and slots, using \(s(i)\) to denote the medical school to which doctor \(i\) belongs, and \(h(j)\) to denote the hospital offering slot \(j\). The matching market operates over \(T \in \mathbb{N}\) periods, with \(t\) denoting each observation period. Let \(Z\) denote the set of regions and \(z(h)\) denote the region to which hospital \(h\) belongs, assuming \(z(h)\) remains constant over time. The aggregate-level joint surplus at time \(t\) is denoted by \(\Phi_{sht}\). The unobserved part (error term) of doctor \(i\)'s preference for hospital \(h\) is denoted by \(\epsilon_{iht}\), while the unobserved part of slot \(j\)'s preference for school \(s\) is denoted by \(\eta_{sjt}\).\footnote{In our model, different slots in a hospital may have different preferences. This is possible when, for example, the admission office is composed of members with varying tastes for schools, such as a strong preference for the school from which they graduated. \(\eta_{sjt}\) reflects these differences in the committee members' preferences.} The net joint surplus satisfies the equality: \(\Phi_{ijt} = \Phi_{s(i)h(j)t} + \epsilon_{ih(j)t} + \eta_{s(i)jt}\) for each \(i \in I\), \(j \in J\), and \(t \in [T]\).\footnote{\([T] \coloneqq \{1, 2, \dots, T-1, T\}\).}
The matching \((d_{ijt})_{i,j,t}\) is not observable. Instead, the available data comprises the aggregate-level matching \((\mu_{sht})_{s,h,t}\), which is the number of matches between medical school \(s\) and hospital \(h\) at time \(t\).

We suppose that the observed numbers of matches in a year constitute the aggregate equilibrium of the matching market of the year under no tax.
\footnote{In other words, we do not consider the current matching outcome to be affected by any monetary intervention aimed at satisfying regional constraints. In Appendix \ref{sec:appendix_AE_EAE}, we check if this assumption is valid in our data. We estimate our model under the assumption that the matching outcome forms the efficient aggregate equilibrium under a regional constraint. The estimated value of tax levied does not increase, whereas the regional constraint set by the policymaker is gradually tightened. This result indicates that ``implicit'' taxation has never been implemented during our data period.}
Hence, following \cite{galichon_cupids_2021}, we can identify and estimate the aggregate-level utilities of both sides by polynomial functions. This step is the first stage of our estimation described in Section~\ref{sec:strategy_first}.
Additionally, we use the data on doctors’ salaries paid by hospitals to evaluate welfare in monetary terms. For this objective, we impose an additional structure on the composition of the net joint surplus. We then introduce the concept of transfers between matched pairs in the individual-level market and define an aggregate-level transfer used in our estimation. In Section~\ref{sec:strategy_second}, we describe the second step of our estimation, which includes a measurement model for connecting the salary to the aggregate-level transfer.


\subsection{Transfer}
To define a model object corresponding to the observed salary, we impose an additional structure on the net joint surplus. 
The \emph{base utility} of $i$ when matched with $j$ at time $t$ is the utility felt by $i$ when matching with $j$ net of transfer, and is denoted by $U^{\mathrm{base}}_{ijt}$. Similarly, the base utility of $j$ when matched with $i$ at time $t$ is denoted by $V^{\mathrm{base}}_{ijt}$.
The net joint surplus is the sum of the base utilities of a doctor and a slot forming the match, i.e., 
\begin{align}
\label{eq:utility_structure}
\Phi_{ijt} = U^{\mathrm{base}}_{ijt} + V^{\mathrm{base}}_{ijt}.
\end{align}

Furthermore, in accordance with additive separability (Assumption \ref{ass:additive_separability}), we re-interpret the i.i.d.~error terms as the unobserved taste shocks of the agents on both sides of the market. In other words, we assume the following utility structure:
\begin{ass}
    \label{ass:surplus_composition}
\begin{align}
        U^{\mathrm{base}}_{ijt} = U^{\mathrm{base}}_{sht} + \epsilon_{iht}, \quad 
        V^{\mathrm{base}}_{ijt} = V^{\mathrm{base}}_{sht} + \eta_{sjt}.
\end{align}
\end{ass}


We call $U^{\mathrm{base}}_{sht}$ and $V^{\mathrm{base}}_{sht}$ by \emph{aggregate-level base utility}: they are a part of base utilities, which are determined by the observable characteristics.
As a direct implication of Assumption \ref{ass:additive_separability},  \ref{ass:surplus_composition} and \eqref{eq:utility_structure}, we have $\Phi_{sht} = U^{\mathrm{base}}_{sht} + V^{\mathrm{base}}_{sht}$.
Note that the aggregate-level base utility $U^{\mathrm{base}}_{sht}$ can be different from the aggregate-level utility $U_{sht}$ introduced in \eqref{eq:agg_level_util}.

Fix any period $t$. Consider a matched pair $(i,j)$ with $h(j)=h$ and $z(h(j)) = z$ for some $h$ and $z$.
We define \textit{individual-level transfer from hospital $h$ to doctor $i$}, denoted by $\tau_{iht}$, as follows:
\begin{align}
    \label{def:transfer}
        \tau_{iht} \coloneqq u_{it} - U_{ijt}^{\mathrm{base}}.
\end{align} 
In equilibrium, doctor $i$ enjoys equilibrium payoff $u_{it}$, which could be different from $U_{ijt}^{\mathrm{base}}$. We interpret the difference between equilibrium payoff and base utility as the individual-level transfer from the hospital side to the doctor side.

Now we define an \emph{aggregate-level transfer from hospital $h$} as the average of the individual-level transfer in a hospital $h$ and denote it by $\iota_{ht}$:
\begin{align}
    \iota_{ht} \coloneqq \frac{1}{\left|D(h)_t \right|} \sum_{i\in D(h)_t}  \tau_{iht},
\end{align}
where $D(h)_{t}$ is the set of doctors matched with any slot of hospital $h$ at time $t$.
We can show the following identities: the aggregate-level transfer from a hospital is equal to the weighted average of the gap between the aggregate-level utility and the aggregate-level base utility. We use these identities as moment conditions to identify the aggregate-level base utility.
\begin{prop}
\begin{align}
\label{eq:identity}
    \iota_{ht} = \sum_s \omega_{sht}\left( U_{sht} - U^{\mathrm{base}}_{sht} \right),\ \iota_{ht} = \sum_s \omega_{sht}\left( V^{\mathrm{base}}_{sht} - V_{sht} \right)
\end{align}
where $\omega_{sht} = \frac{\mu_{sht}}{\sum_{s'} \mu_{s' ht}}$.
\end{prop}

\subsection{Estimation}
Based on the observable characteristics of $s$ and $h$, we have a set of variables related to the preferences: we use $X^{U,\mathrm{base}}_{sht}$ as the variables for $U^{\mathrm{base}}_{sht}$, and $X^{V,\mathrm{base}}_{sht}$ as the variables for $V^{\mathrm{base}}_{sht}$.
We assume linear structure on both of the preferences: $U^{\mathrm{base}}_{sht} = X_{sht}^{U,\mathrm{base}\prime}\beta_U,\ V^{\mathrm{base}}_{sht} = X_{sht}^{V,\mathrm{base}\prime}\beta_V$.
Our parameters of interest are $\beta_U$ and $\beta_V$. We use $\theta$ to indicate the vector of these parameters: $\theta \coloneqq \left(\beta_U, \beta_V \right)$.

Our estimation consists of the following two steps:
\begin{enumerate}
    \item Estimate the aggregate-level utilities $U_{sht}$ and $V_{sht}$ for every $t$, and then
    \item Estimate $\theta$ using the estimated aggregate-level utilities and the observed salaries.
\end{enumerate}

\subsubsection{First Step}
\label{sec:strategy_first}
Despite the nonparametric identification results obtained in \cite{galichon_cupids_2021}, we estimate the parametrized version of the aggregate-level utilities. This is because some pairs of schools and hospitals have zero matches in practice.
Hence, in the first step, we use the moment matching estimator proposed in \cite{galichon_cupids_2021} to estimate $U_{sht}$ and $V_{sht}$.\footnote{Note that our estimation target is $U_{sht}$ and $V_{sht}$. The difference from the case of \cite{galichon_cupids_2021} is that we just need one side fixed effect.}

By formulating the aggregate matching outcome $\mu_{sht}$ as a realization of a Poisson distribution, it is possible to estimate the aggregate-level utilities by a Poisson regression with fixed effects.
For a regressor in the Poisson regressions, we make a set of polynomials for some degree from $X^{U}_{sht}$ and $X^{V}_{sht}$, which is denoted by $X^{\mathrm{poly}}_{sht}$. We model the aggregate-level utilities as follows:
\begin{align}
\label{eq:systematic_polynomial_approx}
    U_{sht} = X_{sht}^{\mathrm{poly}\prime } \beta^{\mathrm{poly}}_{U},\ V_{sht} = X_{sht}^{\mathrm{poly} \prime} \beta^{\mathrm{poly}}_{V}.
\end{align}
We use $\hat{\beta}^{\mathrm{poly}}_{U}$ and $\hat{\beta}^{\mathrm{poly}}_{V}$ as the estimated coefficients attached with the polynomials.
And we define the estimated aggregate-level utilities by $\hat{U}_{sht} \equiv X_{sht}^{\mathrm{poly}\prime } \hat{\beta}^{\mathrm{poly}}_{U}$ and $\hat{V}_{sht} \equiv X_{sht}^{\mathrm{poly}\prime } \hat{\beta}^{\mathrm{poly}}_{V}$.
In Appendix~\ref{app:poisson_derivation}, we describe the relationship between the moment matching estimator and the two-way fixed effect Poisson regression, which complements the discussion in \cite{Galichon_poisson}.

\subsubsection{Second Step}
\label{sec:strategy_second}
When we directly observe the values of $\iota_{ht}$ for all hospitals and periods, we can use \eqref{eq:identity} to construct an estimator of $\theta$. By inserting the estimation results in the first step, we construct the following moment conditions for $\theta$:
\begin{equation}
\label{eq:moment_conditions}
    \begin{aligned}
    \sum_{s} \omega_{sht} \left(X_{sht}^{U,\mathrm{base}\prime}\beta_U  \right) = \sum_{s}\omega_{sht} \hat{U}_{sht} - \iota_{ht},\  \forall \ h,t \\
    \sum_{s} \omega_{sht} \left(X_{sht}^{V,\mathrm{base}\prime}\beta_V \right)= \sum_{s}\omega_{sht} \hat{V}_{sht} + \iota_{ht},\  \forall \ h,t.
    \end{aligned}
\end{equation}
In Appendix \ref{sec:MC}, we show a Monte Carlo exercise adopting this approach to show how to recover the structural parameters.


In practice, we face a measurement problem: we cannot observe the aggregate-level transfer $\iota_{ht}$. Instead, we can only observe the realized salaries paid by hospitals every period. It is important to note that salary represents only one component of the total transfer in this market, which also includes non-monetary aspects. For example, the hospital accepts the risk of medical incidents by allowing the less-experienced medical interns to get more practice on the job. The workload in a hospital also comprises such unobserved transfers. Furthermore, we expect that the observed salaries correlate with these unobservable terms, which makes the identification more demanding.

For this problem, we introduce a measurement model to connect the observed salaries to $\iota_{ht}$.
Denoting the salary paid in a hospital $h$ at time $t$  by $S_{ht}$, we assume that both schools and hospitals have quasi-linear utilities with respect to monetary transfers:
\begin{equation}
    \begin{aligned}
        \iota_{ht} = \gamma_{0,U} + \gamma_{1,U} S_{ht} + \psi^{U}_{ht}\\
        -\iota_{ht} = \gamma_{0,V} + \gamma_{1,V} S_{ht} + \psi^{V}_{ht}.
    \end{aligned}
    \label{eq:quasi_linear}
\end{equation}
$\gamma_{1,V}$ is expected to be negative because the salary is the amount of money paid to the doctor by the hospital. $\psi^{U}_{ht}$ is the unobserved transfer from the hospital to the matched doctors, and $\psi^{V}_{ht}$ is the same unobserved transfer from the doctor to the hospital. 

The unobserved transfer is likely correlated with the observed monetary transfer. Hence, we need some instrumental variables that have an influence only on the salary.
As such instrumental variables, we use the characteristics of the surrounding hospitals as in \cite{BLP}.
The rationale behind these instruments is that a hospital considers the characteristics of other hospitals when setting its salary, whereas the unobserved transfer is not known to others. 
In practice, we use only the characteristics of nearby hospitals located within a 20 km radius of a given hospital as instrumental variables for the salary, even though our model accounts for all hospitals operating within the same market.

By combining the moment conditions and the measurement model, estimating equations are specified as follows:
\begin{equation}
    \begin{aligned}
    \label{eq:estimating_equation}
    &\sum_{s}\omega_{sht} \hat{U}_{sht} = \gamma_{0,U} + \gamma_{1, U}S_{ht} + \sum_{s} \omega_{sht} \left(X_{sht}^{U\prime}\beta_U \right) + \psi_{ht}^{U}\\
    &\sum_{s} \omega_{sht}\hat{V}_{sht} = \gamma_{0,V} + \gamma_{1,V} S_{ht} + \sum_{s} \omega_{sht} \left(X_{sht}^{V\prime}\beta_V \right) + \psi_{ht}^{V}.
    \end{aligned}
\end{equation}
When we take the weighted average of every variable in $X_{sht}^{U}$ and $X_{sht}^{V}$ as independent variables in the right-hand side, the above equations are just linear equations in $\theta$. 
We estimate these linear equations using the instrumental variables discussed above.

\section{Empirical Results}
\label{sec:empirical_results}
In this section, we show the estimation results.
In Section \ref{sec:results_first}, we show the estimation results of our first step.
In Section \ref{sec:results_second}, we show the estimation results of our second step: the preference parameters of both sides of the market.

\subsection{First Step}
\label{sec:results_first}
From 2017 to 2019, we estimate the aggregate-level utility of both sides separately. The degree of the polynomial approximating the aggregate-level utility is our tuning parameter. In this section, we show the results obtained when we choose three as the degree of polynomials, as this choice allows a more flexible functional form. In Appendix \ref{sec:appendix_empirical_results}, we show the results obtained when the degree of the polynomials is set to two.

\begin{figure}[tbp]
\begin{center}
    \centering
    \begin{subfigure}{\textwidth}
        \centering
        \includegraphics[width=\textwidth]{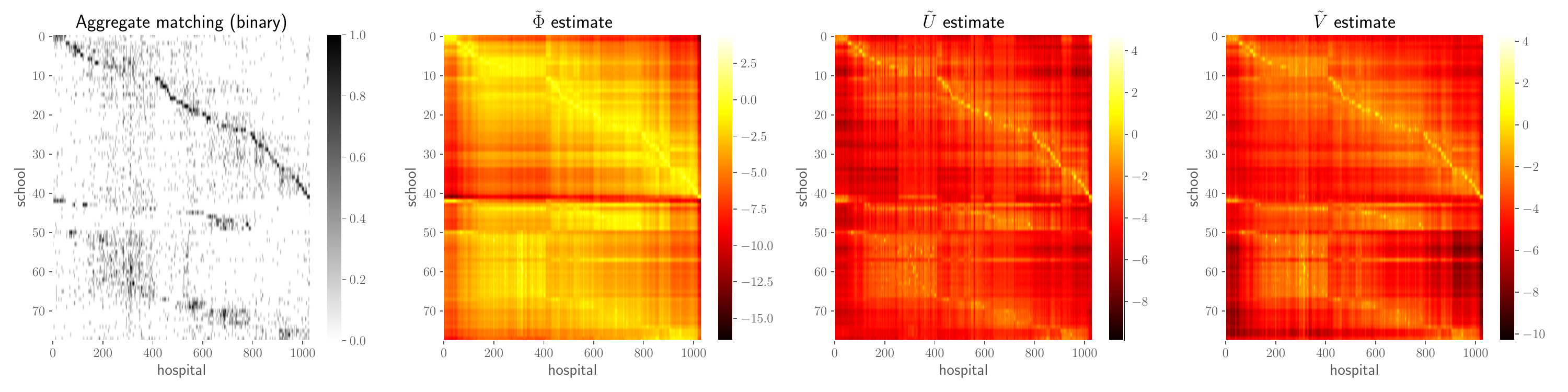}
        \caption{2017}
        \label{fig:label-2017}
    \end{subfigure}
    \vfill
    \begin{subfigure}{\textwidth}
        \centering
        \includegraphics[width=\textwidth]{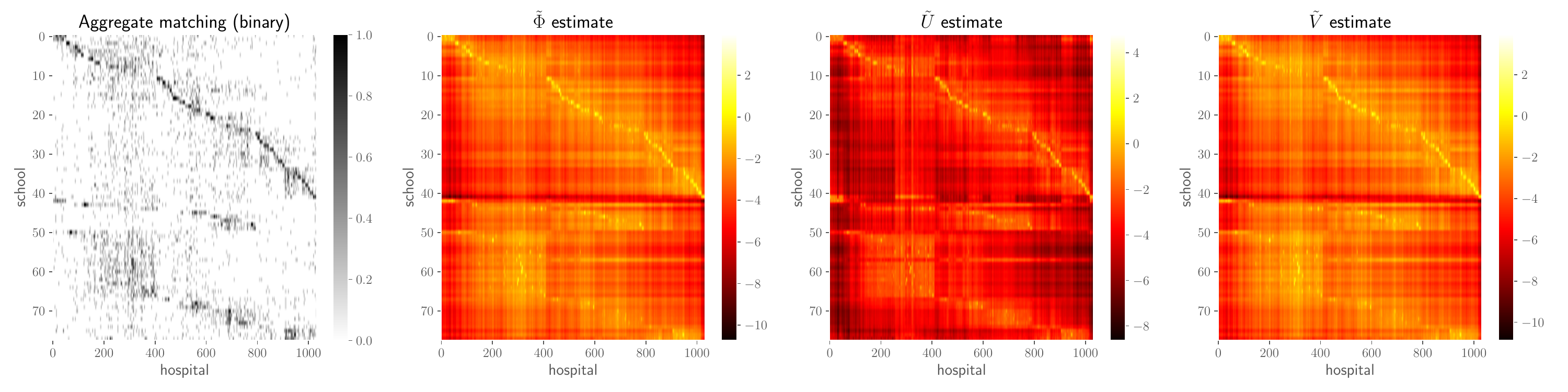}
        \caption{2018}
        \label{fig:label-2018}
    \end{subfigure}
    \vfill
    \begin{subfigure}
{\textwidth}
        \centering
        \includegraphics[width=\textwidth]{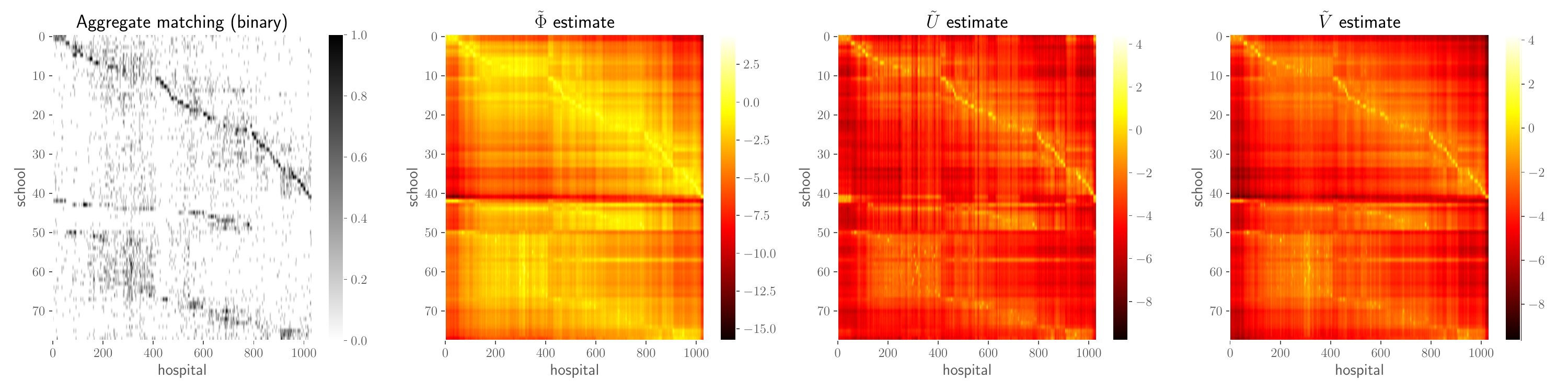}
        \caption{2019}
        \label{fig:label-2019}
    \end{subfigure}
    \caption{Aggregate Matchings, Aggregate-level Utilities and Social Surpluses.}
    \label{fig:first_stage_all_3}
    \vskip 10pt
    \begin{footnotesize}
    \end{footnotesize} 
\end{center}
\end{figure}

Each panel in Figure \ref{fig:first_stage_all_3} shows the observed matching pattern (leftmost figure), the estimated aggregate-level social surpluses (second from the left), the estimated aggregate-level utilities from the doctors’ side (second from the right), and the estimated aggregate-level utilities from the hospitals’ side (rightmost figure). All figures are heatmaps, where the vertical axis represents the indices of medical schools and the horizontal axis represents the indices of hospitals.\footnote{The way to set the index is described in Section \ref{sec:data}.} In the right three heatmaps, brighter colors indicate higher values.

The visible pattern in the aggregate matching is well captured by our estimation. Specifically, all the heatmaps reflect the likelihood of matches between graduates from local public universities and nearby hospitals. As illustrated in the rightmost figures, even from the hospitals' perspective, graduates from closer medical schools provide higher aggregate-level utility. \footnote{In contrast, \cite{agarwal_empirical_2015} highlights that distance influences doctors' decisions but does not account for preference heterogeneity concerning distance on the hospital side.} This pattern can be interpreted as evidence of the importance of local knowledge: knowledge of the local medical environment is so critical that hospitals prefer to hire local doctors.

\subsection{Second Step}
\label{sec:results_second}
Based on the estimation results from the first step, we calculate the marginal effects of covariates on the base utilities. We include the following hospital-specific variables as the characteristics in the preference of doctor side: \textit{logarithm of number of beds of a hospital}, which acts as the measure of the size and the quality of the hospital, \textit{dummy variable of university hospital}, \textit{dummy variable of governmental hospital}, \textit{dummy variable of urban area}, and \textit{dummy variable of Tokyo}.\footnote{Tokyo is by far the largest metropolitan area compared to other urban regions and holds a unique status as the capital, which is why we included a dedicated dummy variable for it.} Furthermore, we include the following pairwise variables: \textit{logarithm of distance}, \textit{logarithm of number of previous matches}, and \textit{dummy variable of affiliation relationship}. As the characteristics in the preference of hospital side, in addition to the pairwise variables, we include the following university-specific variables: \textit{dummy variable of public university}, \textit{T-score of the entrance exam}, \textit{dummy variable of urban areas}, and \textit{dummy variable of Tokyo}.

\begin{table}[tbp]\centering
\def\sym#1{\ifmmode^{#1}\else\(^{#1}\)\fi}
\captionsetup{justification=centering}
\caption{Preference Parameters, Degree of Polynomials = 3}
\label{tab:results_degree3}
\scalebox{0.9}{
\begin{tabular}{l*{4}{c}}
\hline\hline
            &\multicolumn{1}{c}{(1)}   &\multicolumn{1}{c}{(2)}   &\multicolumn{1}{c}{(3)}   &\multicolumn{1}{c}{(4)}   \\
            &  University   &University (IV)   &    Hospital   &Hospital (IV)   \\[1em]
\hline
Constant    &      -5.494***&      -6.724***&       1.194   &       1.954** \\
            &     (0.187)   &     (0.327)   &     (0.776)   &     (0.874)   \\
[1em]
Salary (million Yen)&       0.574***&       2.527***&       0.634***&      -1.780** \\
            &     (0.128)   &     (0.479)   &     (0.146)   &     (0.792)   \\
[1em]
Tokyo       &      0.0371   &       0.112** &      0.0251   &     -0.0940   \\
            &    (0.0376)   &    (0.0461)   &    (0.0618)   &    (0.0712)   \\
[1em]
urban       &     -0.0307   &      0.0572*  &       0.205***&       0.125***\\
            &    (0.0264)   &    (0.0333)   &    (0.0346)   &    (0.0447)   \\
[1em]
log(Distance)&      -0.438***&      -0.436***&      -0.409***&      -0.373***\\
            &    (0.0150)   &    (0.0121)   &    (0.0194)   &    (0.0217)   \\
[1em]
log(Previous Match)&       1.245***&       1.229***&       1.551***&       1.560***\\
            &    (0.0305)   &    (0.0231)   &    (0.0411)   &    (0.0463)   \\
[1em]
log(Beds)   &       0.744***&       0.814***&               &               \\
            &    (0.0303)   &    (0.0353)   &               &               \\
[1em]
Public university&               &               &       0.287***&       0.289***\\
            &               &               &    (0.0569)   &    (0.0610)   \\
[1em]
T-score    &               &               &      -1.660** &      -2.774***\\
            &               &               &     (0.734)   &     (0.810)   \\[1em]
\hline
\(N\)       &        2847   &        2627   &        2847   &        2627   \\
\hline\hline
\multicolumn{5}{l}{\footnotesize Standard errors in parentheses. * p<0.1, ** p<0.05, *** p<0.01}\\
\end{tabular}}
\end{table}

The estimation results are shown in Table \ref{tab:results_degree3}. Columns 1 and 3 correspond to the case of OLS. Columns 2 and 4 are the results obtained when we use BLP instruments for salary.
The direction of the estimated coefficients of salary is aligned with the expected signs when we use instruments. We adopt the results obtained using IV estimations as our main estimation results.
Although we cannot reject the null hypothesis that \(\left|\gamma_{1,U}\right| = \left|\gamma_{1,V}\right|\) at the $5\%$ significance level, we use different coefficients in the subsequent counterfactual analysis rather than assuming these two are equal.\footnote{The $p$-value of the test is \(0.22\).}

\begin{table}[tbp]\centering
\def\sym#1{\ifmmode^{#1}\else\(^{#1}\)\fi}
\captionsetup{justification=centering}
\caption{University Preference Parameters (Unit: Million Yen) \\ Degree of Polynomials = 3}
\label{tab:monetary_u_3}
\scalebox{0.9}{
\begin{tabular}{l*{3}{c}}
\hline\hline
            &\multicolumn{1}{c}{(1)}&\multicolumn{1}{c}{(2)}&\multicolumn{1}{c}{(3)}\\
Coefficient of Salary =           &\multicolumn{1}{c}{$2.527$} &\multicolumn{1}{c}{$2.412$} &\multicolumn{1}{c}{$2.519$}\\[1em]
\hline
log(Distance)&      -0.173\sym{***}&      -0.181\sym{***}&      -0.173\sym{***}\\
            &      (0.03)         &      (0.03)         &      (0.03)         \\[1em]
log(Previous Match)&       0.486\sym{***}&       0.508\sym{***}&       0.487\sym{***}\\
            &      (0.09)         &      (0.10)         &      (0.09)         \\[1em]
Affiliation &       0.182\sym{***}&       0.186\sym{**} &       0.182\sym{**} \\
            &      (0.06)         &      (0.06)         &      (0.06)         \\[1em]
University Hospital&      -0.005         &      -0.001         &      -0.005         \\
            &      (0.03)         &      (0.03)         &      (0.03)         \\[1em]
Governmental Hospital&       0.005         &       0.005         &       0.005         \\
            &      (0.01)         &      (0.01)         &      (0.01)         \\[1em]
log(Beds)   &       0.322\sym{***}&       0.338\sym{***}&       0.324\sym{***}\\
            &      (0.06)         &      (0.06)         &      (0.06)         \\[1em]
\hline
\(N\)       &        2627         &        2627         &        2627         \\
Urban $\times$ Year &  & $\surd$ & $\surd$  \\
Tokyo $\times$ Year &  &  & $\surd$  \\
\hline\hline
\multicolumn{4}{l}{\footnotesize Standard errors in parentheses. * p<0.1, ** p<0.05, *** p<0.01}\\
\end{tabular}}
\vskip 10pt
\end{table}

The distance between the university and the hospital negatively influences both the preferences of doctors and hospitals, which is aligned with the estimation results of the first step.
Furthermore, as intuitively and anecdotally validated, the previous number of matches has a strong influence on the preferences. The more previous matches lower the hurdle to apply for the doctor side, and the uncertainty about quality is cleared from the hospital's perspective.
The quality measure for both sides is also impactful. From the doctor's side, the number of beds in a hospital increases the utility obtained when matching with a hospital. The hospital prefers doctors from a public university, which is more competitive to enter.
We find positive coefficients for the Tokyo dummy and the urban dummy on the doctor's side, and a positive coefficient for the urban dummy on the hospital's side.
\footnote{If there is an implicit tax on the prefectures in the urban areas, this coefficient might be underestimated. In Appendix \ref{sec:appendix_AE_EAE}, we exploit the tightening regional caps to check this existence and conclude that the current market does not face such implicit taxation.}


\begin{table}[tbp]\centering
\def\sym#1{\ifmmode^{#1}\else\(^{#1}\)\fi}
\captionsetup{justification=centering}
\caption{Hospital Preference Parameters (Unit: Million Yen)\\
Degree of Polynomials = 3}
\label{tab:monetary_v_3}
\scalebox{0.9}{
\begin{tabular}{l*{3}{c}}
\hline\hline
            &\multicolumn{1}{c}{(1)}&\multicolumn{1}{c}{(2)}&\multicolumn{1}{c}{(3)}\\
Coefficient of Salary =           &\multicolumn{1}{c}{$1.780$} &\multicolumn{1}{c}{$1.579$} &\multicolumn{1}{c}{$1.810$}\\[1em]
\hline
log(Distance)&      -0.209\sym{*}  &      -0.236\sym{*}  &      -0.206\sym{*}  \\
            &      (0.10)         &      (0.11)         &      (0.10)         \\[1em]
log(Previous Match)&       0.877\sym{*}  &       0.989\sym{*}  &       0.862\sym{*}  \\
            &      (0.39)         &      (0.44)         &      (0.38)         \\[1em]
Affiliation &      -1.235\sym{*}  &      -1.383\sym{*}  &      -1.215\sym{*}  \\
            &      (0.52)         &      (0.59)         &      (0.50)         \\[1em]
Public University&       0.163\sym{*}  &       0.188\sym{*}  &       0.160\sym{*}  \\
            &      (0.08)         &      (0.09)         &      (0.07)         \\[1em]
Prestige    &      -1.558\sym{*}  &      -1.764\sym{*}  &      -1.534\sym{*}  \\
            &      (0.66)         &      (0.74)         &      (0.64)         \\[1em]
\hline
\(N\)       &        2627         &        2627         &        2627         \\
Urban $\times$ Year &  & $\surd$ & $\surd$  \\
Tokyo $\times$ Year &  &  & $\surd$  \\
\hline\hline
\multicolumn{4}{l}{\footnotesize Standard errors in parentheses. * p<0.1, ** p<0.05, *** p<0.01}\\
\end{tabular}}
\end{table}

Next, we evaluate the marginal effects in monetary units. For this purpose, we calculate the fractions of the estimated coefficients of the covariates relative to the coefficients of salary.
The fractions and the standard errors for the doctor side are shown in Table \ref{tab:monetary_u_3}, and the same ones for the hospital side are shown in Table \ref{tab:monetary_v_3}.
In both tables, we present results for three specifications that differ in whether they include interaction terms between year dummies and the urban and Tokyo dummy variables. The results remain qualitatively robust across these specifications.

For the doctor side, the estimates tell that the match with a hospital that is $10\%$ far away decreases the utility by from $0.017$ to $0.018$ million yen: this is about $\$108$. The number of previous matches, the affiliation relationship and the number of beds of a hospital play the positive influences: $10\%$ increase in the number of previous matches improves the utility by from $0.049$ to $0.051$ million yen, which is about $\$319$, the hiring by affiliated hospitals increases the utility by from $0.182$ to $0.184$ million yen, which is about $\$1,169$, and $10\%$ increase in the number of beds improves the utility by from $0.032$ to $0.034$ million yen, which is about $\$210$.

For hospital side, distance and the previous number of matches play the similar roles: the doctors from $10\%$ faraway university decrease the utility of hospital by from $0.021$ to $0.024$ million yen, which is about $\$146$, and $10\%$ increase in the number of previous matches improves the utility of hospital by from $0.086$ to $0.099$ million yes, which is about from $\$549$ to $\$632$. 
The indicator of public university also has a positive impact as expected: the premium of graduating from a public university is from $0.160$ to $0.188$ million yen, which is about from $\$1,022$ to $\$1,201$.

These marginal effects of the covariates are significant even compared to the aggregate-level utilities. In Appendix \ref{sec:appendix_relative_impacts}, we calculate the ratio of the estimated marginal effects of the covariates to the aggregate-level utility. For many covariates, a $10\%$ change in the covariates corresponds to approximately $1$--$5\%$ of the aggregate-level utility on both sides.
\section{Counterfactual Simulations}
\label{sec:cf_simulations}

\subsection{Decomposing the Welfare Loss of Cap-based Policy}
\label{sec:cf_simulation_agg}

We use our empirical estimates to conduct counterfactual simulations designed to evaluate the current regulatory framework against two alternative policies. This analysis enables us to quantify the welfare losses associated with the existing cap-based system and to evaluate the potential benefits of a more efficient, subsidy-based approach. By comparing these scenarios, we can decompose the sources of inefficiency and distinguish the welfare cost of the distributional goal itself from the cost of the policy instrument used to achieve it.

The first policy, \emph{Artificial Caps} (\textsf{AC}), simulates the status quo implemented by the Japan Residency Matching Program (JRMP). We compute the aggregate equilibrium under the set of residency positions that were actually available in the JRMP, subject to regional caps.

The second, \emph{No Caps} (\textsf{NC}), represents an unregulated welfare benchmark. This counterfactual assumes all residency positions are made available at each hospital's true capacity. 
To simulate this policy, we must infer the true capacity of each hospital, as this is unobservable under the current regulated regime. We define a hospital's true capacity as the maximum number of positions it reported to the JRMP between 2015 and 2023.\footnote{This definition provides a conservative estimate of each hospital's true capacity. As a result, our calculation of the total welfare loss from the cap system should be interpreted as a lower bound.} The welfare difference between the \textsf{NC} and \textsf{AC} scenarios measures the total efficiency loss created by the regional cap system.

The third policy, \emph{Optimal Subsidy} (\textsf{OS}), provides a constrained-optimal benchmark. In this scenario, we remove all artificial caps and instead introduce subsidies designed to maximize social surplus subject to regional constraints. These subsidies are set to achieve a specific distributional goal: ensuring that designated rural prefectures receive at least as many residents as they did under the AC policy.\footnote{By setting the floor for the number of residents equal to the outcome of the current policy, we adopt a conservative distributional objective. A more ambitious policy goal with higher floors would necessarily require a larger total subsidy. Thus, our estimate of the fiscal cost should be interpreted as a lower bound.} The optimal set of subsidies is then derived by solving the social surplus maximization problem \optprobref{opt:natural_agg_eq_primal} subject to these regional floor constraints.\footnote{We define rural prefectures as the 15 prefectures with the lowest physician match rates between 2017 and 2019, calculated as the total matches over the period divided by the sum of reported capacities.}


By construction, the three policies have a clear welfare ranking: the unconstrained \textsf{NC} policy is first-best, the \textsf{OS} policy is second-best, and the \textsf{AC} policy yields the lowest surplus.
This framework allows for a clear decomposition of the total inefficiency.
The comparison between the \textsf{OS} and \textsf{AC} outcomes isolates the welfare loss attributable to the choice of regulatory instrument (i.e., caps instead of subsidies). The remaining welfare difference, observed between the \textsf{NC} and \textsf{OS} outcomes, quantifies the inherent cost of the distributional goal itself.

\begin{table}[tbp]
\centering
\small
\caption{Comparison of Three Policies in 2017}
\label{tab:welfare_AE_2017}
\begin{threeparttable}
\begin{tabularx}{\textwidth}{l @{} RRR}
\toprule 
Policy & \multicolumn{1}{c}{\textsf{AC (Artificial Caps)}} & \multicolumn{1}{c}{\textsf{NC (No Caps)}} & \multicolumn{1}{c}{\textsf{OS (Optimal Subsidy)}} \\

\midrule
Artificial caps & Yes & No & No \\
Floor constraints & Yes & No & Yes \\
Subsidies & No & No & Yes \\
\midrule
Match rate & 0.868 & 0.912 & 0.912 \\
Doctors' welfare & 32795.8 & 33441.7 & 33444.5 \\
Hospitals' welfare & 29586.6 & 31578.1 & 31579.4 \\
Government's revenue & 0.0 & 0.0 & $[-10.5, -7.4]$ \\
Social welfare & 62382.3 & 65019.8 & $[65013.3, 65016.5]$ \\
\#(subsidized regions) & 0 & 0 & 3 \\
Average subsidy & 0.000 & 0.000 & -0.040 \\
\#(constraint violations) & 0 & 3 & 0 \\
\bottomrule
\end{tabularx}
\begin{tablenotes}
\footnotesize
\item[*] All welfare and revenue figures are expressed in units of 1 million JPY per month. Government revenue is positive when taxes are imposed on doctors and hospitals and negative when subsidies are provided. Doctors' and hospitals' welfare are scaled according to specification (1) in \Cref{tab:monetary_u_3} and \Cref{tab:monetary_v_3}. We present the bounds of the government's net revenue, scaled by the coefficients on the doctor side and the hospital side, respectively. The social welfare is the sum of doctors' welfare, hospitals' welfare, and the government's revenue. \#(constraint violations) counts the number of prefectures violating the lower bounds (among the 15 rural regions).
\end{tablenotes}
\end{threeparttable}
\end{table}

Table~\ref{tab:welfare_AE_2017} presents the simulation results for 2017. The findings for other years are qualitatively similar and are available in Appendix~\ref{sec:appendix_cf_other_years}. Unless otherwise specified, monetary values are in millions of JPY per month, calculated using the salary coefficients from specification (1) in Tables~\ref{tab:monetary_u_3} and~\ref{tab:monetary_v_3}. For the \textsf{OS} policy, we report bounds for government revenue and total welfare, reflecting different possible incidences of taxes and subsidies between doctors and hospitals.

Comparing the \textsf{AC} and \textsf{NC} scenarios reveals the direct impact of the caps. Removing them increases social surplus by approximately 2.6 billion JPY and raises the match rate by 4.4 percentage points. This welfare gain, however, comes at the expense of three rural prefectures receiving fewer residents, confirming that the caps are binding and effective at redirecting physicians, albeit inefficiently.

The \textsf{OS} policy demonstrates that these distributional goals can be met with minimal welfare loss. The surplus under \textsf{OS} is substantially higher than under \textsf{AC} and is nearly identical to the unconstrained \textsf{NC} benchmark. This finding suggests that the vast majority of the welfare loss from the current policy originates from the choice of regulatory instrument---the caps themselves---rather than from the underlying distributional constraint.

Figure~\ref{fig:welfare_diff} illustrates the prefectural-level welfare differences between the \textsf{OS} and \textsf{AC} policies. The gains from adopting subsidies are positive for all prefectures and are largest in urban areas that face the tightest caps under the JRMP. This reinforces the conclusion that the inefficiency originates primarily from restricting matches in high-demand urban centers.

\begin{figure}[tb]
    \centering
    \includegraphics[width=0.8\linewidth]{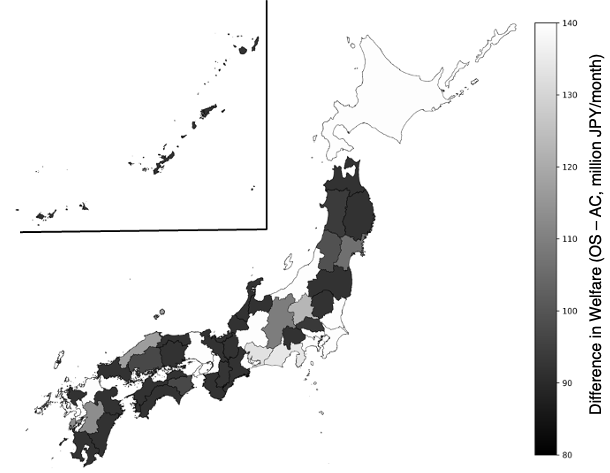}
    \caption{Difference in Welfare between \textsf{OS} and \textsf{AC} by Prefecture}
    \label{fig:welfare_diff}
\end{figure}

Finally, the \textsf{OS} policy appears fiscally practical. The required subsidy to achieve the distributional floor is approximately \$400 per month for each matched pair in the targeted rural prefectures, which represents 10–20\% of a typical resident's salary. The total national cost of this subsidy program would not exceed \$100,000 per month, suggesting that a shift to a more efficient subsidy-based policy is feasible.

\paragraph{Changes in matching patterns}
\begin{table}[t]
\centering
\small
\begin{threeparttable}
\caption{Coefficient Size Tests for Hospital Side}
\label{tab:coef_test_hospital_panels_pair}
\setlength{\tabcolsep}{6pt}

\begin{tabular}{lcccc}
\toprule
\textbf{Panel A: \textsf{NC} vs \textsf{AC}} & Public & Difficulty & Distance & Unmatch~Rate \\
\midrule
Urban       & $<^{***}$ & $<^{***}$ & $>^{***}$ & $<^{***}$ \\
\midrule\midrule
\textbf{Panel B: \textsf{OS} vs \textsf{NC}} & Public & Difficulty & Distance & Unmatch~Rate \\
\midrule
Urban       & $>^{***}$ & $>^{**}$ & $<^{***}$ & $>^{***}$ \\
\bottomrule
\end{tabular}

\begin{tablenotes}
\footnotesize
\item[*] Each cell compares coefficients across the two regimes indicated in the panel header.
``$>$'' (resp.\ ``$<$'') means the coefficient is larger (resp.\ smaller) under the first regime.
Significance: $^{***}p<0.01$, $^{**}p<0.05$, $^{*}p<0.10$ (two‐sided test).
\end{tablenotes}
\end{threeparttable}
\end{table}


\begin{table}[t]
\centering
\small
\begin{threeparttable}
\caption{Coefficient Size Tests for School Side}
\label{tab:coef_test_school_panels_pair}
\setlength{\tabcolsep}{6pt}

\begin{tabular}{lccc}
\toprule
\textbf{Panel A: \textsf{NC} vs \textsf{AC}} & Urban~Hosp. & Distance & Unmatch~Rate \\
\midrule
Public~University & $<$ & $<$ & $<$ \\
Difficulty        & $<$ & $<$ & $<$ \\
\midrule\midrule
\textbf{Panel B: \textsf{OS} vs \textsf{NC}} & Urban~Hosp. & Distance & Unmatch~Rate \\
\midrule
Public~University & $>$ & $<$ & $<$ \\
Difficulty        & $>^{*}$ & $<$ & $>$ \\
\bottomrule
\end{tabular}

\begin{tablenotes}
\footnotesize
\item[*] Each cell compares coefficients across the two regimes indicated in the panel header. ``$>$'' (resp.\ ``$<$'') means the coefficient is larger (resp.\ smaller) under the first regime. Significance: $^{***}p<0.01$, $^{**}p<0.05$, $^{*}p<0.10$ (two‐sided test).
\end{tablenotes}
\end{threeparttable}
\end{table}



We compare the equilibrium matching patterns between \textsf{AC}, \textsf{NC}, and \textsf{OS}. Specifically, we replicate the regression analysis from Section~\ref{sec:empirical_pattern} for every scenario and statistically test whether the coefficients on each covariate differ significantly. 
For these tests, we compute standard deviations of the estimates using cluster-robust standard errors at the unit level: the hospital level for the hospital perspective and the school level for the school perspective.
The results from the hospital perspective are presented in Table \ref{tab:coef_test_hospital_panels_pair}, and those from the hospital perspective are shown in Table \ref{tab:coef_test_school_panels_pair}. In both tables, Panel A compares \textsf{NC}, where regional caps in urban areas are removed, with \textsf{AC}, which represents the current market equilibrium. Panel B compares \textsf{OS}, which allows subsidies to meet floor constraints in rural areas, with \textsf{NC}.

We begin by examining Panel A, which captures the changes in matching patterns that occur when the regional caps in urban areas are removed. According to Table~\ref{tab:coef_test_hospital_panels_pair}, the newly available slots in urban hospitals are filled primarily by graduates from private universities, particularly those with lower entrance difficulty and located in more distant regions. While the unmatch rate in urban areas is already low, it decreases even further under the \textsf{NC} policy. From the school perspective shown in Table~\ref{tab:coef_test_school_panels_pair}, although the differences are not statistically significant, the first row displays a pattern consistent with this observation. The negative correlation between distance and both public status and difficulty also suggests that public universities, especially those located in rural areas, have become even more prominent sources of residents for rural hospitals under this scenario.

Next, we turn to Panel B, which illustrates changes in matching patterns resulting from the introduction of subsidies in rural areas. As shown in Table~\ref{tab:coef_test_hospital_panels_pair}, the changes in matching patterns from \textsf{NC} to \textsf{OS} (Panel B) are the exact opposite of those from \textsf{AC} to \textsf{NC} (Panel A) across all variables. This suggests that the subsidy effectively counteracts the influx of students into urban areas resulting from the removal of regional caps, encouraging them to remain in rural areas instead. The results from the school perspective in Table~\ref{tab:coef_test_school_panels_pair} are also consistent with this interpretation.

These changes in matching patterns help explain why the optimal subsidy required in our analysis is relatively small. The subsidy's primary role is to retain students from rural universities who would otherwise move to urban hospitals after the removal of regional caps. Because distance is a significant factor in student preferences, matching with a local hospital is not a prohibitively costly alternative for these students. Therefore, a modest subsidy is sufficient to persuade enough students to choose a local hospital, thereby meeting the distributional goal at a low fiscal cost.

\subsection{Inherent Inefficiency of Caps: An Evaluation of the Flexible DA Algorithm}
\label{sec:cf_simulation_indiv}


To further assess the limitations of cap-based regulations, we conduct a second set of counterfactual simulations evaluating a state-of-the-art matching algorithm from the literature. The \emph{Flexible Deferred Acceptance (FDA) algorithm}, a variant of the standard DA algorithm, adjusts hospital capacities to meet regional constraints dynamically, rather than relying on the pre-determined, fixed capacity reductions imposed by the current JRMP protocol. To distinguish whether the inefficiency of the current policy stems from a naive implementation or an inherent limitation of caps, this exercise compares the outcomes of the FDA algorithm with the three policies considered previously: Artificial Caps (\textsf{AC}), No Caps (\textsf{NC}), and Optimal Subsidy (\textsf{OS}).

This simulation exercise more closely mimics the institutional details of the JRMP. We generate individual-level preferences from which agents construct preference rankings, and then use a centralized algorithm to determine the final matches. Unlike the previous analysis, this approach computes individual-level outcomes without relying on the large-market approximation, and it determines matches using the DA algorithm rather than as a transferable-utility stable outcome. This allows us to check the robustness of our baseline findings from Section~\ref{sec:cf_simulation_agg}.

For the \textsf{AC} policy, we first estimate the aggregate-level utilities by solving the optimization problem \optprobref{opt:natural_agg_eq_primal} under the actual JRMP caps. We then generate individual doctor and hospital utilities by adding a random logit error to these aggregate utilities. Based on the resulting individual preferences, we construct preference lists and run the standard DA algorithm to determine the final matching under the JRMP caps. The \textsf{NC} policy simulation is analogous, except that we use each hospital's true capacity to compute the aggregate-level utilities.

For the \textsf{OS} policy, we set floor constraints requiring each designated rural prefecture to receive at least as many residents as it did under the AC policy. We then solve the optimization problem \optprobref{opt:natural_agg_eq_primal} to compute the optimal subsidy and the corresponding aggregate-level utilities. Individual preferences are generated from these utilities, and the standard DA algorithm determines the match outcome. 

Finally, for the \textsf{FDA} policy, individual utilities are identical to those in the \textsf{AC} scenario.\footnote{Since hospital capacities are computed during the execution of the FDA algorithm, no definitive method exists for calculating the aggregate-level utilities prior to running the algorithm. We therefore use the utilities from the \textsf{AC} scenario.} The matching is then determined by the FDA algorithm, which operates under regional caps calibrated to satisfy the floor constraints as in the \textsf{OS} scenario. The floor constraints requires each rural prefecture to receive at least as many residents as the number of matches under DA in the \textsf{AC} scenario.
\footnote{Specifically, a regional cap is initially set for each non-rural prefecture equal to the sum of the true capacities of the hospitals within it. We then find the smallest $\alpha$ such that if the regional caps of all prefectures are reduced uniformly by $\alpha$\%, the final matching produced by the FDA algorithm satisfies the floor constraints.}

\begin{table}[tbp]
\centering
\small
\caption{Welfare Comparison of the Simulated Matchings in 2017}
\label{tab:welfare_indiv}
\begin{threeparttable}
\resizebox{\linewidth}{!}{%
\begin{tabular*}{\textwidth}{l @{\extracolsep{\fill}} rrrr}
\toprule
Scenario & \textsf{AC} & \textsf{NC} & \textsf{OS} & \textsf{FDA} \\
\midrule
Algorithm & DA & DA & DA & FDA \\
Artificial Caps & Yes & No & No & No \\
Floor Constraints & Yes & No & Yes & Yes \\
Subsidies & No & No & Yes & No \\
\midrule
Match rate & 0.762 & 0.810 & 0.810 & 0.771 \\
Doctors' welfare & 35850.2 & 37176.3 & 37192.4 & 35979.4 \\
Hospitals' welfare & 29748.7 & 29174.1 & 29176.6 & 29643.0 \\
Government's revenue & 0.0 & 0.0 & $[-17.0, -12.0]$ & 0.0 \\ 
Social welfare & 65598.9 & 66350.5 & $[66352.1, 66357.1]$ & 65622.4 \\ \midrule
{\footnotesize \#regions violating lower bounds} & 0 & 4 & 3 & 0 \\
{\footnotesize \#doctors required to meet lower bounds} & 0 & 15 & 12 & 0 \\ \midrule
{\footnotesize \#doctors matched with urban hospitals} & 3103 & 3196 & 3192 & 3016 \\
{\footnotesize \#doctors matched with rural hospitals} & 1071 & 1170 & 1175 & 1209 \\
Urban hospitals' welfare & 6143.2 & 5758.4 & 5753.2 & 5954.1 \\
Rural hospitals' welfare & 2281.8 & 2227.0 & 2243.8 & 2447.8 \\
\bottomrule
\end{tabular*}
}
\begin{tablenotes}
\footnotesize
\item[*] All welfare and revenue figures are expressed in units of 1 million JPY per month. Government revenue is positive when taxes are imposed on doctors and hospitals, and negative when subsidies are provided. Doctors' and hospitals' welfare are scaled according to specification (1) in \Cref{tab:monetary_u_3} and \Cref{tab:monetary_v_3}. We present the bounds of the government's net revenue, scaled by the coefficients on the doctor side and the hospital side, respectively. The social welfare is the sum of doctors' welfare, hospitals' welfare, and the government's revenue. ``\#regions violating lower bounds'' indicates the number of prefectures whose matched doctor count is less than that of \textsf{AC}. ``\#doctors required to meet lower bounds'' indicates how many additional doctors must be matched with rural hospitals so that these prefectures exceed the matched doctor count in \textsf{AC}.
\end{tablenotes}
\end{threeparttable}
\end{table}

Table~\ref{tab:welfare_indiv} reports the simulation results. Under the \textsf{FDA} policy, capacities in non-rural areas are reduced by 30\% to achieve the floor constraints. Despite this substantial reduction, the FDA algorithm improves upon the current \textsf{AC} policy, increasing the total match rate by 0.9 percentage points and yielding higher total welfare. However, the welfare under \textsf{FDA} remains significantly lower than that under the unconstrained \textsf{NC} benchmark. In contrast, the \textsf{OS} policy achieves a total welfare level nearly identical to that of the \textsf{NC} policy. This result strengthens our earlier conclusion that a fundamental inefficiency is inherent in cap-based policies, at least in this application, and that monetary interventions offer a powerful alternative.

Furthermore, the results for the \textsf{AC}, \textsf{NC}, and \textsf{OS} policies are consistent with the findings in Section~\ref{sec:cf_simulation_agg}. Social welfare increases when the artificial caps are removed (\textsf{NC} vs. \textsf{AC}), and the \textsf{OS} policy nearly restores the first-best welfare level. This consistency across different simulation methodologies validates our baseline approach and reinforces the central findings of our analysis.

\section{Conclusion}
\label{sec:discussion_conclusion}

This study develops and implements a framework to evaluate the efficiency of regulations in matching markets with distributional disparities, focusing on the trade-offs between cap-based policies and monetary interventions. We build a transferable utility (TU) matching model that incorporates regional constraints, such as caps and floors, on the number of matches. By embedding this model within the aggregate matching framework of Galichon and Salanié (2021a), we provide a method to identify preference structures and conduct policy simulations using only aggregate-level matching data and an observable measure of transfers, such as salary. 

Applying this framework to the Japan Residency Matching Program (JRMP), our empirical analysis reveals several key findings. First, the estimation results show that preferences are horizontally differentiated on both sides of the market; factors such as geographic distance and the history of previous matches are significant determinants of preferences for both doctors and hospitals. Second, our counterfactual simulations demonstrate that the current cap-based regulations, while redirecting some physicians to underserved areas, generate substantial welfare losses. We find that a small, targeted subsidy policy can achieve the same distributional outcome as the existing caps but at a significantly higher level of social welfare.  This indicates that the inefficiency stems not from the distributional goal itself, but from the choice of regulatory instrument. Finally, we demonstrate that even a more sophisticated, flexible Deferred Acceptance mechanism still yields considerable welfare losses compared to the optimal subsidy, underscoring the inherent limitations of employing quantity restrictions in this context.

The findings offer important policy implications. For markets like the JRMP, where geographic imbalances are a primary concern, monetary interventions appear to be a more efficient and fiscally practical tool than rigid caps. Our framework offers a practical path to designing such interventions, as its reliance on aggregate-level matching data and salaries is less demanding than approaches that require individual-level preference data. More broadly, the approach developed in this paper can be applied to other matching markets where transfers are present but granular data are scarce. Future research could extend this approach to analyze more complex distributional constraints or explore the long-run effects of such policies on physician career paths and settlement patterns.

\bibliographystyle{aea}
\bibliography{references}

\newpage
\appendix

\renewcommand{\thesection}{\Alph{section}} 
\counterwithin{theorem}{section} 
\counterwithin{lemma}{section}
\counterwithin{prop}{section}
\counterwithin{cor}{section}

\section{Omitted Proofs}

\subsection{Supporting Stable Outcomes as an Equilibrium}
\label{app:JRMP_game}

In this section, we show that any stable outcome (Definition~\ref{def:indiv_eqm}) can be supported as an equilibrium of a stylized game that mirrors the JRMP's matching process. For expository simplicity, we assume that taxation policy is inactive, i.e., $w_z \equiv 0$.\footnote{If not, we can consider $\Phi_{ij}$ below as the gross joint surplus for any given $w$.}

The game consists of three periods. In Period 1, each slot simultaneously announces a profile of transfers $t = (t_{ij})_{i,j}$, where $t_j = (t_{ij})_i$ is the transfer profile chosen by slot $j$. In Period 2, observing all the wage offers, doctors and slots submit their preference lists. In Period 3, the doctor-proposing deferred acceptance (DA) algorithm determines the final matching.
The utility of doctor $i$ being matched with a slot $j$ is defined as follows: the \emph{base utility}, which is the utility felt by $i$ net of transfer, is denoted by $U_{ij}$. Given transfer offer $t$, when matched with slot $j$, doctor $i$'s payoff is $U_{ij} + t_{ij}$. Similarly, the base utility of slot $j$ when matched with doctor $i$ is denoted by $V_{ij}$, and the payoff of $j$ when matched with $i$ given $w$ is $V_{ij} - t_{ij}$. Note that $\Phi_{ij} = U_{ij} + V_{ij}$. We refer to this game as the \emph{JRMP game}.

We assume that all agents submit the preference ranking truthfully given $t$ in Period 2.\footnote{Since the DA algorithm is strategy-proof for the proposing side, this means that we assume the truthful report from the other side. This is a common assumption when studying DA algorithms.}
Given this assumption, the analysis of this game boils down to that of one-shot game played by the slots in Period 1.

The following proposition establishes the link between the stable outcome and the equilibrium of this game.

\begin{prop}
\label{prop:JRMP_game}
    For any stable outcome $(d^*,u^*,v^*)$ of the  matching market $(I, J, \Phi)$, there exists a Nash equilibrium of the JRMP game that results in the matching $d^*$ and payoffs $(u^*,v^*)$.
\end{prop}

\begin{proof}
    Consider the following transfer offer profile:
    \begin{align}
        t_{ij}^* \coloneqq \begin{cases}
            u_{i}^* - U_{ij} & (d_{ij}^* = 1) \\
            M_{ij} & (\text{o.w.})
        \end{cases},
    \end{align}
    where $M_{ij}$ is sufficiently low to be unattractive, i.e., $M_{ij} < u_i^* - U_{ij}$.
    
    Suppose that this transfer profile $t^*$ is chosen in Period 1.
    If doctor $i$ is matched with some slot $j$ under the stable outcome (i.e, $d_{ij}^*=1$), then the preference list submitted by doctor $i$ is\footnote{If there is a tie between slot $j$ and the outside option, the tie-breaking rule does not matter because it won't affect the equilibrium payoff. For simplicity, I assume that doctor $i$ always places slot $j$ at the top.}
    \[
    i \colon \ j \succ \emptyset \succ \cdots.
    \]
    Since $d_{ij}^* = 1$, in the slot $j$'s preference list, doctor $i$ is placed above the outside option. Therefore, doctor $i$ is not rejected in the first round of the doctor-proposing DA algorithm.
    If doctor $i$ is unmatched under the stable outcome, he lists the outside option at the top of his preference list, and proposes to no slot in all rounds.
    Since no rejection occurs, the algorithm terminates in the first round, yielding the matching $d^*$. By construction of $t^*$, each agent enjoys the same payoff as in the stable outcome.

    We now verify that no hospital has a profitable unilateral deviation in Period 1. Suppose toward contradiction that there exists slot $j$ that can be strictly better off by matching with $i'$ by setting $(\tilde{t}_{ij})_i$, where $d_{i'j}^* = 0$.
    For this to be a successful and profitable deviation, two conditions must be met:
    \begin{align}
        v_j^* &< V_{i' j} - \tilde{t}_{i'j}, \\
        u_{i'}^* &\leq U_{i'j} + \tilde{t}_{i'j}.
    \end{align}
    This implies that
    \[
    v_{j}^* + u_{i'}^* < \Phi_{i'j},
    \]
    which violates the stability of $(d^*, u^*, v^*)$. A contradiction.
\end{proof}

\subsection{Results for Section~\ref{sec:limit_cap_based}}
\label{app:taxation_possibility}

Below, without loss of generality, we normalize the outside option values to zero, i.e., $\Phi_{i, j_0} = \Phi_{i_0, j} \equiv 0$ for all $i \in I$ and $j \in J$.\footnote{This is possible because the optimal solution to $(\mathrm{P_0})$ remains the same when we redefine $\Phi_{ij}$ as $\Phi_{ij} - \Phi_{i,j_0} - \Phi_{i_0, j}$ and ignore the terms in the objective function that are independent of $d$.} The goal of this subsection is to prove the following result:
\begin{theorem}
\label{thm:IE_optimal_taxation}
    Fix any market $(I, J, Z, z, \Phi)$ and regional constraints $(\uo_z, \bo_z)_z$. There exists taxation policy $w^*$ such that the matching under the stable outcome under $w^*$ is $d^*$.
\end{theorem}

We first define an auxiliary problem in which we relax the integral constraint with respect to $d$ so that $d_{ij}$ can take any nonnegative real value:
\begin{align}
    &(\mathrm{P_0'}) \left[
    \begin{matrix}
    \displaystyle{\max_{d \geq 0}} & \displaystyle{\sum_{i,j} \Phi_{ij} d_{ij}} \\[12pt]
    \mathrm{s.t.} & \sum_j d_{ij} \leq 1 & (i \in I) \\
    & \sum_i d_{ij} \leq 1 & (j \in J) \\
    & \sum_{j \in z}\sum_i d_{ij} \leq \bar o_z & (z \in Z) \\
    & \sum_{j \in z}\sum_i d_{ij} \geq \underline o_z & (z \in Z)
    \end{matrix}
    \right.
\end{align}

For $(\mathrm{P_0'})$ to be useful, it must yield a matching, or an integer solution. This is known to be the case without regional constraints. The following proposition states that it remains true even with regional constraints.
\begin{prop}
\label{prop:IE_integral_solution}
Fix any matching market $(I, J, Z, z, \Phi)$ and regional constraints $(\uo_z, \bo_z)_z$. Assume that $\mathrm{(P_0')}$ has at least one feasible solution. Then, $\mathrm{(P_0')}$ has an integer optimal solution.
\end{prop}

\subsubsection*{Proof of Proposition~\ref{prop:IE_integral_solution}}
Our proof utilizes the known results regarding total unimodularity:
\begin{definition}[Total unimodularity]
Let $A$ be an integer matrix. $A$ is \emph{totally unimodular} if
any minor principal is either -1, 0, or 1.
\end{definition}

First, we show the following lemma:

\begin{lemma}\label{prop:TU_regional}
The matrix that represents the set of constraints of $(\mathrm{P}'')$ is totally unimodular.
\end{lemma}

Our proof of Lemma~\ref{prop:TU_regional} relies on the following fact about the total unimodularity.

\begin{lemma}[\cite{ghouila1962caracterisation}]\label{lem:ghouila-Houri}
An $m \times n$ integer matrix $A$ is totally unimodular iff for each subset of rows $R \subseteq [m]$, there exists a partition $R_1$ and $R_2$ of $R$ such that
\[
\forall k \in [n],\quad \sum_{i \in R_1} a_{ik} - \sum_{i \in R_2} a_{ik} \in \{-1,0,1\}
\]
\end{lemma}
\begin{proof}[Proof of Lemma~\ref{prop:TU_regional}]
First, observe that the feasibility constraints (i.e., all the constraints except $x \geq 0$) of an instance of TU matching with regional constraints can be represented by $(|I|+|J|+2|Z|)\times (|I|\times|J|)$ matrix $A$ (see also Example~\ref{example:tu_regional}) such that
\begin{easylist}[itemize]
@ Each row corresponds to either (1) agent $i \in I$, (2) agent $j \in J$, (3) an upper bound $\bar o_z$ for region $z \in Z$, or (4) a lower bound $\underline o_z$ for region $z \in Z$.
@ Each column corresponds to an $(i,j) \in I \times J$ pair.
@ The component of row $i \in I$ is 1 for column $(i, j')$ for any $j' \in J$; the component is 0 otherwise.
@ The component of row $j \in J$ is 1 for column $(i', j)$ for any $i' \in I$; the component is 0 otherwise.
@ Each region $z$ has two corresponding rows: one is for upper bound $\bar z$ and another one is for lower bound $\underline z$.
@@ The component of row $\bar z$ is 1 for column $(i,j)$ such that $z(j) = z$; the component is 0 otherwise.
@@ The component of row $\underline z$ is $-1$ for column $(i,j)$ such that $z(j) = z$; the component is 0 otherwise.
\end{easylist}
We apply Lemma~\ref{lem:ghouila-Houri} to prove that $A$ is totally unimodular.
Fix any subsets of rows $R$. Let $I_R \subseteq I$ be the rows corresponding to $i \in I$ contained in $R$. $J_R$, $\bar Z_R$, and $\underline Z_R$ are defined analogously.

We classify the rows in $R$ by the following algorithm:
\begin{enumerate}
    \item
    Let $R_1 \coloneqq \bar Z_R$ and $R_2 \coloneqq \emptyset$.
    \item 
    For each $z \in \underline Z_R$, if $z \in \bar Z_R$, then update $R_1 \leftarrow R_1 \cup \{z\}$; otherwise, update $R_2 \leftarrow R_2 \cup \{z\}$.
    \item Let $J_R \coloneqq \{j(1), \dots, j(T_J)\}$, where $T_J \coloneqq |J_R|$,
    $R_1(1) \coloneqq R_1$, and $R_2(1) \coloneqq R_2$.
    For each $t \in [T_J]$,
    \begin{enumerate}
        \item Define a row vector
        \[
        r(t) \coloneqq \left(\sum_{i \in R_1(t)} a_{ik} - \sum_{i \in R_2(t)} a_{ik} \right)_{k}.
        \]
        \item If $r(t)_{i, j(t)} = 1$ for some $i$,
        then $R_1(t+1) \coloneqq R_1(t) $ and $R_2(t+1) \coloneqq R_2(t) \cup \{j(t)\}$.
        \item Otherwise, $R_1(t+1) \coloneqq R_1(t) \cup \{j(t)\}$ and $R_2(t+1) \coloneqq R_2(t)$.
    \end{enumerate}
    \item Let $R_1 \coloneqq R_1(T_J)$ and $R_2 \coloneqq R_2(T_J) \cup I_R$. Return $R_1$ and $R_2$.
\end{enumerate}
We will show why the algorithm above works.
First, under the hierarchical regional constraints, each component of $r(1)$ is either 0 or 1.
Next, let
    \[
    r \coloneqq \left(\sum_{i \in R_1(T_J)} a_{ik} - \sum_{i \in R_2(T_J)} a_{ik} \right)_{k}.
    \]
Note that, under the regional constraints, if $z(j) = z$ and $a_{j,k}=1$, then $a_{\bar z, k}=1$ and $a_{\underline z, k}=-1$.
Thus, by the construction of Step 4, all the components of $r(t)$ are either $0$ or $1$ for each $t$. 
Lastly, by construction, for any column $k$, $\sum_{i \in I_R} a_{ik} \in \{0,1\}$. Therefore, we have
    \[
\forall k \in [n],\quad \sum_{i \in R_1} a_{ik} - \sum_{i \in R_2} a_{ik} \in \{-1,0,1\}.
\]
By Lemma~\ref{lem:ghouila-Houri}, this implies that $A$ is totally unimodular.
\end{proof}

By the following well-known result, we can conclude that $(\mathrm{P_0'})$ has an integer optimal solution:
\begin{lemma}[\cite{hoffman2010integral}]
$A$ is totally unimodular iff, for any $b \in \Z^m$, $P \coloneqq \{x \in \R^n \colon Ax \leq b, x \geq 0\}$ is an integral polyhedra, i.e., all the faces includes an integer vector.
If $P$ is bounded, this is equivalent to that the components of all vertices of $P$ are integers.
\end{lemma}

This completes the proof of Proposition~\ref{prop:IE_integral_solution}.
\qed

\begin{example}[TU matching with regional constraints]
\label{example:tu_regional}
Let $I \coloneqq \{i_1, i_2\}$, $J \coloneqq \{j_1, j_2, j_3\}$, $Z \coloneqq \{z_1, z_2\}$, $z(j_1) = z(j_2) = z_1$, and $z(j_3)=z_2$.
The set of constraints can be written as $A_0 x \leq b$ by defining $A_0$, $x$, and $b$ as follows:
\begin{align}
A_0 &\coloneqq
\kbordermatrix{
& (i_1, j_1) & (i_1, j_2) & (i_1, j_3) & (i_2, j_1) & (i_2, j_2) & (i_2, j_3) \\
i_1 & 1&1&1&0&0&0\\
i_2 & 0&0&0&1&1&1\\
j_1 & 1&0&0&1&0&0\\
j_2 & 0&1&0&0&1&0\\
j_3 & 0&0&1&0&0&1\\
\bar z_1 & 1&1&0&1&1&0\\
\bar z_2 & 0&0&1&0&0&1\\
\underline z_1 & -1&-1&0&-1&-1&0\\
\underline z_2 & 0&0&-1&0&0&-1\\
(i_1, j_1) & -1&\\
(i_1, j_2) & &-1\\
(i_1, j_3) & &&-1\\
(i_2, j_1) & &&&-1\\
(i_2, j_2) & &&&&-1\\
(i_2, j_3) & &&&&&-1
} \\
x &\coloneqq (x_{i_1 j_1}, x_{i_1 j_2}, x_{i_1 j_3}; x_{i_2 j_1}, x_{i_2 j_2}, x_{i_2 j_3})^\top \\
b &\coloneqq (1,1,1,1,1,\bo_{z_1}, \bo_{z_2},\uo_{z_1}, \uo_{z_2}, 0, 0, 0, 0, 0, 0)^\top
\end{align}
The last $|I| \times |J|$ rows corresponds to non-negativity constraints $d_{ij} \geq 0$. Note that $[B \ I]$ is totally unimodular if $B$ is totally unimodular.
Thus, to show $A_0$ is totally unimodular, it suffices to show that
\[
A \coloneqq
\kbordermatrix{
& (i_1, j_1) & (i_1, j_2) & (i_1, j_3) & (i_2, j_1) & (i_2, j_2) & (i_2, j_3) \\
i_1 & 1&1&1&0&0&0\\
i_2 & 0&0&0&1&1&1\\
j_1 & 1&0&0&1&0&0\\
j_2 & 0&1&0&0&1&0\\
j_3 & 0&0&1&0&0&1\\
\bar z_1 & 1&1&0&1&1&0\\
\bar z_2 & 0&0&1&0&0&1\\
\underline z_1 & -1&-1&0&-1&-1&0\\
\underline z_2 & 0&0&-1&0&0&-1
}
\]
is totally unimodular.
\end{example}

\subsubsection*{Proof of Theorem~\ref{thm:IE_optimal_taxation}}
Consider the following dual problem of $(\mathrm{P_0'})$:

\begin{align}
    &(\mathrm{D_0'}) \left[
    \begin{matrix}
    \displaystyle{\max_{u,v,\bar w, \underline w \geq 0 }} &
    \displaystyle{\sum_{i} u_i + \sum_j v_j + \sum_{z \in Z} \bar o_z \bar w_z - \sum_{z \in Z_L} \underline o_z \underline w_z} \\[12pt]
    \mathrm{s.t.} & u_i + v_j \geq \Phi_{ij} - \bar w_{z(j)} + \underline w_{z(j)} & (i \in I, j \in J)
    \end{matrix}
    \right.
\end{align}

Let $d^*$ be an integer optimal solution to $(\mathrm{P_0}')$ (NB: this $d^*$ is also the optimal solution to $(\mathrm{P_0})$) and let $(u, v, \bar w, \underline w)$ be a solution to $(\mathrm{D_0'})$.
By a similar argument as in the standatd TU matching model that characterizes an stable outcome as a solution to the social welfare maximization problem and its dual problem, we can show that $(d^*, (u,v))$ is a stable outcome given $\Phi$ and $w^*$, where
$w^*_z \coloneqq \bar w_z \mathbbm{1}\{\bar w_z > 0\} - \underline w_z \mathbbm{1}\{\underline w_z > 0\}$. Note that $\bar w_z > 0$ and $\underline w_z > 0$ cannot happen simultaneously due to the complementary slackness condition.

\begin{cor}
\label{cor:limit_cap_based}
    Fix any cap-based policy characterized by $J' \subseteq J$ such that the resulting matching satisfies the regional constraints. The social welfare achieved under this policy is weakly less than the welfare attained under the optimal taxation policy $w^*$.
\end{cor}
\begin{proof}
    Fix any $J'$. The social welfare with $J'$ cannot be better than the optimal value of $(\mathrm{P_0})$) with additional constraints $d_{ij} = 0$ for any $i \in I$ and $j \in J'$.
\end{proof}

\subsection{Proof of Lemma~\ref{lemma:tu_discrete_choice}}
\label{app:tu_discrete_choice}
First, we will show the following lemma:
\footnote{
The following proof of Lemma~\ref{lem:ineq_systematic_surplus} is almost identical to the proof of Proposition 1 of \cite{galichon_cupids_2021}.
}
\begin{lemma}\label{lem:ineq_systematic_surplus}
Suppose that $U$ and $V$ are aggregate-level utilities given $(u,v)$ (i.e., \eqref{eq:agg_level_util} holds).
Suppose that $(d,(u,v))$ is a stable outcome. Then, we have $U_{xy} + V_{xy} \geq \Phi_{xy} + w_{z(y)}$ with equality when $\mu_{xy}>0$ for each $(x,y) \in T$.
\end{lemma}
\begin{proof}
Fix any $(x,y) \in T$.
First, we show $\Phi_{xy} - w_{z(y)} \leq U_{xy} + V_{xy}$. Suppose that $i \in x$ is matched with hospital $j \in y$.
We have
\begin{align}
    u_i &= \max_{j \in J} \{\tilde \Phi_{ij} - v_j\} \\
    &=
     \max_{y \in Y} \max_{j \in Y} \{\Phi_{ij} - w_{z(y)} - v_j\} \\
    &=
     \max_{y \in Y} \max_{j \in Y} \{\Phi_{xy} - w_{z(y)} + \epsilon_{iy} + \eta_{xj} - v_j\} \\
    &=
     \max_{y \in Y}
     \{
     \Phi_{xy} - w_{z(y)} + \epsilon_{iy}
     + \max_{j \in y} \{\eta_{xj} - v_j\}
     \} \\
    &=
     \max_{y \in Y}
     \{
     \Phi_{xy} - w_{z(y)} + \epsilon_{iy}
     - V_{xy}
     \} 
\end{align}
Thus, for any $i \in x$, we have
\begin{align}
    u_i &=
    \max \left\{
    \max_{y \in Y}
     \{
     \Phi_{xy} - w_{z(y)} + \epsilon_{iy}
     - V_{xy}
     \}, \epsilon_{i, y_0}
    \right\} \\
    &=
     \max_{y \in Y_0}
     \{
     \Phi_{xy} - w_{z(y)} + \epsilon_{iy}
     - V_{xy}
     \}.
\end{align}
Hence,
\[
\Phi_{xy} - w_{z(y)} \leq u_i - \epsilon_{iy} + V_{xy}.
\]
By taking the infimum over $i \in x$, we have
\[
\Phi_{xy} - w_{z(y)} \leq U_{xy} + V_{xy},
\]
for each $x \in X$ and $y \in Y$.

Next, suppose that $\mu_{xy}>0$. This implies that there exist $i \in x$ and $j \in y$ such that $d_{ij}=1$. For this pair, we have
$u_i + v_j = \Phi_{ij} - w_{z(y)}$. Suppose toward contradiction that $U_{xy}+V_{xy} > \Phi_{xy}$. By \eqref{eq:agg_level_util}, we have
$
\Phi_{xy} - w_{z(y)}<u_i-\epsilon_{iy}+v_j-\eta_{xj}$, and thus
$\Phi_{ij} - w_{z(y)} < u_i+v_j$. A contradiction.


\end{proof}

\begin{proof}[Proof of Lemma~\ref{lemma:tu_discrete_choice}]
Fix any type $x \in X$ and doctor $i \in x$.
By definition of $U_{xy}$, we have 
\begin{align}
U_{xy} &\leq u_i - \epsilon_{iy}, \ \forall y \in Y_0\\
\iff u_i &\geq U_{xy} + \epsilon_{iy}, \ \forall y \in Y_0\\
\iff u_i &\geq \max_{y \in Y_0} \{ U_{xy} + \epsilon_{iy} \}.
\end{align}
Similarly, for any type $y \in Y$ and doctor $j \in J$ with type $y$, we have $v_j \geq \max_{x \in X_0} \{ V_{xy} + \eta_{xj}\}$.

We want to claim that $u_i \leq \max_{y \in Y_0} \{ U_{xy} + \epsilon_{iy} \}$.
Suppose toward contradiction that there exists type $x \in X$ and doctor $i \in x$ such that
\begin{align}
u_i > \max_{y \in Y_0} \{ U_{xy} + \epsilon_{iy}\}.
\end{align}

First, consider the case where $i$ is matched with some hospital $j \in y$.
Then 
\begin{align}
\Phi_{ij} - w_{z(y)} &= u_i + v_j \\
&> \Bigl(\max_{y^\prime \in Y_0} \ U_{xy^\prime} + \epsilon_{iy^\prime} \Bigr) + \Bigl(\max_{x^\prime \in X_0} \ V_{x^\prime y} + \eta_{x^\prime j}\Bigr) \\
&\geq U_{xy(j)} + \epsilon_{iy(j)} + V_{xy(j)} + \eta_{xj} \\
&\geq \Phi_{xy} - w_{z(y)} + \epsilon_{i y(j)} + \eta_{x j} \quad (\because \ \text{Lemma~\ref{lem:ineq_systematic_surplus}})\\
&= \Phi_{ij} - w_{z(y)}.
\end{align}
A contradiction.
Next, consider the case where $i$ is unmatched. Then
\begin{align}
    u_i = \Phi_{i, y_0} =  \epsilon_{i, y_0}  
    > \max_{y \in Y_0} \{U_{xy} + \epsilon_{iy}\} 
    \geq
     \epsilon_{i, y_0}.
\end{align}
A contradiction. Therefore, we have $u_i \leq \max_{y \in Y_0} \{ U_{xy} + \epsilon_{iy} \}$ and hence $u_i = \max_{y \in Y_0} \{ U_{xy} + \epsilon_{iy} \}$. We can show $v_j = \max_{x \in X_0} \{ V_{xy} + \eta_{xj}\}$ in a similar manner.
\end{proof}




\subsection{Proof of Theorem~\ref{theorem:EAE_opt}}
\label{app:proof_thm_eae}

First, we show the strict concavity of $G$ and $H$.
\begin{lemma}
\label{lemma:tu_GH_strict_increasing_convex}
Under Assumptions 1-4, $G$ and $H$ are strictly increasing and strictly convex.
\end{lemma}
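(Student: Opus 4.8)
The plan is to prove both properties for $G$; the argument for $H$ is identical with the roles of $X$ and $Y$ interchanged. The first observation I would make is that $G$ is block-separable: writing $U_x \coloneqq (U_{xy})_{y \in Y}$ and
$$g_x(U_x) \coloneqq \mathop{\E}_{\epsilon \sim P_x}\Bigl[\max_{y \in Y_0} U_{xy} + \epsilon_{y}\Bigr] \qquad (U_{xy_0} \equiv 0),$$
we have $G(U) = \sum_{x \in X} n_x\, g_x(U_x)$ with every $n_x > 0$ and with the variable blocks $U_x$ pairwise disjoint across $x$. Hence strict monotonicity and strict convexity of $G$ follow once I establish them for each $g_x$: positivity of all partials transfers immediately, and a separable sum of convex functions in which every block is strictly convex is strictly convex on the product space.

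For strict monotonicity I would invoke the Daly--Zachary--Williams formula already quoted in the text: under Assumption~\ref{ass:basic_dist} the map $g_x$ is differentiable with
$$\frac{\partial g_x}{\partial U_{xy}} = s_y(U_x) \coloneqq \Pr\bigl(U_{xy} + \epsilon_{y} \ge U_{xy'} + \epsilon_{y'} \ \forall y' \in Y_0\bigr).$$
Full support of $P_x$ makes this event a full-dimensional region of positive probability for every $y$, so $s_y(U_x) > 0$; thus $\grad G \gg 0$ everywhere, which is precisely strict monotonicity.

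Convexity of each $g_x$ is immediate, since for every fixed $\epsilon$ the map $U_x \mapsto \max_{y \in Y_0} U_{xy} + \epsilon_y$ is a pointwise maximum of affine functions and expectation preserves convexity. The substantive step---which I expect to be the main obstacle---is upgrading this to strict convexity. I would fix a direction $h \neq 0$, extend it by $h_{y_0} \coloneqq 0$ (so $h$ is nonconstant on $Y_0$), and study $\phi(t) \coloneqq g_x(U_x + th)$, whose derivative is $\phi'(t) = \sum_{y \in Y_0} h_y\, s_y(U_x + th)$. It suffices to show $\phi'$ is strictly increasing. Letting $Y^*(t)$ denote the almost surely unique maximizer of $y \mapsto U_{xy} + t h_y + \epsilon_y$ over $Y_0$, adding the two optimality inequalities at parameters $s$ and $t$ yields $(s-t)(h_{Y^*(s)} - h_{Y^*(t)}) \ge 0$, hence $h_{Y^*(t)} \geq h_{Y^*(s)}$ almost surely for $t > s$; this already reproves convexity. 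For strictness I would exhibit a positive-probability event on which the maximizer jumps strictly upward: picking $y_+, y_-$ with $h_{y_+} > h_{y_-}$, the event $\{Y^*(s) = y_-,\ Y^*(t) = y_+\}$ is carved out by finitely many linear inequalities in $\epsilon$, and its nonemptiness as an open set reduces to the single feasibility condition $(s-t)(h_{y_+}-h_{y_-}) < 0$, which holds. Full support then assigns this event positive probability, so $\phi'(t) - \phi'(s) = \E[h_{Y^*(t)} - h_{Y^*(s)}] > 0$.

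Combining the pieces, each $g_x$ is strictly increasing and strictly convex, hence so are $G$ and, by the symmetric argument, $H$. The delicate point throughout is the strict-convexity step: one must first neutralize the degenerate direction $\mathbf{1}$ along which the expected-max is affine (handled automatically by the normalization $U_{xy_0} = 0$), and then confirm that full support forces the argmax to switch with strictly positive probability, rather than merely ruling out ties.
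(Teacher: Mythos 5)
Your proposal is correct and follows essentially the same route as the paper's proof: strict monotonicity from the strictly positive choice probabilities guaranteed by full support (Daly--Zachary--Williams), convexity from the expected maximum of affine functions, and strictness from exhibiting a positive-probability event on which the argmax differs between the two utility vectors. Your directional-derivative formulation of the strict-convexity step is a more careful execution of the same idea, and your explicit remark that the normalization $U_{xy_0}=0$ rules out the degenerate direction $\mathbf{1}$ addresses a point the paper's proof leaves implicit.
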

\begin{proof}
\noindent
\underline{$G$ is strictly increasing.}
Take any $U^1, U^2 \in \R^{N \times M}$ such that $U^1 \geq U^2$ and $U^1 \neq U^2$. Then $G(U^1) \geq G(U^2)$ by definition. In addition, note that $U^1_{xy} > U^2_{xy}$ holds for some $x \in X$ and $y \in Y$. Since $P_x$ has full support, we have
\begin{align}
\prob_{\epsilon_i}(u_i = U^1_{xy} + \epsilon_{iy}) \geq \prob_{\epsilon_i}(u_i = U^2_{xy} + \epsilon_{iy}) > 0.
\end{align}
Because $\E_{\epsilon_i} \left[ u_i \mid u_i = U_{xy} + \epsilon_{iy} \right]$ is strictly increasing in $U_{xy}$, we have 
\begin{align}
\E_{\epsilon_i} \left[ u_i \mid u_i = U^1_{xy} + \epsilon_{iy} \right] &\cdot \prob_{\epsilon_i}(u_i = U^1_{xy} + \epsilon_{iy}) \\
&> \E_{\epsilon_i} \left[ u_i \mid u_i = U^2_{xy} + \epsilon_{iy} \right] \cdot \prob_{\eta_j}(u_i = U^2_{xy} + \epsilon_{iy}),
\end{align}
and thus $G(U^1) > G(U^2)$ holds.

\noindent
\underline{$G$ is strictly convex.}
Take any $U^1, U^2 \in \R^{N \times M}$ and $s \in [0, 1]$. Since
\begin{align}
&s G(U^1) + (1-s) G(U^2) \\
&= \sum_x n_x \E\left[ 
    \Bigl( \max_y \ s( U^1_{xy} + \epsilon_{iy}) \Bigr) +
    \Bigl( \max_y \ (1-s)( U^2_{xy} + \epsilon_{iy}) \Bigr)
\right] \label{eq:strict_convex_1} \\
&\geq \sum_x n_x \E\left[ 
    \max_y \ s U^1_{xy} + (1-s) U^2_{xy} + \epsilon_{iy}
\right] \label{eq:strict_convex_2} \\
&= G \Bigl( s U^1 + (1-s) U^2 \Bigr)
\end{align}
holds, $G$ is a convex function.

Now suppose $U^1 \neq U^2$. Then $U^1_{xy} \neq U^2_{xy}$ holds for some $x \in X$, $y \in Y$. Without loss of generality, assume $U^1_{xy} > U^2_{xy}$. Since $P_x$ is of full support, 
\begin{align}
\prob\Biggl( \Bigl\{ \epsilon_i \colon 
    U^1_{xy} + \epsilon_{iy} > \max_{y' \neq y} U^1_{xy'} + \epsilon_{iy'} 
    \hspace{1em}\land\hspace{1em}
    \max_{y^\prime \neq y} U^2_{xy^\prime} + \epsilon_{iy^\prime} > U^2_{xy} + \epsilon_{iy}
\Bigr\} \Biggr) > 0
\end{align}
holds.
This implies that
\[
\Bigl( \max_y \ s( U^1_{xy} + \epsilon_{iy}) \Bigr) +
    \Bigl( \max_y \ (1-s)( U^2_{xy} + \epsilon_{iy}) \Bigr)
    >
    \max_y \ s U^1_{xy} + (1-s) U^2_{xy} + \epsilon_{iy}
\]
occurs with strictly positive probability, and thus $\eqref{eq:strict_convex_1} > \eqref{eq:strict_convex_2}$ holds.
Therefore, for any $s \in (0, 1)$, we have
\begin{align}
s G(U^1) + (1-s) G(U^2) > G \Bigl( s U^1 + (1-s) U^2 \Bigr),
\end{align}
which implies $G$ is strictly convex.
Similarly, we can show $H$ is also strictly increasing and strictly convex.
\end{proof}

We will show the following, which corresponds to the first part of the theorem:
\begin{lemma}
\label{lem:ea_dual_unique_sol}
If $G$ and $H$ are strictly convex and differentiable,\footnote{This holds under Assumptions~\ref{ass:indep_error}-\ref{ass:full_supp}.} \optprobref{opt:natural_agg_eq_primal} and \optprobref{opt:natural_agg_eq_dual} have unique solutions for any $\Phi$.
\end{lemma}
\begin{proof}
For \optprobref{opt:natural_agg_eq_primal}, since $G$ and $H$ are differentiable (WDZ theorem), $G^*$ and $H^*$ are strictly convex (Proposition D.14 of \cite{galichon2018optimal}). Therefore, the objective function of \optprobref{opt:natural_agg_eq_primal} is strictly concave in $\mu$, which implies the uniqueness of the solution.

For \optprobref{opt:natural_agg_eq_dual}, suppose toward a contradiction that there are two different optimal solutions $\alpha \coloneqq (U, V, \bw, \uw)$ and $\beta \coloneqq (U', V', \bw', \uw')$. Note that $\gamma \coloneqq \frac{1}{2}(\alpha + \beta)$ is also feasible. If either $U \neq U'$ or $V \neq V'$, then $\gamma$ gives a strictly lower value due to the strict convexity of $G$ and $H$, which contradicts the optimality of $\alpha$ and $\beta$.

Suppose that $U = U$ and $V= V'$. We must have $(\bw, \uw) \neq (\bw', \uw')$. Since $G$ is strictly convex, we have $\frac{\partial G}{\partial U_{xy}}>0$ for each $(x,y)$. By the complementary slackness condition with respect to $U_{xy}$, we have $U_{xy} + V_{xy} = \Phi_{xy} +\bw_{z(y)} -\uw_{z(y)}$ and $U'_{xy} + V'_{xy} = \Phi_{xy} +\bw_{z(y)}' -\uw_{z(y)}'$ for each $(x,y)$. Since $U=U'$ and $V=V'$, this implies that
\begin{align}
    \label{eq:w_equality}
    \bw_{z} -\uw_{z} = \bw_{z}' -\uw_{z}'.
\end{align}
for each $z$. Since $\bar o_z > \underline o_z$, we must have $\bw_z \uw_z = 0$; otherwise, there exists $\epsilon > 0$ such that $(U, V, \tilde \bw, \tilde \uw)$, where $\tilde \bw_z \coloneqq \bw_z - \epsilon$ and $\tilde \uw_z \coloneqq \uw_z - \epsilon$ attains a strictly lower value. Similarly, we have $\bw_z' \uw_z' = 0$. However, these combined with \eqref{eq:w_equality} imply that $\bw = \bw'$ and $\uw = \uw'$: if $\bw_z > 0$, then $\uw_z =0$, $\uw_z'=0$, and $\bw_z = \bw_z'$. If $\uw_z > 0$, then $\bw_z =0$, $\bw_z'=0$, and $\uw_z = \uw_z'$. If $\bw_z = \uw_z = 0$, then $\bw_z'=\uw_z'=0$.
A contradiction.
\end{proof}


By the complementary slackness condition, for each $z$, only one of the following holds: $\bar{w}_z^* > 0$ and $\underline{w}_z^* = 0$, $\bar{w}_z^* = 0$ and $\underline{w}_z^* > 0$, or $\bar{w}_z^* = \underline{w}_z^* = 0$. If we define
\[
w_z^* \coloneqq \1\{\bar{w}_z^* >0\}\bar{w}_z^* -  \1\{\underline{w}_z^* >0\}\underline{w}_z^*,
\]
then $w^* = (w_z^*)_z$ is the taxation policy, and we have
\[
\Phi_{xy} - \bar{w}_{z(y)}^* +\underline{w}_{z(y)}^* = \Phi_{xy} - w_{z(y)}^*,
\]
which corresponds to the stability condition with the gross surplus under taxation policy $w^*$.
Thus, the primal and dual problems jointly compute the equilibrium matching and corresponding social welfare under the taxation policy $ w^*$. The triple $(\mu^*, U^*, V^*)$ then constitutes the corresponding aggregate-level matching and utilities.
\qed

\subsection{%
\texorpdfstring{%
Derivation of two-way fixed effects Poisson regression of \cite{Galichon_poisson}
}{%
Derivation of two-way fixed effects Poisson regression of Galichon and Salani\'e (2021b)%
}%
}
\label{app:poisson_derivation}
We derive the estimator based on two-way fixed effects Poisson regression, which is proposed in \cite{Galichon_poisson}. The derivation is for the estimation of $\Phi$ and the same argument can be applied to the estimation of aggregate level utility, $U$ and $V$, which is transformed into a single-side fixed effect Poisson regression.

Consider a Poisson regression with two-way FE. The observation is matching between type $x$ and $y$: $\mu_{xy}$. We assume that $\mu_{xy} \sim \mathrm{Po}(\theta_{xy})$ where $\theta_{xy} = e^{\frac{\sum_{k} \lambda_k \phi_{xy}^{k} - u_x - v_y}{2}}$, $\mu_{x0}\sim \mathrm{Po}(e^{-u_x})$, and $\mu_{0y} \sim \mathrm{Po}(e^{-v_y})$.

The likelihood is as follows: note that we double-count the matchings when we construct the likelihood.
\begin{align}
    &\prod_{xy} \frac{ e^{ \mu_{xy} \frac{\sum_{k} \lambda_k \phi_{xy}^{k} - u_x - v_y}{2}} e^{-e^{\frac{\sum_{k} \lambda_k \phi_{xy}^{k} - u_x - v_y}{2}}}}{\mu_{xy}!} \prod_{xy} \frac{ e^{ \mu_{xy} \frac{\sum_{k} \lambda_k \phi_{xy}^{k} - u_x - v_y}{2}} e^{-e^{\frac{\sum_{k} \lambda_k \phi_{xy}^{k} - u_x - v_y}{2}}}}{\mu_{xy}!} \\
    &\prod_{x}\frac{e^{\mu_{x0} u_x} e^{-e^{-u_x}}}{\mu_{x0}!} \prod_{y}\frac{e^{\mu_{0y} v_y} e^{-e^{-v_y}}}{\mu_{0y}!}
\end{align}

Then the log-likelihood function is written as follows:
\begin{align}
    G(\lambda, u, v) &\equiv 2\sum_{xy} \ln e^{ \mu_{xy}\frac{\sum_{k} \lambda_k \phi_{xy}^{k} - u_x - v_y}{2}} e^{-e^{\frac{\sum_{k} \lambda_k \phi_{xy}^{k} - u_x - v_y}{2}}} + \sum_{x} \ln e^{-\mu_{x0} u_x} e^{-e^{-u_x}} + \sum_{y}\ln e^{-\mu_{0y} v_y} e^{-e^{-v_y}}\\
    &= 2\sum_{xy} \mu_{xy}\frac{\sum_{k} \lambda_k \phi_{xy}^{k} - u_x - v_y}{2} -e^{\frac{\sum_{k} \lambda_k \phi_{xy}^{k} - u_x - v_y}{2}} + \sum_{x} -\mu_{x0} u_x - e^{-u_x} + \sum_{y} -\mu_{0y}v_y - e^{-v_y}\\
    &= \sum_{xy} \mu_{xy}\left( \sum_{k} \lambda_k \phi_{xy}^{k} - u_x - v_y\right) - 2e^{ \frac{\sum_{k} \lambda_k \phi_{xy}^{k} - u_x - v_y}{2}}\\
    &\quad - \sum_{x} \mu_{x0} u_x - \sum_{x} e^{-u_x} - \sum_{y} \mu_{0y}v_y - \sum_{y}e^{-v_y}
\end{align}

The minimization objective function is
\begin{align}
    F(\lambda, u, v) &= \sum_{x} e^{-u_x} + \sum_{y} e^{-v_y} + 2\sum_{xy}e^{\frac{\sum_{k} \lambda_k \phi_{xy}^{k} - u_x - v_y}{2}} - \sum_{xy} \mu_{xy} \left(\sum_{k} \lambda_k \phi_{xy}^{k} - u_x -v_y\right)\\
    &+ \sum_{x}\mu_{x0} u_x + \sum_{y} \mu_{0y} v_y
\end{align}
As shown in \cite{Galichon_poisson}, minimizing $F$ is equivalent to the moment matching estimator.

\section{Comparison between Aggregate Equilibria, Stable Outcomes, and Deferred Acceptance Outcomes}
\label{sec:comparison}

In this section, we clarify the relationships among aggregate equilibrium (AE), stable outcome, and the outcome under the Deferred Acceptance (DA) mechanism. In our model, agents face nonzero outside options: doctor \(i\) obtains utility \(\epsilon_{i0}\) when unmatched, and slot \(j\) obtains utility \(\eta_{0j}\) when unmatched.

By construction, the social welfare achieved under the DA mechanism is lower than that under the stable outcome. Furthermore, we demonstrate that, in a suitably defined large market limit, the per-capita social welfare under the stable outcome and AE converges. These results are established through both theoretical analysis (Section~\ref{app:theory_AEIEDA}) and simulation (Section~\ref{app:simulation_AEIEDA}).

\subsection{Theory}
\label{app:theory_AEIEDA}
Below, we claim that social welfare \textit{per capita} (i.e., total surplus divided by the number of doctors and positions) under the stable outcome almost surely converges to the (modified version of) optimal value of AE. For notational simplicity, we assume $w_z \equiv 0$ below. For general taxation policy $z$, the same argument goes through with $\Phi_{ij}$ being replaced by the gross surplus.

Let $L \coloneqq |I| + |J|$. We consider the large market limit in which $n_x$ and $m_y$ tends to $\infty$ for all $x,y$ with
\begin{align}
\label{eq:limit_fraction}
    \frac{n_x}{L} \to p_x, \quad \frac{m_y}{L} \to p_y \quad (n_x, m_y \to \infty).
\end{align}

Let $(u_i)_i, (v_j)_j$ denote the equilibrium payoff profiles under the stable outcome with non-zero-valued outside option values, which is the solution to the following dual problem $(D_S)$:
\begin{align}
    (D_S)
    \left[
    \begin{matrix}
        \displaystyle{\min_{u, v}} & \displaystyle{\sum_{i \in I} u_i + \sum_{j \in J} v_j} \\
        s.t. & u_i + v_j \geq \Phi_{ij} \\
         & u_i \geq \epsilon_{i0} \\
         & v_j \geq \eta_{0j}
    \end{matrix}
    \right.
\end{align}
The corresponding primal problem $P_S$ is
\begin{align}
    (P_S)
    \left[
    \begin{matrix}
        \displaystyle{\max_{\mu}} & \displaystyle{\sum_{ij \in IJ} \mu_{ij} \left(\Phi_{ij} - \epsilon_{i0} - \eta_{0j}\right) + \sum_{i \in I} \epsilon_{i0} + \sum_{j \in J} \eta_{0j}} \\
        s.t. & \text{$\mu$ is feasible}
    \end{matrix}
    \right.
\end{align}
Observe that the value of the objective function of $(P_S)$ corresponds to social welfare under the stable outcome with non-zero-valued outside options since $(u,v)$ satisfies the corresponding stability condition (the constraints in $(D_S)$) and
\begin{align}
    &\sum_{ij \in IJ} \mu_{ij} \left(\Phi_{ij} - \epsilon_{i0} - \eta_{0j}\right) + \sum_{i \in I} \epsilon_{i0} + \sum_{j \in J} \eta_{0j} \\
&=
\sum_{ij \in IJ} \mu_{ij} \Phi_{ij} + \sum_{i \in I} \left(1 - \sum_j \mu_{ij} \right) \epsilon_{i0} + \sum_{j \in J} \left(1 - \sum_i \mu_{ij} \right) \eta_{0j},
\end{align}
which means that the matched pair with $\mu_{ij}=1$ yields joint surplus $\Phi_{ij}$ and the unmatched agents yields surplus $\epsilon_{i0}$ or $\eta_{0j}$.

We consider a version of the dual for AE:
\begin{align}
\label{eq:AE_normalized}
\bar V(AE) \coloneqq 
\min_{U, V}  \left\{
\sum_{x \in X} p_x \E \left[ \max_{y \in Y_0} \left\{U_{xy} + \epsilon_{iy} \right\} \right]
+
\sum_{y \in Y} q_y \E \left[ \max_{y \in Y_0} \left\{V_{xy} + \epsilon_{xj} \right\} \right]
\right\}.    
\end{align}
Note that the weights $p_x$ and $q_y$ are used instead of $n_x$ and $m_y$ in the current main text.

The social welfare per capita under the stable outcome with non-zero-valued outside options is denoted by
\begin{align}
\bar V(S) &\coloneqq 
    \frac{1}{L}\left[\sum_{x \in X}\sum_{i \in x} u_i + \sum_{y \in Y} \sum_{j \in y} v_j \right] \\
    &=
    \sum_{x \in X}\frac{n_x}{L} \frac{1}{n_x}\sum_{i \in x} u_i
    +
    \sum_{y \in Y} \frac{m_y}{L} \frac{1}{m_y}\sum_{j \in y} v_j
    .
    \label{eq:sw_IE}
\end{align}
By Lemma~\ref{lemma:tu_discrete_choice}, for any $i \in x$ and $j \in y$, we have
\begin{equation}
    u_i = \max_{y \in Y_0} \left\{U_{xy} + \epsilon_{iy} \right\},
\quad
v_j = \max_{x \in X_0} \left\{V_{xy} + \epsilon_{xj} \right\}.
\label{eq:disc_choice}
\end{equation}
By the SLLN and \eqref{eq:disc_choice}, we have
\[
\frac{1}{n_x}\sum_{i \in x} u_i \to \E \left[ \max_{y \in Y_0} \left\{U_{xy} + \epsilon_{iy} \right\} \right], \ a.s. \quad (n_x \to \infty),
\]
and
\[
\frac{1}{m_y}\sum_{j \in y} v_j \to \E \left[ \max_{y \in Y_0} \left\{V_{xy} + \epsilon_{xj} \right\} \right], \ a.s. \quad (m_y \to \infty).
\]
By \eqref{eq:limit_fraction}, we have
\begin{align}
    \bar V(S) \to \bar V(AE), \ a.s. \quad (n_x, m_y \to \infty).
\end{align}
\qed

We compute $(D_S)$ for the stable outcome in our simulation, and compare its social welfare per capita (i.e., the optimal value of $(D_S)$ divided by $L$) with the social welfare per capita in AE computed by \eqref{eq:AE_normalized}.

\subsection{Simulation}
\label{app:simulation_AEIEDA}
We consider a matching market with four medical schools, denoted by 
$
X = \{ x_1, x_2, x_3, x_{4} \},
$
and three hospitals, denoted by 
$
Y = \{ y_1, y_2, y_3 \}.
$
Each medical school admits \(n\) students, and each hospital offers \(m\) available slots. The aggregate-level joint surplus generated by a match between a university \(x\) and a hospital \(y\) is given by the following matrix:
\[
(\Phi_{xy})_{xy} = \begin{pmatrix}
5 & 3 & 2\\
0 & 4 & 1\\
4 & 2 & 3\\
-1 & 2 & 2
\end{pmatrix}.
\]

Given the matrix \((\Phi_{xy})_{xy}\) and the population of doctors and hospital slots, we first compute the AE for this market. This computation yields the number of matches across all types as well as the aggregate-level utilities \(U_{xy}\) and \(V_{xy}\). In addition, we obtain the welfare measures \(G\) and \(H\) for both sides.

At the individual level, the utility for a doctor \(i\) of type \(x\) and for a slot \(j\) of type \(y\) is given by
\begin{align}
    U_{xy} + \epsilon_{iy} \quad \text{and} \quad V_{xy} + \eta_{xj},
\end{align}
respectively. Summing these, we define the individual-level social surplus as
\begin{align}
    \Phi_{ij} = U_{xy} + \epsilon_{iy} + V_{xy} + \eta_{xj}.
\end{align}
Note that this individual-level utility is defined only for doctor–slot pairs; in cases of unmatch, a doctor \(i\) obtains utility \(\epsilon_{i0}\) and a slot \(j\) obtains utility \(\eta_{0j}\). We assume all unobserved terms \((\epsilon_{iy})_{y\in Y_{0}}\) and \((\eta_{xj})_{x\in X_0}\) follow a Gumbel distribution.

Using these simulated values of utility and surplus, we compute two types of market outcomes: the stable outcome and the outcome from the deferred acceptance algorithm. For the stable outcome, we solve \((D_S)\) to obtain the equilibrium. For the DA outcome, we first construct preference lists based on the computed utilities and then implement the standard DA algorithm to determine the matching.

We examine four market settings in which \(n\) and \(m\) grow, with the pairs given by 
\[
(n, m) \in \{ (20,40),\ (40,60),\ (60,80),\ (80,100) \}.
\]
For each market setting, we compute the equilibrium outcome 50 times and report the average results.  
Figure~\ref{fig:ae_ss_comparison} compares the three market outcomes. First, AE closely approximates both the average welfare on each side and the matching outcomes observed in the stable outcome, even when the numbers of doctors and slots are relatively small. Second, as expected, the average social welfare in the DA outcome is lower than that in the stable outcome. Nevertheless, as the market size increases, the welfare measures and matching outcomes under DA exhibit trends similar to those observed in AE and the stable outcome.

\begin{figure}
    \centering
    \includegraphics[width=0.9\linewidth]{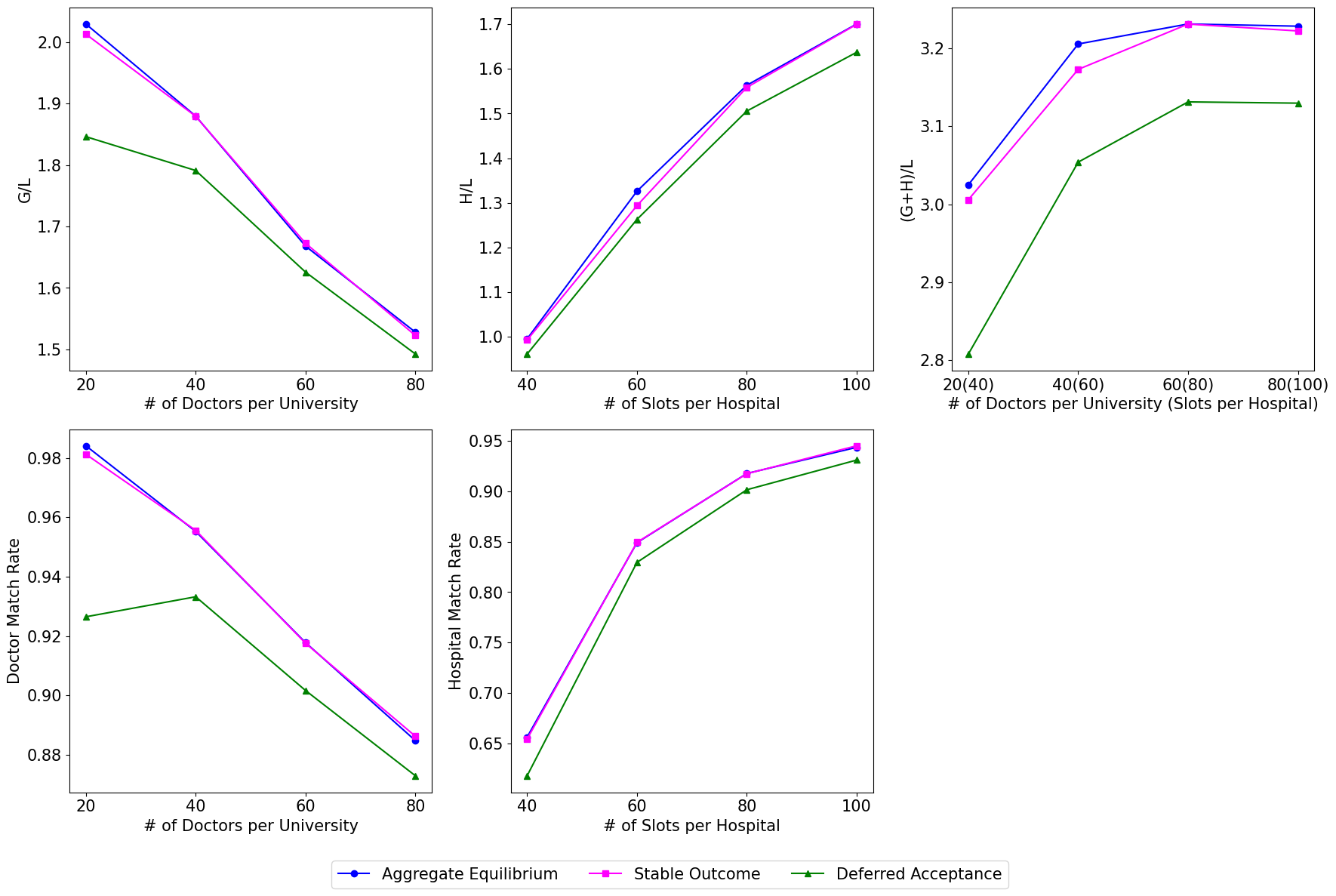}
    \caption{AE and individual market outcomes}
    \label{fig:ae_ss_comparison}
    \vskip 10pt 

    \begin{footnotesize}
    \begin{minipage}{15cm}
	\textit{Note}: The upper panels present the welfare measures. The leftmost panel compares the average welfare on the doctors' side, the center panel compares the average welfare on the hospitals' side, and the rightmost panel compares the total social welfare, which is the sum of the values in the first two panels. The lower two panels compare the matching outcomes. The left panel shows the match rate on the doctors' side, defined as the ratio of matched doctors to the total number of doctors, and the right panel shows the match rate on the hospitals' side, defined as the ratio of filled slots to the total number of slots. The blue line represents AE, the magenta line represents the stable outcome, and the green line represents the DA outcome.
	\end{minipage} 			
    \end{footnotesize}
\end{figure}

\section{Monte Carlo Simulation}
\label{sec:MC}
We start by describing the overall setting of the Monte Carlo simulation. All the detailed parameter values are left to Appendix \ref{sec:appendix_MC}.
There are $10$ prefectures, numbered from $0$ to $9$, grouped into three regions: $\{0,1\}\in R_0$, $\{2,3,4,5\}\in R_1$, and $\{6,7,8,9\}\in R_2$. 
$R_0$ represents an urban area, and $R_1$ and $R_2$ are rural areas. The government is concerned about the inefficient supply of medical services in $R_2$ and attempts to meet a lower bound in terms of the number of matches in the region.

We have a total of $20$ hospitals. Each hospital is placed in one of the prefectures based on a multinomial distribution. Hospital characteristics are modeled dynamically. When we denote each hospital by $h$, each hospital's capacity, denoted by $c_{ht}$, starts with a Poisson distribution at time $t = 0$ and evolves over time through a stochastic process involving increments and decrements modeled by independent Poisson distributions. We use $j$ to denote each slot in a hospital.
Other hospital-specific characteristics, like the number of beds, are captured by a variable $z_{ht}$, which follows a normal distribution.

We have $200$ doctors and they are distributed among the prefectures in a similar way to the hospitals. Each doctor belongs to one of $20$ medical schools, and the schools themselves are distributed among prefectures, also based on a multinomial distribution. Each school has an equal split of the doctors in the same prefecture.
The schools have characteristics such as average ability measures that follow a normal distribution. We use $s$ to denote the school and $i$ to denote a doctor.

We define the net joint surplus generated by a matching between a slot $j$ and a doctor $i$ at time $t$ in the following way:
\begin{align}
    \Phi_{ijt} = \Phi_{sht} + \xi_{ijt},
\end{align}
where
\begin{align}
    \Phi_{sht} = U^{\mathrm{base}}_{sht} + V^{\mathrm{base}}_{sht}, \ \xi_{ijt} = \epsilon_{iht} + \eta_{sjt},
\end{align}
and
\begin{equation}
\begin{aligned}
    &U^{\mathrm{base}}_{sht} = \beta_{1,1} w_{1, ht} + \beta_{1,2} w_{2,ht} + \beta_2 \left| l_{s} - l_{h} \right| + \beta_3 1\{h\in R_1\ \text{or}\ h\in R_2\},\\
    &V^{\mathrm{base}}_{sht} = \gamma_1 x_{1,st} + \gamma_2 x_{2,st},\\
    &\epsilon_{iht} \sim Ex1, \ \eta_{sjt}\sim Ex1.
\end{aligned}
\label{eq:MC_specification}
\end{equation}
$\left|l_s - l_h\right|$ is a measure of the distance between school $s$ and hospital $h$: in this simulation, we define this as the absolute value of the gap between the prefecture index. And the last term in $U^{\mathrm{base}}_{sht}$ captures the negative impact on the utility from living in rural areas. Note that these rural areas include $R_1$, which is not the target of the subsidy to ensure the lower bound on the matching outcomes. 
We also use $U_{ijt} = U^{\mathrm{base}}_{sh(j)t} + \epsilon_{ih(j)t}$ and $V_{ijt} = V^{\mathrm{base}}_{s(i)ht} + \eta_{s(i)jt}$ to denote the individual level preferences.

\begin{figure}
\begin{center}
    \centering
    \includegraphics[width = 15cm]{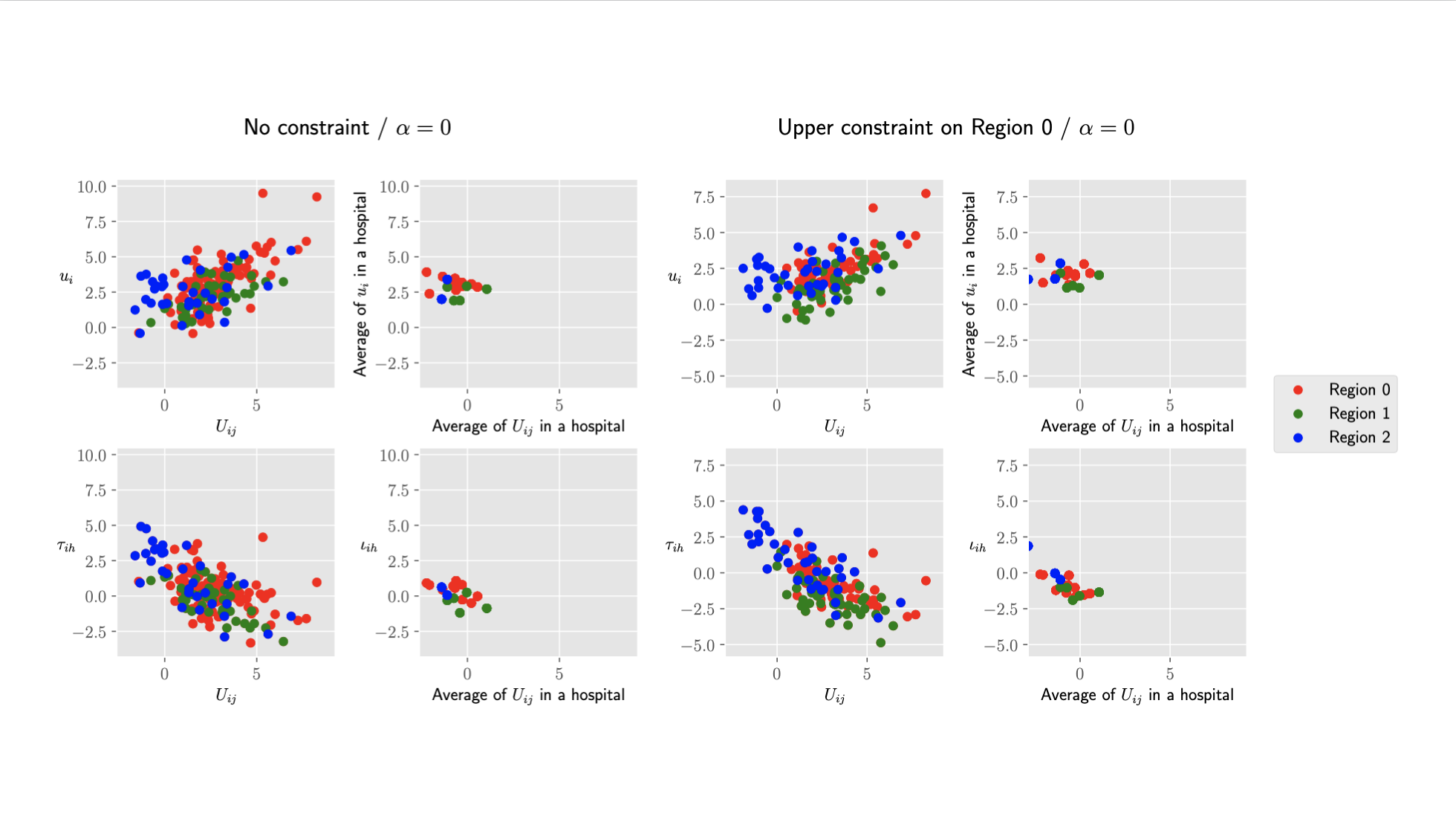}
    \caption{Simulated Stable Outcomes}
    \label{fig:simulation_scatter}
\end{center}
\end{figure}

\subsection{Simulation}
We compute a stable outcome of an instance of the above market at one time period. The number of matches in each region is $94$, $43$, and $33$. The number of unmatched doctors is $30$ and the number of unmatched slots is $46$. 
Imagine that the government sets an upper bound on $R_0$ to increase the number of matches in rural regions. When we set the upper bound on $R_0$ to $60$, the equilibrium numbers of matches are: $60$, $45$, and $37$. The number of unmatched doctors is $58$ and the number of unmatched slots is $74$.
Under this regional constraint, the tax levied on the matchings in $R_0$ is $3.149$.

Figure \ref{fig:simulation_scatter} depicts the scatter plots of several equilibrium objects when we set $\alpha = 0$, which implies that the entire amount of tax is levied on the hospital side.
The left four panels are obtained when we set no regional constraint, and the right four panels are obtained when we set an upper bound on $R_0$ to $60$.
The four panels in each left and right half set of panels depict the same things for the case of no constraint and regional constraint.
The first and third columns are the results in the stable outcome (with optimal tax), where each dot represents a match. The upper panels are the scatter plots of preference of doctor $i$ has for slot $j$, $U_{ij}$, and the utilities attained in a stable outcome, $u_{i}$. The lower panels are the scatter plots of $U^{\mathrm{base}}_{ij}$ and the transfer in a stable outcome, $\tau_{ih}$. 
The second and fourth columns represent aggregate level objects: the upper panels are the scatter plots of the average of $U_{ij}$ and $u_i$ among a matches in a hospital, and the lower panels are the scatter plots of the average of $U_{ij}$ and aggregate-level transfer, $\iota_{ih}$, from a hospital. In all scatter plots, a red marker represents a match or a hospital in $R_0$, a green marker for $R_1$, and a blue marker for $R_2$.

\begin{figure}
\begin{center}
    \centering
    \includegraphics[width = 15cm]{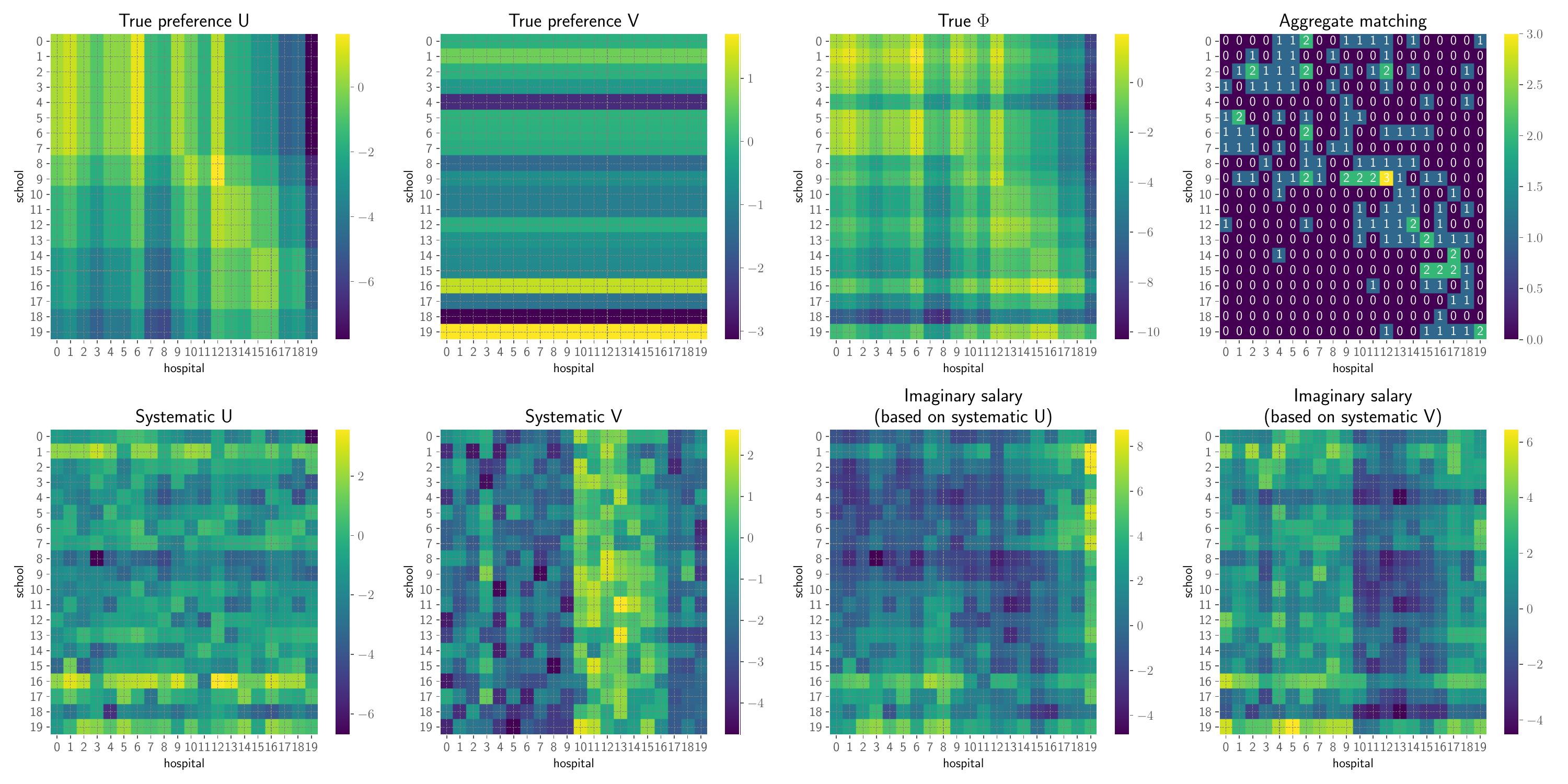}
    \caption{Aggregate Objects}
    \label{fig:aggregate_simulate}
\end{center}
\end{figure}

As expected, the utility attained in a stable outcome is higher when a doctor can be matched with a preferred slot, whereas the transfer decreases. This decrease is also reflected in a decrease in aggregate-level transfers from a hospital: when the average of $U_{ij}$ in a hospital match increases, the aggregate-level transfers from the hospital decrease.
The impact of a regional constraint on the aggregate-level transfers is clear: in the constrained region, $R_0$, they decrease under the constraint compared with the case of no constraint.
This is true in the level sense, and the decrease is larger than the changes in other regions.
Note that the changes in the aggregate-level transfers and their sizes depend on the value of $\alpha$. For example, in the extreme case of $\alpha = 1$, the aggregate-level transfers in $R_0$ increases under the regional constraint. Hence, it is important to estimate the division of tax on the hospital side and the school side.

Hereafter, we set $\alpha = 0.2$.
Figure \ref{fig:aggregate_simulate} summarizes the aggregate-level objects computed based on the simulated stable outcome. 
In all the heatmaps, the horizontal axis represents hospitals and the vertical axis represents schools.
Aggregate matching is depicted in the upper panel in the rightmost column. The number annotated in each cell represents the number of matches between a hospital and a school.
Aggregate-level utilities are computed following the definition stated in \eqref{eq:agg_level_util}. 

For the ease of argument, we name the gap between the aggregate-level utilities and the aggregate-level base utilities by \textit{imaginary salary}: the imaginary salary from school is defined as $\chi_{sh}^{U} \equiv U_{sh} - U^{\mathrm{base}}_{sh}$ and the same one from the hospital side is defined as $\chi_{sh}^{V} \equiv V^{\mathrm{base}}_{sh} - V_{sh}$.\footnote{This name is from the fact that the results of the following two are the same: (1) the agents in one side chooses the agent in the other side by comparing the sum of preference term, imaginary salary, and individual disturbance and (2) Aggregate matching outcome.}
The lower left two panels in Figure \ref{fig:aggregate_simulate} show the imaginary salaries between schools and hospitals. The number of doctors in our simulation is $200$, which is insufficient for approximating the market with an infinite number of doctors. This makes the gap between the two imaginary wages computed based on $U$ and $V$.\footnote{We can show that these two must be equal in the infinite sample case.}

\begin{figure}[t]
\begin{center}
    \centering
    \includegraphics[width = 15cm]{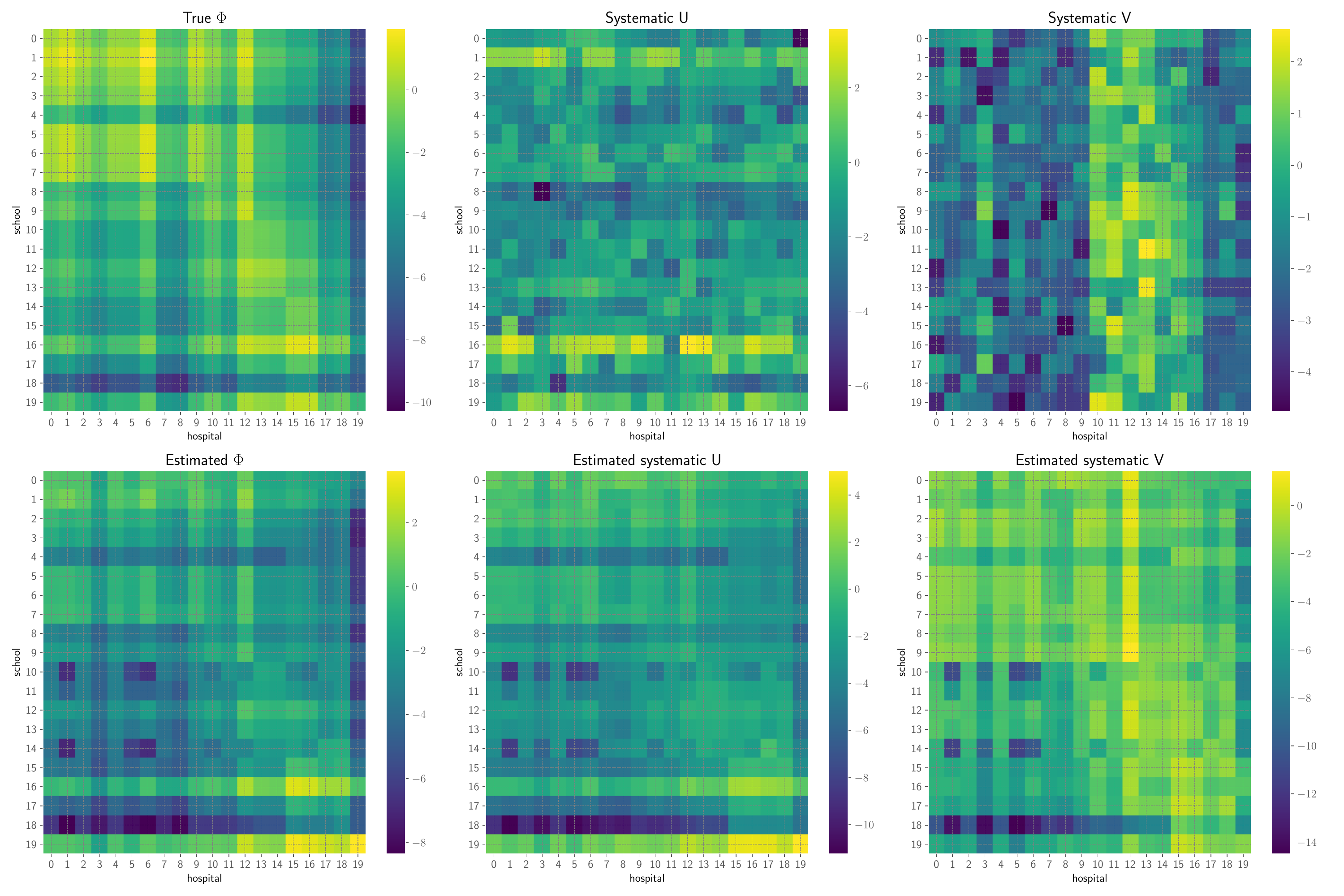}
    \caption{Estimation Results of the First Stage}
    \label{fig:estresults_first_stage_sim}
\end{center}
\end{figure}

\subsection{Estimation}
The estimation results in the first stage are depicted in Figure \ref{fig:estresults_first_stage_sim}. The upper panels are the heatmap of the true values of $\Phi_{sh}$, $U_{sh}$, and $V_{sh}$. They are the estimation targets. The lower panels display the estimation results corresponding to the upper panels. We set the degree of the polynomials to two. The estimated social surplus follows similar patterns to the true social surplus, whereas the estimated aggregate-level utilities show different patterns from the true values. These gaps are due to the incompleteness of polynomial approximations in equation \eqref{eq:systematic_polynomial_approx}. In practice, we handle this problem by including non-linearly transformed base variables when making polynomial series.

\begin{table}[t]
\begin{center}
\centering
\caption{Estimates}
\scalebox{0.9}{
\begin{tabular}{@{}cccccccccc@{}}
\toprule
Parameter           & $\gamma_1$    & $\gamma_2$    & $\beta_{1,1}$ & $\beta_{1,2}$ & $\beta_3$    & $\beta_2$     &  $ \frac{\alpha}{T}\sum_{t}w_{0t}$ & $ \frac{1-\alpha}{T}\sum_{t}w_{0t}$  \\ \midrule
Estimate            & 0.151         & -0.0384       & 0.907         & -0.490        & -0.527       & -0.879       & 0.265             & 1.471            \\
Standard Dev.       & (0.178)       & (0.220)       & (0.0670)      & (0.0496)      & (0.0922)     & (0.130)      & (0.128)           & (0.156)          \\
True Value          & 1.00          & -0.400        & 1.00          & -0.500        & -1.00        & -1.00       & 0.378             & 1.510            \\ \bottomrule
\end{tabular}}
\label{tab:MC_est_table}
\end{center}
\end{table}

For the second-stage estimation, we simulate matching outcomes over two periods. In the first period, the government set the upper bound on region $0$ to $80,$ and in the second period, the upper bound is changed to $60$. 
Because, in this exercise, we assume that the true value of aggregate-level transfers is observable, we use the moment conditions \eqref{eq:moment_conditions} directly to construct a minimum distance estimator. We leave the details of the construction of this estimator in Appendix \ref{sec:appendix_MC}.
In this exercise, we use the time-average version of the moment conditions, and therefore, the tax term is simply identified as the time-average of the levied tax.

Table \ref{tab:MC_est_table} summarizes the estimation results of the second stage.
The first six columns are the structural parameters in equation \eqref{eq:MC_specification}. The last two columns are the average taxes levied on the doctor side and the hospital side.
From these estimation results, the estimate of $\alpha$ is $0.153$, whereas the true value is $0.2$.
Based on these estimates, we can conduct a counterfactual analysis: for example, the taxes in the alternative regional constraints, the matching outcomes, and the salaries are obtained by solving the equilibrium.

\subsection{Minimum distance estimator}
\label{sec:appendix_MC}
We use the time average of both sides to construct the moment conditions \eqref{eq:moment_conditions}. Note that, in this case, it is impossible to identify $\delta_{ht}^{U}$ and $\delta_{ht}^{V}$ for every $t$ because the summation of them with respect to time determines the moment values. Hence, all we can identify is the average tax levied over the time periods.
We define a function $g$ to represent the moment conditions:
\begin{align}
    g(\theta) \equiv \begin{pmatrix}
        \frac{1}{T} \sum_{t}\left( \sum_{s} \omega_{s1t} \left(X_{s1t}^{U\prime}\beta_U - \delta^{U}_{Ht} \right) - \sum_{s}\omega_{s1t} \hat{\tilde{U}}_{s1t} - \iota_{1t}\right)\\
        \vdots\\
        \frac{1}{T} \sum_{t}\left( \sum_{s} \omega_{sHt} \left(X_{sHt}^{U\prime}\beta_U - \delta^{U}_{Ht} \right) - \sum_{s}\omega_{sHt} \hat{\tilde{U}}_{sHt} - \iota_{Ht}\right)\\
        \frac{1}{T}\sum_t \left(\sum_{s} \omega_{s1t} \left(X_{s1t}^{V\prime}\beta_V - \delta^{V}_{1t} \right) - \sum_{s}\omega_{s1t} \hat{\tilde{V}}_{s1t} - \iota_{1t}\right)\\
        \vdots\\
        \frac{1}{T}\sum_t \left(\sum_{s} \omega_{sHt} \left(X_{sHt}^{V\prime}\beta_V - \delta^{V}_{Ht} \right) - \sum_{s}\omega_{sHt} \hat{\tilde{V}}_{sHt} - \iota_{Ht}\right)
    \end{pmatrix}.
\end{align}


Our estimator is the minimum distance estimator where the moment condition is specified in (\ref{eq:moment_conditions}). When we write the asymptotic variance of $\hat{\tilde{U}}_{sht}$ and $\hat{\tilde{V}}_{sht}$ by $S^U_{t}$ and $S^V_{t}$, the optimally weighted minimum distance estimator is defined as follows: 
\begin{align}
    \hat{\theta} \equiv \argmin_{\theta}\ g^{\prime}(\theta) S^{-1} g(\theta),
\end{align}
where
\begin{align}
    S = \begin{pmatrix}
        \frac{1}{T^{2}}\sum_t S^U_{t} & 0\\
        0 & \frac{1}{T^{2}}\sum_t S^V_{t}
    \end{pmatrix}.
\end{align}

We can compute the asymptotic distribution of the estimator as follows, and the standard error can be obtained directly\footnote{We assume that the polynomial approximation regarding the systematic utility is correct. When there is a misspecification, we must treat the bias due to the misspecified model, which is beyond the scope of this study.}.
As the Poisson regression in the first step has an explicit form of $S_t^{U}$ and $S_t^{V}$, we can directly compute the estimates of the standard errors for every parameter by inserting the estimated results. 
\begin{theorem}
    Under the regularity conditions, the asymptotic distribution of $\hat{\theta}$ is as follows:
    \begin{align}
        \sqrt{ST}\left( \hat{\theta} - \theta \right)\xrightarrow[]{d} N\left(0, \left(\Gamma^{\prime} S^{-1} \Gamma\right)^{-1}\right),
    \end{align}
    where
    $\Gamma = \frac{\partial}{\partial \theta}g(\theta)$.
\end{theorem}

\section{Additional analysis}
\subsection{Test of implicit tax on urban areas}
\label{sec:appendix_AE_EAE}
Under the excessive competition for the slots in urban counties, it is possible that the surplus generated by matches in urban counties has already been diminished due to some external forces: for example, as the number of slots decreases, it becomes more difficult for residency programs to secure funding.
Given this consideration, the marginal effect estimates for locations in Tokyo or other urban areas may be underestimated. 
Here, we examine whether matches in urban counties are subject to an implicit tax under the current market outcome.

\paragraph{Empirical strategy}
We define an individual level transfer as in the main analysis.
Fix any period $t$ and an implicit taxation policy $w_t = (w_{zt})_z$. We define \textit{individual-level transfer from hospital $h$ to doctor $i$ with a ratio $\alpha$}, denoted by $\tau_{iht}$, as follows:
\begin{align}
    \label{def:transfer}
        \tau_{iht} \coloneqq u_{it} - \left(U^{\mathrm{base}}_{sht} + \epsilon_{iht} - \alpha w_{zt}\right).
\end{align}
Tax $w_{zt}$ is levied on the matched pair of doctor $i$ and hospital $h$. The doctor incurs fraction $\alpha$ of the tax; thus doctor's payoff \emph{without transfer} were to be $U_{ijt}^{\mathrm{base}} - \alpha w_{rt} = U^{\mathrm{base}}_{sht} + \epsilon_{iht} - \alpha w_{rt}$. In equilibrium, doctor $i$ enjoys equilibrium payoff $u_{it}$, which could be different from $U_{ijt}^{\mathrm{base}}$. We interpret the difference between equilibrium payoff and payoff without transfer as the individual-level transfer from the hospital side to the doctor side.

Now we define an \emph{aggregate-level transfer} as the average of the individual-level transfer in a hospital $h$ and denote it by $\iota_{ht}$:
\begin{align}
    \iota_{ht} \coloneqq \frac{1}{\left|D(h)_t \right|} \sum_{i\in D(h)_t}  \tau_{iht},
\end{align}
where $D(h)_{t}$ is the set of doctors matched with any slot of hospital $h$ at time $t$.
We have the same moment conditions for this case as in the main analysis.


As in the main analysis, we model the base utilities as a linear function of observable characteristics. In addition to them, we define $\delta^{U}_{ht} \equiv \alpha w_{r(h)t}$ and $\delta^{V}_{ht} \equiv (1-\alpha) w_{r(h)t}$ as the levied implicit tax on school side and hospital side in period $t$ and treat them as parameters to be estimated.
Our parameters of interest are the following: $\beta_U$, $\beta_V$, $\delta^{U}_{ht}$ for every pair of $h$ and $t$, and $\delta^{V}_{ht}$ for every pair of $z$ and $t$.
We use $\theta$ to indicate the vector of these parameters: $\theta \coloneqq \left(\beta_U, (\delta^{U}_{ht})_{h,t}, \beta_V, (\delta^{V}_{ht})_{h,t} \right)$.

For the estimation of the aggregate-level utilities are same as our main analysis.
For the second step, we construct the following moment conditions for $\theta$:
\begin{equation}
    \begin{aligned}
        \label{eq:moment_conditions_tax_app}
    \sum_{s} \omega_{sht} \left(X_{sht}^{U,\mathrm{base}\prime}\beta_U - \delta^{U}_{ht} \right) = \sum_{s}\omega_{sht} \hat{U}_{sht} - \iota_{ht},\  \forall \ h,t \\
    \sum_{s} \omega_{sht} \left(X_{sht}^{V,\mathrm{base}\prime}\beta_V - \delta^{V}_{ht} \right)= \sum_{s}\omega_{sht} \hat{V}_{sht} + \iota_{ht},\  \forall \ h,t 
    \end{aligned}
\end{equation}
By adopting the same measurement model, the estimating equations are as follows:
\begin{equation}
    \begin{aligned}
    \label{eq:estimating_equation_app}
    &\sum_{s}\omega_{sht} \hat{U}_{sht} = \gamma_{0,U} + \gamma_{1, U}S_{ht} + \sum_{s} \omega_{sht} \left(X_{sht}^{U\prime}\beta_U - \delta^{U}_{ht} \right) + \psi_{ht}^{U}\\
    &\sum_{s} \omega_{sht}\hat{V}_{sht} = \gamma_{0,V} + \gamma_{1,V} S_{ht} + \sum_{s} \omega_{sht} \left(X_{sht}^{V\prime}\beta_V - \delta^{V}_{ht} \right) + \psi_{ht}^{V}.
    \end{aligned}
\end{equation}
We estimate these linear equations using BLP-type IVs.

\begin{table}[tbp]\centering
\def\sym#1{\ifmmode^{#1}\else\(^{#1}\)\fi}
\captionsetup{justification=centering}
\caption{Estimation Result: Tax Parameters\\
Degree of polynomials = 3}
\label{tab:tax_degree3}
\begin{tabular}{l*{4}{c}}
\hline\hline
            &\multicolumn{1}{c}{(1)}   &\multicolumn{1}{c}{(2)}   &\multicolumn{1}{c}{(3)}   &\multicolumn{1}{c}{(4)}   \\
            &  University   &  University   &    Hospital   &    Hospital   \\[1em]
\hline
Constant    &      -6.658***&      -6.704***&       1.875** &       1.966** \\
            &     (0.314)   &     (0.327)   &     (0.848)   &     (0.874)   \\
[1em]
Salary (million Yen)&       2.412***&       2.519***&      -1.579** &      -1.810** \\
            &     (0.445)   &     (0.478)   &     (0.706)   &     (0.793)   \\
[1em]
Urban       &      0.0338   &      0.0429   &       0.114*  &       0.119*  \\
            &    (0.0470)   &    (0.0496)   &    (0.0671)   &    (0.0702)   \\
[1em]
Urban $\times$ 2018&      0.0639   &    -0.00972   &      0.0882   &      0.0805   \\
            &    (0.0577)   &    (0.0644)   &    (0.0786)   &    (0.0878)   \\
[1em]
Urban $\times$ 2019&      0.0752   &      0.0509   &     -0.0886   &     -0.0634   \\
            &    (0.0575)   &    (0.0640)   &    (0.0821)   &    (0.0892)   \\
[1em]
Tokyo       &               &    -0.00704   &               &     -0.0740   \\
            &               &    (0.0751)   &               &     (0.116)   \\
[1em]
Tokyo $\times$ 2018&               &       0.265***&               &      0.0282   \\
            &               &     (0.102)   &               &     (0.144)   \\
[1em]
Tokyo $\times$ 2019&               &      0.0903   &               &     -0.0940   \\
            &               &     (0.102)   &               &     (0.155)   \\[1em]
\hline
\(N\)       &        2627   &        2627   &        2627   &        2627   \\
Other covariates       &        $\surd$   &         $\surd$   &      $\surd$    &   $\surd$    \\
Tokyo   $\times$ Year   &        &         $\surd$   &         &   $\surd$    \\
\hline\hline
\multicolumn{5}{l}{\footnotesize Standard errors in parentheses}\\
\multicolumn{5}{l}{\footnotesize * p<0.1, ** p<0.05, *** p<0.01}\\
\end{tabular}
\vskip 10pt
\end{table}

\paragraph{Estimation results}
We take advantage of the fact that the regional constraints on urban areas are getting the more strict as time goes to clarify the existence of implicit tax.
As we explain in Section \ref{sec:data}, the government lowers the upper bounds on the number of matches in the urban areas by $5\%$ every year.
Hence, if the surplus in urban areas have been decreased due to the constraints, the estimated coefficients on dummy variables of urban or Tokyo will decrease over the years.

Table \ref{tab:tax_degree3} shows the estimation results based on IV estimation: where we include all the covariates in Table \ref{tab:results_degree3} and additioanly the interaction terms between dummy variables of urban and Tokyo and the dummy variables of each year.
As found in every specifications for both sides, we do not find the decrease in the coefficients of dummy variables of urban areas. Furthermore, we do not find any positive impact of living in urban areas except for living in Tokyo in 2018.
Based on these results, we conclude that the current market does not suffer from any implicit tax and the market outcome is the aggregate equilibrium under the reduced capacities.

\subsection{Relative impacts}
\label{sec:appendix_relative_impacts}
\begin{figure}[t]
\begin{center}
    \centering
    \includegraphics[width = 15cm]{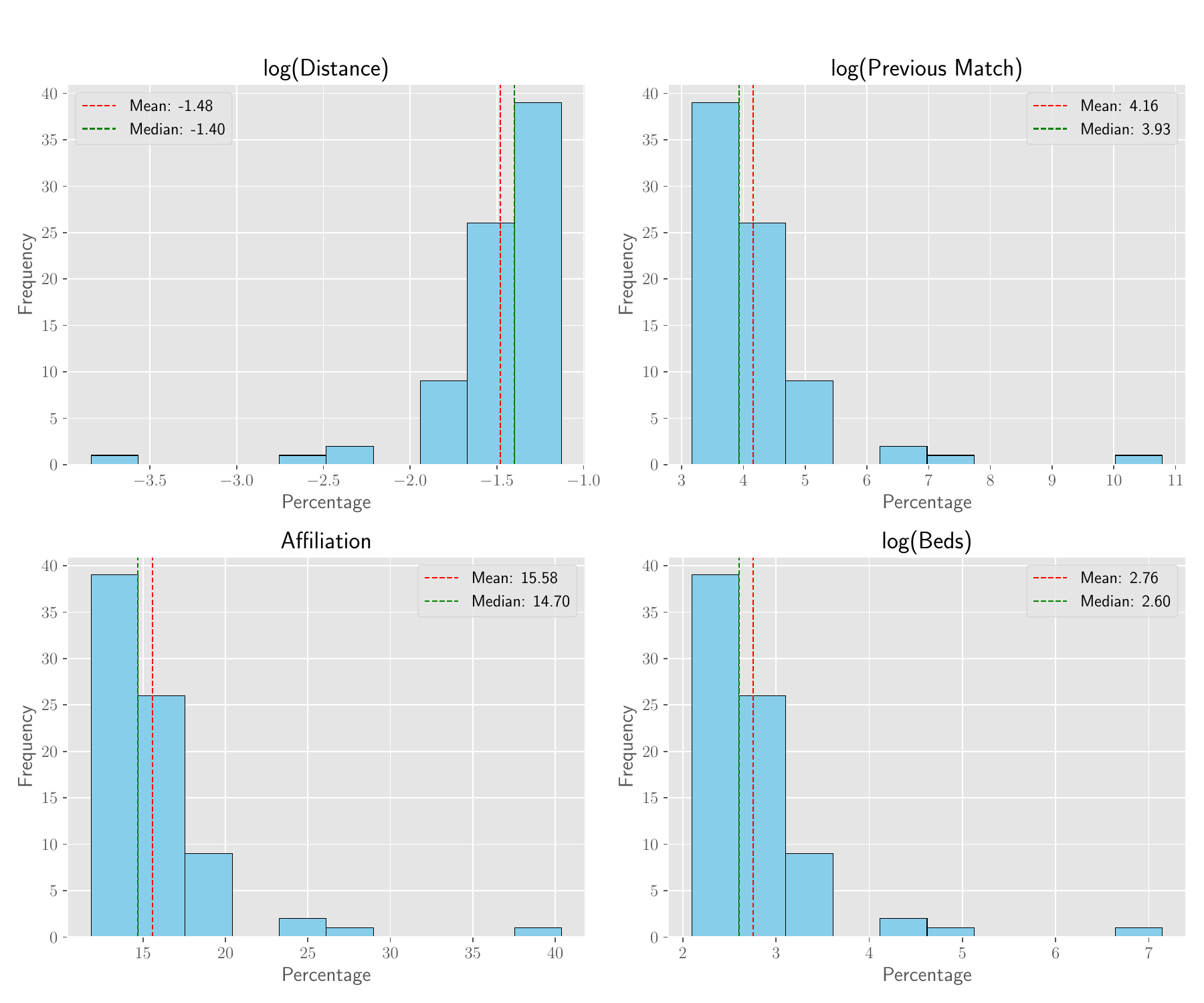}
    \caption{Relative Size of Coefficients in Doctor's Preference.}
    \label{fig:doctor_size_hist}
\end{center}
\end{figure}

\begin{figure}[t]
\begin{center}
    \centering
    \includegraphics[width = 15cm]{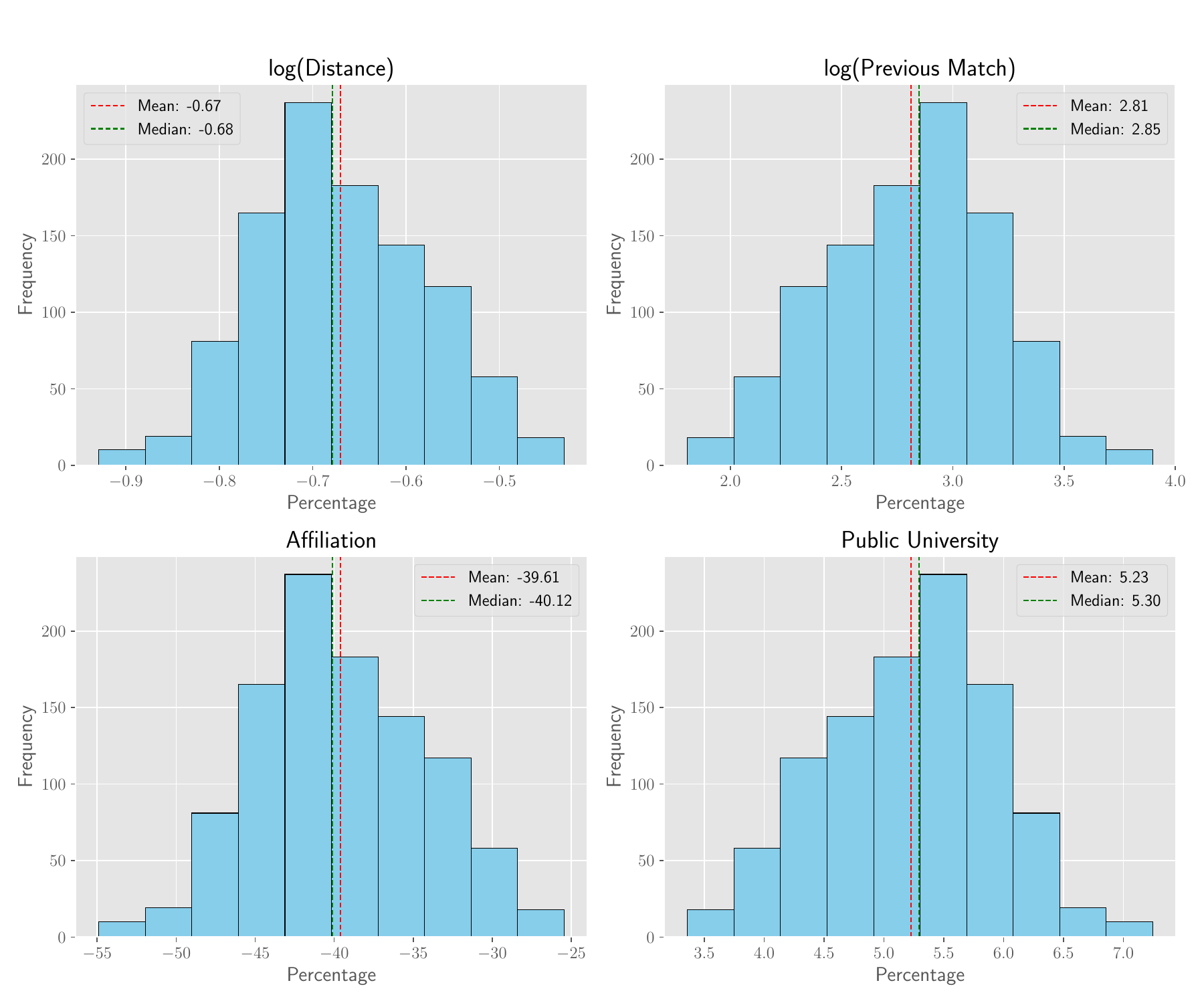}
    \caption{Relative Size of Coefficients in Hospital's Preference.}
    \label{fig:hospital_size_hist}
\end{center}
\end{figure}

To grab the sizes of impacts, we compute the ratio of these coefficients to the aggregate-level utility.
Specifically, we first transform the aggregate-level utilities into monetary unit based on the estimation results in the second stage estimation: for doctor side, we compute $U^{money}_{sht} \equiv \frac{\hat{\tilde{U}}_{sht} - \hat{\gamma}_{0, U, t}}{\hat{\gamma}_{1, U}}$ and for hospital side, we compute $V^{money}_{sht} \equiv \frac{\hat{\tilde{V}}_{sht} - \hat{\gamma}_{0, V, t}}{\hat{\gamma}_{1, V}}$.\footnote{Note that the constants depend on period $t$, i.e. $\hat{\gamma}_{0, U, t}$ and $\hat{\gamma}_{0, V, t}$, because we include the dummy variables of every years.}
Then, we take the average of these transformed aggregate-level utilities with respect to the periods and the institutions in the other side of the market: $\bar{U}^{money}_{s} \equiv \frac{1}{HT}\sum_{h,t}U^{money}_{sht}$ and $\bar{V}^{money}_{h} \equiv \frac{1}{ST}\sum_{s,t}V^{money}_{sht}$.
$\bar{U}^{money}_{s}$ and $\bar{V}^{money}_{s}$ are measures of the expected utilities in the matching market computed for every universities and hospitals.
Finally, we take the ratio of the estimated coefficients to these measures to grab the relative size of the coefficients.
For the logarithm covariates, we compute the relative size of $10\%$ changes of the covariates.

Figure \ref{fig:doctor_size_hist} depicts the histograms of the relative size of the coefficients in doctor's preference for the four covariates which have statistically significant influence in Table \ref{tab:monetary_u_3}: the logarithm of distance, the logarithm of the number of previous matches, the dummy variable of affiliation, and the logarithm of the number of beds.
In each panel, we show the mean and the median of the relative size of impacts. Although there is variation in the utility level among the universities, the distribution of the relative size of impacts has single peak and their means and the medians are not so different.
The average of the relative size of impact of $10\%$ change in distance amounts to $1.4\%$ of the doctor's utility, the same one of the number of previous match amounts to $4.16\%$, and the same one of the number of beds amounts to $2.76\%$. On average, affiliation relationship amounts to about $15.58\%$ of the doctors' average utillities.

Figure \ref{fig:hospital_size_hist} shows the same histograms for the hospital side preference. We plot the historgrams of the four covariates which shows the statistically significance in Table \ref{tab:monetary_v_3}: the logarithm of distance, the logarithm of the number of previous matches, affiliation relationship, and the indicator of the public university.
As the average utilities of hospitals are larger than the ones of doctors in the monetary unit sense, the computed relative size of impacts are likely smaller than the values obtained in the case of doctors.
The average of the relative size of impact of $10\%$ change in distance amounts to $0.67\%$ of the doctor's utility and the same one of the number of previous match amounts to $2.81\%$. On average, graduates from public university, which is usually an elite school, gives $5.23\%$ increase in the utility of hospitals.
Although the affiliation relationship gives the largest negative impact on the utility of hospitals, this estimate is not stable for the choice of the degree of polynomials as shown in Appendix \ref{sec:appendix_empirical_results}.

\subsection{Results when the degree of polynomial is set to 2}
\label{sec:appendix_empirical_results}
Here we show the empirical results obtained when we set the degree of polynomials in the first step to two.
All the tables and figures listed here corresponds to the tables and figures shown in Section \ref{sec:empirical_results} and Appendix \ref{sec:appendix_AE_EAE}.
We do not find any qualittative difference in the main findings from the case where we set the degree of polynomials to three.

\begin{figure}[htbp]
\begin{center}
    \centering
    \begin{subfigure}[b]{\textwidth}
        \centering
        \includegraphics[width=\textwidth]{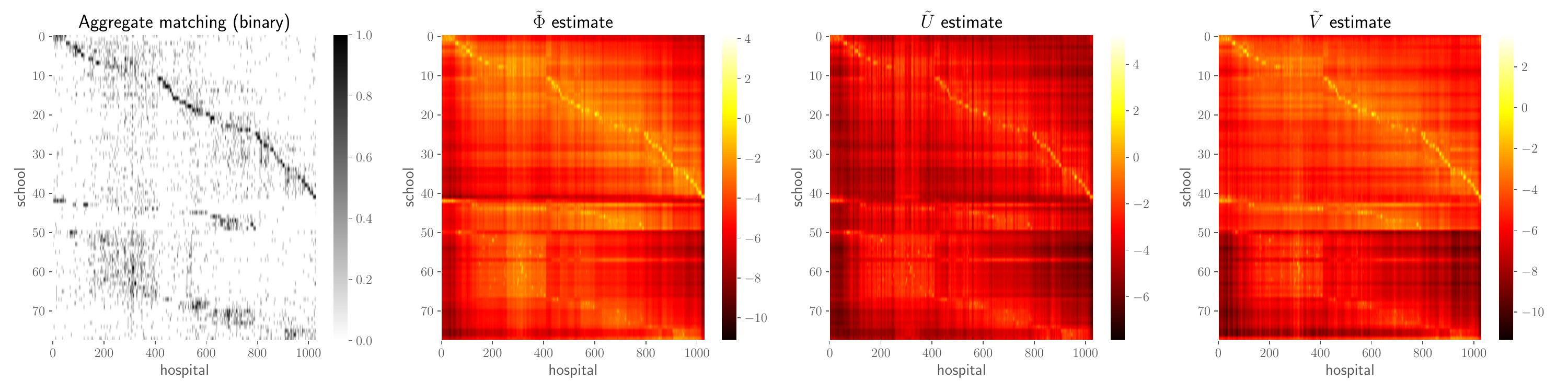}
        \caption{2017}
        \label{fig:app:label-2017}
    \end{subfigure}
    \vfill
    \begin{subfigure}[b]{\textwidth}
        \centering
        \includegraphics[width=\textwidth]{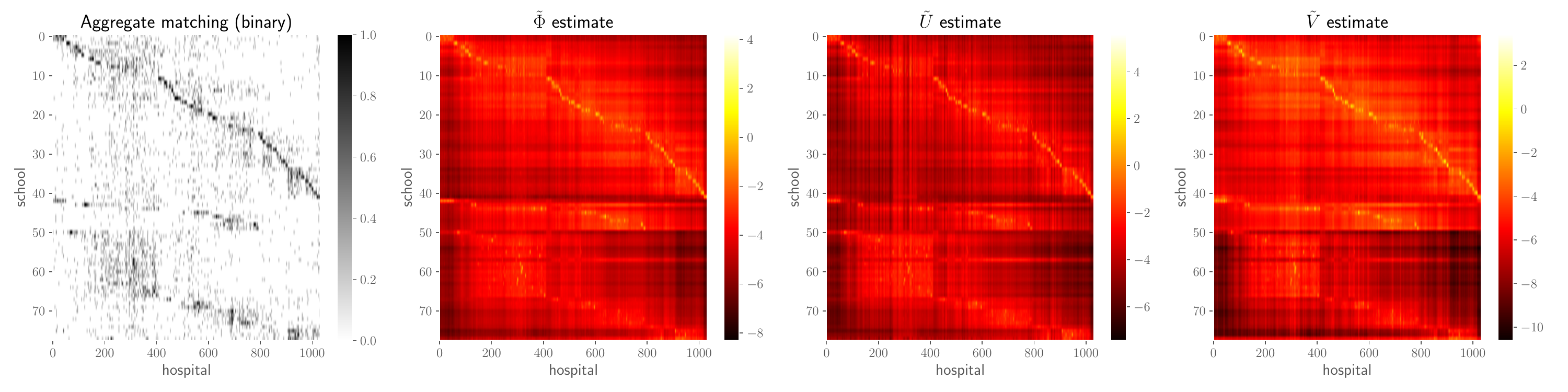}
        \caption{2018}
        \label{fig:app:label-2018}
    \end{subfigure}
    \vfill
    \begin{subfigure}[b]{\textwidth}
        \centering
        \includegraphics[width=\textwidth]{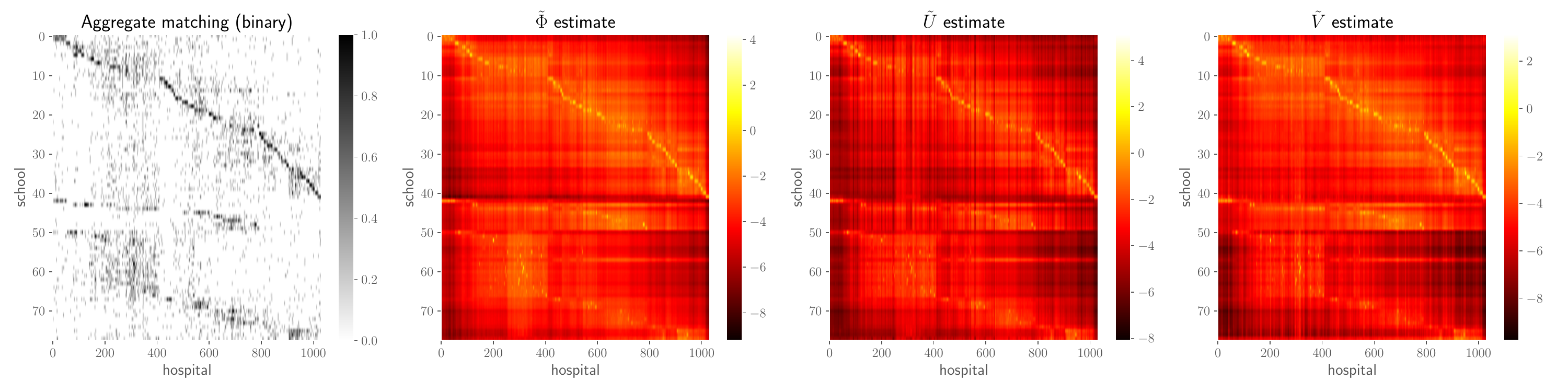}
        \caption{2019}
        \label{fig:app:label-2019}
    \end{subfigure}
    \caption{Aggregate matchings, estimated systematic utilities and estimated social surpluses.}
    \label{fig:first_stage_all_2}
\end{center}
\end{figure}

\begin{table}[htbp]\centering
\def\sym#1{\ifmmode^{#1}\else\(^{#1}\)\fi}
\captionsetup{justification=centering}
\caption{Estimation Result: Preference Parameters \\
Degree of polynomials = 2}
\begin{tabular}{l*{4}{c}}
\hline\hline
            &\multicolumn{1}{c}{(1)}   &\multicolumn{1}{c}{(2)}   &\multicolumn{1}{c}{(3)}   &\multicolumn{1}{c}{(4)}   \\
            &  University   &University (IV)   &    Hospital   &Hospital (IV)   \\[1em]
\hline
Constant    &      -5.917***&      -7.988***&       1.485** &       1.703** \\
            &     (0.225)   &     (0.429)   &     (0.705)   &     (0.787)   \\
[1em]
Salary (million Yen)&       0.564***&       4.129***&       0.704***&      -1.231*  \\
            &     (0.143)   &     (0.627)   &     (0.142)   &     (0.718)   \\
[1em]
Tokyo       &      -0.129***&      0.0102   &       0.100*  &    -0.00294   \\
            &    (0.0492)   &    (0.0604)   &    (0.0524)   &    (0.0621)   \\
[1em]
urban       &      -0.102***&      0.0524   &       0.252***&       0.191***\\
            &    (0.0339)   &    (0.0436)   &    (0.0317)   &    (0.0418)   \\
[1em]
log(Distance)&      -0.380***&      -0.400***&      -0.331***&      -0.304***\\
            &    (0.0157)   &    (0.0158)   &    (0.0150)   &    (0.0172)   \\
[1em]
log(Previous Match)&       1.583***&       1.563***&       1.663***&       1.667***\\
            &    (0.0398)   &    (0.0302)   &    (0.0398)   &    (0.0449)   \\
[1em]
Affiliation &      -0.488** &      -0.431***&      -2.676***&      -2.827***\\
            &     (0.199)   &     (0.146)   &     (0.173)   &     (0.194)   \\
[1em]
University hospital&      -0.199** &      0.0127   &               &               \\
            &    (0.0800)   &     (0.103)   &               &               \\
[1em]
Govermental hospital&      0.0319   &     -0.0589   &               &               \\
            &    (0.0341)   &    (0.0433)   &               &               \\
[1em]
log(Beds)   &       0.511***&       0.628***&               &               \\
            &    (0.0359)   &    (0.0462)   &               &               \\
[1em]
Public university&               &               &       0.182***&       0.176***\\
            &               &               &    (0.0531)   &    (0.0573)   \\
[1em]
Prestige    &               &               &      -1.905***&      -3.125***\\
            &               &               &     (0.668)   &     (0.727)   \\[1em]
\hline
\(N\)       &        2847   &        2627   &        2847   &        2627   \\
\hline\hline
\multicolumn{5}{l}{\footnotesize Standard errors in parentheses. * p<0.1, ** p<0.05, *** p<0.01}\\
\end{tabular}
\end{table}

\begin{table}[htbp]\centering
\def\sym#1{\ifmmode^{#1}\else\(^{#1}\)\fi}
\captionsetup{justification=centering}
\caption{Estimation Result: Tax Parameters\\
Degree of polynomials = 2}
\begin{tabular}{l*{4}{c}}
\hline\hline
            &\multicolumn{1}{c}{(1)}   &\multicolumn{1}{c}{(2)}   &\multicolumn{1}{c}{(3)}   &\multicolumn{1}{c}{(4)}   \\
            &  University   &  University   &    Hospital   &    Hospital   \\[1em]
\hline
Constant    &      -8.028***&      -7.928***&       1.755** &       1.725** \\
            &     (0.415)   &     (0.427)   &     (0.763)   &     (0.784)   \\
[1em]
Salary (million Yen)&       4.298***&       4.121***&      -1.276** &      -1.234*  \\
            &     (0.589)   &     (0.625)   &     (0.643)   &     (0.719)   \\
[1em]
Urban       &      -0.100   &      -0.116*  &       0.106*  &      0.0755   \\
            &    (0.0621)   &    (0.0649)   &    (0.0623)   &    (0.0646)   \\
[1em]
Urban $\times$ 2018&       0.275***&       0.247***&       0.179** &       0.239***\\
            &    (0.0763)   &    (0.0842)   &    (0.0720)   &    (0.0811)   \\
[1em]
Urban $\times$ 2019&       0.217***&       0.256***&      0.0655   &       0.107   \\
            &    (0.0761)   &    (0.0838)   &    (0.0749)   &    (0.0805)   \\
[1em]
Tokyo       &               &      0.0257   &               &       0.119   \\
            &               &    (0.0982)   &               &     (0.101)   \\
[1em]
Tokyo $\times$ 2018&               &      0.0976   &               &      -0.216*  \\
            &               &     (0.133)   &               &     (0.123)   \\
[1em]
Tokyo $\times$ 2019&               &      -0.145   &               &      -0.151   \\
            &               &     (0.133)   &               &     (0.132)   \\
[1em]
\hline
\(N\)       &        2627   &        2627   &        2627   &        2627   \\
Other covariates       &        $\surd$   &         $\surd$   &      $\surd$    &   $\surd$    \\
Tokyo   $\times$ Year   &        &         $\surd$   &         &   $\surd$    \\
\hline\hline
\multicolumn{5}{l}{\footnotesize Standard errors in parentheses. * p<0.1, ** p<0.05, *** p<0.01}\\
\end{tabular}
\end{table}

\begin{table}[htbp]\centering
\def\sym#1{\ifmmode^{#1}\else\(^{#1}\)\fi}
\captionsetup{justification=centering}
\caption{University preference parameters (unit: million Yen) \\
Degree of polynomials = 2}
\begin{tabular}{l*{3}{c}}
\hline\hline
            &\multicolumn{1}{c}{(1)}&\multicolumn{1}{c}{(2)}&\multicolumn{1}{c}{(3)}\\
Coefficient of Salary =           &\multicolumn{1}{c}{$4.129$} &\multicolumn{1}{c}{$ 4.298$} &\multicolumn{1}{c}{$4.121$}\\[1em]
\hline
log(Distance)&      -0.097\sym{***}&      -0.094\sym{***}&      -0.097\sym{***}\\
            &      (0.01)         &      (0.01)         &      (0.01)         \\[1em]
log(Previous Match)&       0.378\sym{***}&       0.363\sym{***}&       0.379\sym{***}\\
            &      (0.06)         &      (0.05)         &      (0.06)         \\[1em]
Affiliation &      -0.104\sym{**} &      -0.102\sym{**} &      -0.106\sym{**} \\
            &      (0.04)         &      (0.04)         &      (0.04)         \\[1em]
University Hospital&       0.003         &       0.006         &       0.003         \\
            &      (0.02)         &      (0.02)         &      (0.02)         \\[1em]
Governmental Hospital&      -0.014         &      -0.015         &      -0.014         \\
            &      (0.01)         &      (0.01)         &      (0.01)         \\[1em]
log(Beds)   &       0.152\sym{***}&       0.148\sym{***}&       0.153\sym{***}\\
            &      (0.02)         &      (0.02)         &      (0.02)         \\[1em]
\hline
\(N\)       &        2627         &        2627         &        2627         \\
Urban $\times$ Year &  & $\surd$ & $\surd$  \\
Tokyo $\times$ Year &  &  & $\surd$  \\
\hline\hline
\end{tabular}
\end{table}

\begin{table}[htbp]\centering
\def\sym#1{\ifmmode^{#1}\else\(^{#1}\)\fi}
\captionsetup{justification=centering}
\caption{Hospital preference parameters (unit: million Yen)\\
Degree of polynomials = 2}
\begin{tabular}{l*{3}{c}}
\hline\hline
            &\multicolumn{1}{c}{(1)}&\multicolumn{1}{c}{(2)}&\multicolumn{1}{c}{(3)}\\
Coefficient of Salary =           &\multicolumn{1}{c}{$1.231$} &\multicolumn{1}{c}{$1.276$} &\multicolumn{1}{c}{$1.234$}\\[1em]
\hline
log(Distance)&      -0.247         &      -0.238         &      -0.247         \\
            &      (0.15)         &      (0.13)         &      (0.15)         \\[1em]
log(Previous Match)&       1.354         &       1.307\sym{*}  &       1.352         \\
            &      (0.79)         &      (0.66)         &      (0.79)         \\[1em]
Affiliation &      -2.296         &      -2.222\sym{*}  &      -2.294         \\
            &      (1.29)         &      (1.07)         &      (1.29)         \\[1em]
Public University&       0.143         &       0.136         &       0.141         \\
            &      (0.09)         &      (0.07)         &      (0.09)         \\[1em]
Prestige    &      -2.537         &      -2.451\sym{*}  &      -2.522         \\
            &      (1.37)         &      (1.15)         &      (1.36)         \\[1em]
\hline
\(N\)       &        2627         &        2627         &        2627         \\
Urban $\times$ Year &  & $\surd$ & $\surd$  \\
Tokyo $\times$ Year &  &  & $\surd$  \\
\hline\hline
\end{tabular}
\end{table}

\newpage
\subsection{Counterfactual simulations for the other years}
\label{sec:appendix_cf_other_years}

\begin{table}[tbp]
\centering
\small
\caption{Comparison between Aggregate-level Equilibria}
\label{tab:welfare_AE}
\begin{threeparttable}
\begin{tabularx}{\textwidth}{l @{} RRR}
\toprule 
Policy & \multicolumn{1}{c}{\textsf{AC (Artificial Caps)}} & \multicolumn{1}{c}{\textsf{NC (No Caps)}} & \multicolumn{1}{c}{\textsf{OS (Optimal Subsidy)}} \\

\midrule
Artificial caps & Yes & No & No \\
Floor constraints & Yes & No & Yes \\
Subsidies & No & No & Yes \\
\midrule

\multicolumn{4}{l}{\hspace{-0.5em}\textbf{2017}} \\
Match rate & 0.868 & 0.912 & 0.912 \\
Doctors' welfare & 32795.8 & 33441.7 & 33444.5 \\
Hospitals' welfare & 29586.6 & 31578.1 & 31579.4 \\
Government's revenue & 0.0 & 0.0 & $[-10.5, -7.4]$ \\
Total welfare & 62382.3 & 65019.8 & $[65013.3, 65016.5]$ \\
\#(subsidized regions) & 0 & 0 & 3 \\
Average subsidy & 0.000 & 0.000 & -0.040 \\
\#(constraint violations) & 0 & 3 & 0 \\
\midrule

\multicolumn{4}{l}{\hspace{-0.5em}\textbf{2018}} \\
Match rate & 0.844 & 0.895 & 0.896 \\
Doctors' welfare & 33928.3 & 34272.6 & 34277.1 \\
Hospitals' welfare & 30390.4 & 33700.2 & 33703.4 \\
Government's revenue & 0.0 & 0.0 & $[-18.9, -13.3]$ \\
Total welfare & 64318.7 & 67972.8 & $[67961.6, 67967.1]$ \\
\#(subsidized regions) & 0 & 0 & 5 \\
Average subsidy & 0.000 & 0.000 & -0.038 \\
\#(constraint violations) & 0 & 5 & 0 \\
\midrule

\multicolumn{4}{l}{\hspace{-0.5em}\textbf{2019}} \\
Match rate & 0.869 & 0.912 & 0.912 \\
Doctors' welfare & 33244.7 & 33752.2 & 33755.5 \\
Hospitals' welfare & 30406.1 & 32648.8 & 32649.1 \\
Government's revenue & 0.0 & 0.0 & $[-9.4, -6.6]$ \\
Total welfare & 63650.8 & 66401.0 & $[66395.2, 66398.0]$ \\
\#(subsidized regions) & 0 & 0 & 2 \\
Average subsidy & 0.000 & 0.000 & -0.042 \\
\#(constraint violations) & 0 & 2 & 0 \\
\bottomrule
\end{tabularx}
\begin{tablenotes}
\footnotesize
\item[*] All values except match rates, \#(subsidized regions), and \#(constraint violations) are expressed in units of 1 million JPY per month. The government's revenue is positive when taxes are imposed on doctors and hospitals and negative when subsidies are provided to them. The welfare of doctors and hospitals is scaled according to specification (1) in \Cref{tab:monetary_u_3} and \Cref{tab:monetary_v_3}. We present the bounds of the government's net revenue, scaled by the coefficients on the doctor side and the hospital side, respectively. The total welfare is the sum of doctors' welfare, hospitals' welfare, and the government's revenue. \#(constraint violations) counts the number of prefectures violating the lower bounds (among the 15 rural regions).
\end{tablenotes}
\end{threeparttable}
\end{table}

\end{document}